\documentclass[a4paper,11pt]{article}
\usepackage[utf8]{inputenc}
\usepackage[english]{babel}
\usepackage[margin=1in]{geometry}
\usepackage{amsmath}
\usepackage{amsfonts}
\usepackage{amssymb}
\usepackage{graphicx}
\usepackage{tikz}
\usepackage{booktabs}
\usepackage{pgfpages}
\usepackage{enumitem}
\usepackage{subfigure}
%\captionsetup[subfigure]{list=true, font=large, labelfont=bf, 
%labelformat=brace, position=top}
\usepackage{todonotes}

\usepackage{amsthm}

\newtheoremstyle{thm}	% name of the style to be used
  {3pt}			% measure of space to leave above the theorem. E.g.: 3pt
  {3pt}			% measure of space to leave below the theorem. E.g.: 3pt
  {\itshape}		% name of font to use in the body of the theorem
  {0pt}			% measure of space to indent
  {\bfseries}		% name of head font
  {.}			% punctuation between head and body
  {6pt}		% space after theorem head; " " = normal interword space
  {}			% Manually specify head

\numberwithin{equation}{section}  
\newtheorem{defn}{Definition}[section]
\newtheorem{exa}[defn]{Example}
\newtheorem{rem}[defn]{Remark}

\newtheorem{prop}[defn]{Proposition}

\newtheorem{cor}[defn]{Corollary}
\newtheorem{lem}[defn]{Lemma}
\newtheorem{asu}[defn]{Assumption}

\newcommand{\D}{{\mathop{}\!\mathrm{d}}} % für Integrad - d's.
\newcommand{\R}{\mathbb{R}}
\newcommand{\Q}{\mathbb{Q}}
\newcommand{\C}{\mathcal{C}}
\newcommand{\Clin}{\mathcal{C}_{\operatorname{lin}}}
\newcommand{\N}{\mathbb{N}}

\newcommand{\PP}{\mathbb{P}}

\newcommand{\E}{\mathbb{E}}

\newcommand{\one}{ 1 \hspace{-3pt} \mathrm{l}} % Indikatorfunktion

\usepackage{hyperref}
\usepackage{xcolor}
\hypersetup{
    colorlinks,
    linkcolor={red!50!black},
    citecolor={blue!50!black},
    urlcolor={blue!80!black}
}

\author{Julian Sester}

%\email{julian.sester@ntu.edu.sg}

%------------------------------------------------------------------------%
%------------------------------------------------------------------------%

\title{On intermediate Marginals in Martingale Optimal Transportation}
\date{
\small\textit{National University of Singapore, Department of Mathematics,\\ 21 Lower Kent Ridge Road, 119077.}                                                                                                                             \\ \vspace{0.5cm}
\normalsize \today }
\begin{document}
\maketitle

\begin{abstract}
We study the influence of additional intermediate marginal distributions on the value of the martingale optimal transport problem. From a financial point of view, this corresponds to taking into account call option prices not only, as usual, for those call options where the respective future maturities coincide with the maturities of some exotic derivative but also additional maturities and then to study the effect on model-independent price bounds for the exotic derivative. We characterize market settings, i.e., combinations of the payoff of exotic derivatives, call option prices and marginal distributions that guarantee improved price bounds as well as those market settings that exclude any improvement.

 Eventually, we showcase in numerous examples that the consideration of additional price information on vanilla options may have a considerable impact on the resultant model-independent price bounds.
\\ \\
\noindent
{\bf Keywords:} 
Martingale optimal transport, additional information, robust price bounds, intermediate marginals
\\
{\bf JEL classification:} 
C61, G11, G13
\end{abstract}

\begin{center}
\end{center}

\section{Introduction}
We consider a single underlying security $S=(S_t)_{t\geq 0}$ and aim at determining model-free price bounds for a financial derivative with an associated measurable payoff function $c:\R^n \rightarrow \R $ that depends on the value $S_{t_i}$ at future maturities $t_i$ with $i=1,\dots,n$. Then, relying on the martingale optimal transport (MOT) approach introduced in \cite{beiglbock2013model}, \cite{beiglboeck2016problem} and \cite{henry2016explicit}, we consider for the valuation of the exotic derivative $c$ all martingale models calibrated to prices of vanilla options (for all strikes) written on $S$ that expire at times $(t_i)_{i=1,\dots,n}$. In this setting, solving MOT problems corresponds to the calculation of the most extreme prices for $c$ among all arbitrage-free models that are calibrated to the observed vanilla option prices. 

The crucial observation which motivates the present paper is that in practice, due to the broad range of offered maturities for traded vanilla options\footnote{Note that in practice the expiration date of an exchange-traded call option is often the third Friday of a month. Very often, for shorter maturities more expiration dates are available, often on a weekly basis.}, it is often possible to calibrate a candidate underlying model to more maturities than those dates on which the payoff of the derivative $c$ depends. Consequently, to determine model-free price bounds one may additionally take into account price information on vanilla options maturing at some times $(s_j)_{j \in J}$ such that $s_j \neq t_i$ for all $i=1,\dots,n$ and for all $j\in J$, where $J$ is some index set associated to the additionally considered maturities.

 Mathematically, when computing price bounds via MOT, this corresponds to restricting the set of admissible joint distributions to those distributions that are consistent with auxiliary one-dimensional marginal distributions (that are implied by call option prices due to a result from \cite{breeden1978prices}) as well as to introduce additional intermediate martingale constraints. This means we restrict the dynamics of admissible underlying processes at more future dates.     

In this paper we pursue the question if this additional calibration procedure leads to restrictions which have an observable influence on the model-independent price bounds of exotic derivative{\color{blue}s}.

Framing an answer to this question is not trivial. Already in the two-marginal case, i.e., $n=2$, it is possible to derive a simple example where the value of the martingale transport problem is influenced by restrictions imposed through an additional third intermediate marginal distribution, compare Example~\ref{exa_theoretical_1}, but it is also possible to construct settings in which the price is not influenced by additional intermediate marginals, see Section~\ref{sec_convex_interpol} and \ref{sec_linear_interpol}.

%Moreover, we establish conditions that exclude varying values of martingale transport problems through the incorporation of intermediate marginal information. Eventually we perform an empirical study to illustrate the improvement of the price bounds which can be explained by the consideration of intermediate marginals.

%\section{Overview on Related Literature}
To study the above outlined practice-implied question, we employ the MOT-approach and its dual formulation as discussed and studied in \cite{backhoff2022stability}, \cite{bayraktar2022supermartingale}, \cite{beiglbock2013model}, \cite{beiglbock2022potential}, \cite{beiglbock2019dual}, \cite{beiglbock2017complete},  \cite{cheridito2021martingale}, \cite{cox2019structure}, \cite{doldi2020entropy}, \cite{ekren2018constrained}, \cite{henry2016explicit},  \cite{henry2016explicit_full}, \cite{hobson2015robust}, \cite{hobson2019left}, \cite{hobson2019robust}, \cite{juillet2016stability}, \cite{wiesel2019continuity} among many others.

In a recent work, \cite{nutz2017multiperiod} study a related problem which can be regarded as the inverse problem of the problem treated in this paper: let $c:\R^n \rightarrow \R$ be an arbitrary measurable payoff function of a financial derivative. Then, the authors address the question how the original MOT-problem is influenced and has to be redefined when only $\mu_1$ and $\mu_n$ are fixed and one has no information on intermediate marginals $\mu_2,\dots,\mu_{n-1}$. Moreover, the payoff function $c$ is allowed to depend on all of the associated times $t_1,t_2,\dots,t_n$, i.e., the authors from \cite{nutz2017multiperiod} study the minimization problem
\[
\inf_{\Q \in \mathcal{M}^{n}(\mu_1,\mu_n)}\E_\Q[c(S_{t_1},\dots,S_{t_n})]
\]
where $\mathcal{M}^{n}(\mu_1,\mu_n)$ denotes the set of martingale measures on $\R^n$ with prescribed first and last marginals $\mu_1,\mu_n$. In contrast, we want to study the minimization problem
\[
\inf_{\Q \in \mathcal{M}(\mu_1,\mu_2,\cdots,\mu_{n-1},\mu_n)}\E_\Q[\tilde{c}(S_{t_1},S_{t_n})]
\]
for some $\tilde{c}:\R^2 \rightarrow \R$.
 In a remark within the book \cite{henry2017modelfree} another version of the problem is discussed. More specifically, in \cite{henry2017modelfree}, the author investigates the problem when in the dual variant of the problem, established in \cite{beiglbock2013model}, trading in European options at intermediate points is prohibited but trading in the underlying security is allowed. We will discuss the result from \cite{henry2017modelfree} within this article, see Lemma~\ref{lem_intermediate_trading}. The authors from \cite{eckstein2021martingale} compare within a sole example the influence of additional marginal information on price bounds with the influence of this information in combination with the homogeneity of an underlying process. Further,~in the recent works \cite{aksamit2020robust}, \cite{ansari2022improved}, \cite{de2020bounds}, \cite{eckstein2021martingale},~\cite{lutkebohmert2019tightening}, \cite{lux2017improved}, \cite{neufeld2022model}, \cite{sester2020robust}, and \cite{tankov2011improved} improvements of robust and model-free price bounds either via the inclusion of prices of other derivatives, market information or via additional assumptions on the underlying process are studied.
  However, to the best of our knowledge, to explicitly study the influence of  intermediate marginals  on model-free price bounds is a novelty. 
  
From our point of view to study the price bound improvement induced by the inclusion of intermediate marginals is of particular importance, as an improvement obtained in this way does not require to expensively acquire information on other types of possibly OTC-traded financial instruments nor to impose additional assumptions on the joint distribution of the underlying process. The approach simply respects all information available to someone who applies the MOT-approach for the valuation of financial derivatives.
 
The contribution of this article is twofold. First, we give a mathematical characterization of the cases in which improved price bounds are guaranteed, and second, we study in several examples, also involving real financial data, the degree of improvement through the inclusion of intermediate marginal information.
 
The remainder of the paper is structured as follows. In Section \ref{sec_mot} we introduce the setting and present our main results that comprise a characterization of cases in which including additional intermediate marginals causes improved price bounds as well as a description of the degree of improvement through dual trading strategies.

In Section~\ref{sec_financial_application} we move away from the idealized setting in which we know the entire marginal distributions and discuss the influence of a finite amount of call option prices at intermediate times on model-free price bounds. 

In Section~\ref{sec_examples} we provide several examples also involving real market data. The proofs of the mathematical statements  can be found in Section~\ref{sec_proofs}.

\section{Setting and Main Results} \label{sec_mot}
In Section~\ref{sec_setting} we present the underlying setting of the paper that is used to derive a characterization for improved price bounds in Section~\ref{sec_main_result}. In Section~\ref{sec_degree_improvement} we discuss the degree of improvement emerging from considering an additional intermediate marginal. Section~\ref{sec_financial_application} provides a discussion of the case in financial markets where only a finite amount of traded call options are taken into account instead of the full marginal information.
\subsection{Setting}\label{sec_setting}
We introduce the setting that we use in this paper. We also refer to \cite[Section 1.1.]{beiglbock2013model} where a similar setting is considered. 

Let $\mathcal{P}(\R^d)$ denote the set of probability measures on $\R^d,d \in \N$ equipped with the Borel $\sigma$-algebra $\mathcal{B}(\R^d)$. Then, for $n \in \N$, we fix marginal distributions $\mu_1,\dots,\mu_n$ with $\mu_i \in \mathcal{P}(\R)$ for $i=1,\dots,n$ and we impose the following assumption ensuring the existence of martingale measures possessing these marginals, see also \cite{kellerer1972markov} and \cite{strassen1965existence}.
\begin{asu}\label{asu_marginals}
The marginals $(\mu_i)_{i=1,\dots,n}$ are assumed to increase in the convex order\footnote{A probability measure $\mu_2\in \mathcal{P}(\R)$ is larger in convex order than $\mu_1\in \mathcal{P}(\R)$, abbreviated by $\mu_1 \preceq \mu_2$, if $\int_\R f(x) \D \mu_1(x) \leq \int_\R f(x) \D \mu_2(x)$ for all convex functions $f:\R \rightarrow \R$ such that the integrals are finite.\\ Moreover, note that if the marginals increase in convex order, then the prices of European call options computed as expectations with respect to these marginals also increase, compare  \cite[Lemma 7.24]{follmer2016stochastic}. That prices of European call options increase with an increasing maturity is well-known in arbitrage free markets, hence the assumption appears to be natural.} $\preceq$. Moreover, the marginals $(\mu_i)_{i=1,\dots,n}$ are assumed to possess finite first moments.
\end{asu}
We denote by $\mathcal{M}(\mu_1,\dots,\mu_n) \subset \mathcal{P}(\R^n)$ the set of all $n$-dimensional martingale measures with fixed one-dimensional marginals $(\mu_i)_{i=1,\dots,n}$, which equivalently can be written as
\begin{align*}
\mathcal{M}(\mu_1,\dots,\mu_n):=\bigg\{\Q \in \mathcal{P}(\R^n)~\bigg|~&\int f(x_i)  \Q(\D x_1,\dots,\D x_n)=\int f(x_i)  \mu_i( \D x_i)~\\
&\int \Delta(x_1,\dots,x_j)(x_{j+1}-x_j) \D  \Q(\D x_1,\dots,\D x_n)=0 \\
&\text{ for all } \Delta \in \mathcal{C}_b(\R^j), f \in L^1(\mu_i),~j = 1,\dots,n-1,~i =1,\dots,n\bigg\},
\end{align*}
where $\mathcal{C}_b(\R^j)$ denotes the set of continuous and bounded functions $f:\R^j \rightarrow \R$ for $j \in \N$, compare also \cite[Lemma 2.3]{beiglbock2013model} for a characterization of $\mathcal{M}(\mu_1,\dots,\mu_n)$.

Following \cite{breeden1978prices}, marginal distributions of an underlying security under pricing measures can be inferred from observed market prices of vanilla call and put options.
Hence, for some measurable function $c:\R^n \rightarrow \R$ that can be interpreted as the payoff of a financial derivative depending on the underlying security at future times $t_1,\dots,t_n$, model-independent price bounds for this derivative can, in absence of interest rates, dividend yields and market frictions, be computed as 
\[
\inf_{\Q \in \mathcal{M}(\mu_1,\dots\mu_n)}\E_\Q[c]\qquad \text{ and } \qquad\sup_{\Q \in \mathcal{M}(\mu_1,\dots\mu_n)}\E_\Q[c].
\]
Our goal is now to study whether including information about additional intermediate marginal distributions will change these model-independent price bounds and whether it is possible to derive conditions that guarantee an improvement.
To reduce the notational complexity, and since the theoretical results turn out to be analogue in higher dimensions (compare Remark~\ref{rem_nmarginals}), we mainly discuss the problem in its easiest and clearest formulation which is present in the three-marginal case and if the payoff function $c$ depends only on the first and third of the three prospective times to which the marginal distributions are associated. 

To this end, let us assume\footnote{This assumption is mainly imposed to be able to apply a duality result (see \cite{beiglbock2013model} or \cite{bartl2019robust}), and could be relaxed to a certain degree, for example by considering semi-continuous payoff functions instead of continuous functions. To avoid the need of distinguishing between lower-semi continuous payoff functions for lower bounds and upper semi-continuous payoff functions for upper bounds, we decided to circumvent this issue by only considering continuous payoff functions. The linear growth condition could be slightly relaxed to functions that do not grow stronger than a sum of $\mu_i$-integrable functions, however it is standard to use the linear growth condition as in the definition of $\mathcal{C}_{\operatorname{lin}}(\R^d)$, see e.g. \cite[Theorem 1.1]{beiglbock2013model}.}, that the payoff function $c$ is continuous and of linear growth, i.e., we assume $c\in \mathcal{C}_{\operatorname{lin}}(\R^2)$, where
\[
\mathcal{C}_{\operatorname{lin}}(\R^d):=\left\{c\in \mathcal{C}(\R^d,\R)~\middle|~\sup_{(x_1,\dots,x_d)\in \R^d} \frac{|c(x_1,\dots,x_d)|}{1+\sum_{i=1}^d|x_i|} < \infty \text{ for all } (x_1,\dots,x_d)\in \R^d\right\} \text{ for } d \in \N.
\]

Then, we address the question for which choices of measurable cost functions $c\in \mathcal{C}_{\operatorname{lin}}(\R^2)$  and for which choices of marginal distributions $\mu_1,\mu_2,\mu_3 \in \mathcal{P}(\R)$ with $\mu_1 \preceq \mu_2 \preceq \mu_3$ it is true that
\begin{equation}\label{eq_equality of problems}
\inf_{\Q \in \mathcal{M}(\mu_1,\mu_2,\mu_3)}\E_\Q[c(S_{t_1},S_{t_3})]=\inf_{\Q \in \mathcal{M}(\mu_1,\mu_3)}\E_\Q[c(S_{t_1},S_{t_3})],
\end{equation}
or in the upper bound formulation
\begin{equation}\label{eq_equality of problems_sup}
\sup_{\Q \in \mathcal{M}(\mu_1,\mu_2,\mu_3)}\E_\Q[c(S_{t_1},S_{t_3})]=\sup_{\Q \in \mathcal{M}(\mu_1,\mu_3)}\E_\Q[c(S_{t_1},S_{t_3})],
\end{equation}
where $S=(S_{t_i})_{i=1,2,3}$ denotes the canonical process on $\R^3$. Equivalently, we study the cases when \eqref{eq_equality of problems} and \eqref{eq_equality of problems_sup} are violated, and therefore improved price bounds can be obtained through the inclusion of additional marginal information. For completeness, details of the general $n$-dimensional versions of the problems \eqref{eq_equality of problems} and \eqref{eq_equality of problems_sup} are discussed in Remark~\ref{rem_n_marginals}.
A financial interpretation of the influence of more intermediate marginals on price bounds corresponds, according to the rationale from \cite{breeden1978prices}, to the question whether incorporating more market-implied price information on liquidly traded vanilla options may influence the model-independent valuation problem for exotic derivatives with payoff $c$, even if the payoff of the derivative does not depend on the value of the underlying security at maturity of the additionally considered vanilla options.

This effect may occur since through the additional calibration to more marginal distributions, the set of possible arbitrage-free market models is reduced by excluding models which are not consistent with the additional marginal information. Moreover, additional martingale constraints taking into account those intermediate times are introduced. Thus, it may happen that  models which lead to extreme prices without considering the intermediate marginal are no more contained in the set of admissible models when taking into account a third marginal.
To study the influence of intermediate marginals on price bounds we are therefore interested in identifying choices of marginal distributions for which the non-empty and compact set\footnote{Compare e.g. \cite[Proposition 2.2.]{neufeld2021stability} for a proof of the compactness with respect to the Wasserstein distance and the non-emptiness, which both are consequences of Berge's maximum theorem.} set of optimal measures for the two-marginal MOT-problem
\[
\mathcal{Q}_c^*(\mu_1,\mu_3):= \left\{\Q^* \in \mathcal{M}(\mu_1,\mu_3)~\middle|~\inf_{\Q\in \mathcal{M}(\mu_1,\mu_3)}\E_\Q[c]=\E_{\Q^*}[c]\right\}
\]
is still \emph{consistent}  with the newly introduced intermediate marginal $\mu_2$. More precisely, consistency means in this context there exists some martingale measure $\Q \in \mathcal{M}(\mu_1,\mu_2,\mu_3)$ such that $\pi(\Q)\in  \mathcal{Q}_c^*(\mu_1,\mu_3)$ for $\pi$ being defined as the projection
\begin{align*}
\pi:\mathcal{M}(\mu_1,\mu_2,\mu_3)\rightarrow \mathcal{M}(\mu_1,\mu_3),~ \pi(\Q)(A_1,A_3):= \Q(A_1,\R,A_3),~\text{ for all }A_1,A_3 \in \mathcal{B}(\R),
\end{align*}
where $\mathcal{B}(\R)$ denotes the Borel-sets on $\R$.

The pricing-hedging duality result for MOT (see \cite{beiglbock2013model},  \cite{beiglbock2017complete}, \cite{cheridito2021martingale}) enables to regard the incorporation of additional intermediate marginals from another point of view: including more intermediate marginals allows for more flexible compositions of the semi-static trading strategies that can be considered for sub-hedging, as one can additionally trade statically in options expiring at an intermediate maturity and readjust the dynamic position in the underlying security at an associated intermediate time. This additional flexibility may allow to increase the maximal sub-replication price. We are therefore interested in studying the maximal improvement that is possible by switching from a two-marginal semi-static trading strategy to a three-marginal semi-static trading strategy\footnote{We refer for more details on semi-static trading strategies and its properties to \cite{acciaio2017space}, \cite{burzoni2017model}, \cite{davis2007range} and \cite{nutz2022limits}.}.

To this end, we define for $u_1 \in L^1(\mu_1)$, $u_3 \in L^1(\mu_3)$, $\Delta_1 \in \mathcal{C}_b(\R)$, $\Delta_2 \in \mathcal{C}_b(\R^2)$ and  $c\in \Clin(\R^2)$ the function
\begin{equation}\label{eq_defn_H}
\begin{aligned}
\R \ni x_2 \mapsto H_{(u_i),(\Delta_i)}(x_2):=\inf_{x_1,x_3 \in \R}\bigg\{ &c(x_1,x_3)-u_1(x_1)-u_3(x_3)\\
&-\Delta_1(x_1)(x_2-x_1)-\Delta_2(x_1,x_2)(x_3-x_2)\bigg\} 
\end{aligned}
\end{equation}
that describes the \emph{pointwise gap} between the payoff of $c$ and a model-free sub-hedging strategy with a payoff of the form $u_1(x_1)+u_3(x_3)+\Delta_1(x_1)(x_2-x_1)+\Delta_2(x_1,x_2)(x_3-x_2)$, i.e., with static trading in options expiring at times $t_1$, $t_3$ and with dynamic trading in the underlying security at times $t_1$, $t_2$, and $t_3$. An immediate consequence of the definition of $H_{(u_i),(\Delta_i)}$ is the following Lemma.
\begin{lem}\label{lem_positivity_of_H}
Let $u_1 \in L^1(\mu_1)$, $u_3 \in L^1(\mu_3)$, $\Delta_1 \in \mathcal{C}_b(\R)$, $\Delta_2 \in \mathcal{C}_b(\R^2)$ and $c\in \Clin(\R^2)$.
\begin{itemize}
\item[(i)]
We have that $H_{(u_i),(\Delta_i)}(x_2) \geq 0$ for all $x_2 \in \R$ is equivalent to 
\[
u_1(x_1)+u_3(x_3)+\Delta_1(x_1)(x_2-x_1)+\Delta_2(x_1,x_2)(x_3-x_2) \leq c(x_1,x_3)
\]
for all $x_1,x_2,x_3 \in \R$.
\item[(ii)]
We have 
\[
u_1(x_1)+H_{(u_i),(\Delta_i)}(x_2)+u_3(x_3)+\Delta_1(x_1)(x_2-x_1)+\Delta_2(x_1,x_2)(x_3-x_2) \leq c(x_1,x_3)
\]
for all $x_1,x_2,x_3 \in \R$.
\end{itemize}
\end{lem}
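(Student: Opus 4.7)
The statement is essentially a direct reformulation of the definition of $H_{(u_i),(\Delta_i)}$ as a pointwise infimum, so the proof should be short and I do not anticipate a real obstacle; the only care needed is to keep track of which variables are free and which are bound by the infimum. My plan is to unpack the definition in each part separately.

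For part (i), I would prove the two implications independently. In the forward direction, assuming $H_{(u_i),(\Delta_i)}(x_2)\geq 0$ for every $x_2\in\R$, the defining property of the infimum guarantees that every element of the set over which the infimum is taken is bounded below by $0$, which upon rearrangement is exactly the displayed sub-hedging inequality. Conversely, if that inequality holds for all $(x_1,x_2,x_3)\in\R^3$, then for each fixed $x_2$ every element of the infimand set is nonnegative, so the infimum over $(x_1,x_3)$ is nonnegative as well. These two steps together yield the biconditional.

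Part (ii) then follows immediately from the defining property of the infimum: for any fixed $x_2 \in \R$ and any $x_1, x_3 \in \R$, the value $H_{(u_i),(\Delta_i)}(x_2)$ is a lower bound for $c(x_1,x_3)-u_1(x_1)-u_3(x_3)-\Delta_1(x_1)(x_2-x_1)-\Delta_2(x_1,x_2)(x_3-x_2)$, and rearranging this inequality yields the displayed claim. No measurability or integrability issue intervenes because the statement is purely pointwise and each of $u_1,u_3,\Delta_1,\Delta_2$ is evaluated only at an individual argument.
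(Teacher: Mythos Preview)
Your proposal is correct and matches the paper's approach: the paper simply states that the lemma follows directly from the definition of $H_{(u_i),(\Delta_i)}$, and your unpacking of the infimum is exactly the elementary argument behind that one-line justification.
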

This means, by Lemma~\ref{lem_positivity_of_H}~(i), if $H_{(u_i),(\Delta_i)}(x_2) \geq 0$ for all $x_2 \in \R$, then the considered semi-static trading strategy sub-replicates $c$ pointwise. Moreover, by Lemma~\ref{lem_positivity_of_H}~(ii), adding a static trading position with payoff $H_{(u_i),(\Delta_i)}(S_{t_2})$ to the semi-static strategy $(u_i),(\Delta_j)$ still sub-replicates the derivative $c$. Since this observation remains true for those sub-replication strategies that lead to a maximal price for the two-marginal sub-hedging problem, we are interested to study whether an integration of $H_{(u_i),(\Delta_i)}$ with respect to the intermediate marginal $\mu_2$ leads to a higher price, and therefore to improved price bounds when considering optimal sub-hedging strategies. Note that in the following we often abbreviate pointwise inequalities of the form $f(x) \geq g(x)$ for all $x \in \R^m$, for some $m \in \N$, by writing $f \geq g$.

\subsection{A Characterization of Improved Price Bounds}\label{sec_main_result}
Below we characterize the equality described in \eqref{eq_equality of problems}. As we will point out in Remark~\ref{rem_upper_bound} and \ref{rem_nmarginals}, similar assertions can also be derived for the upper bound problem and for the case with $n$ marginals, where $n >2$. Note that the subsequent proposition allows to identify the cases in which improved price bounds through the inclusion of intermediate marginals can be excluded. Thus, simultaneously the cases where improvement can be guaranteed are characterized.

\begin{prop}\label{prop_intermediate_projection}
Let $c \in \mathcal{C}_{\operatorname{lin}}(\R^2)$ and assume that Assumption~\ref{asu_marginals} is fulfilled for the marginal distributions $\mu_1,\mu_2,\mu_3 \in \mathcal{P}(\R)$. Then, the following statements are equivalent.
\begin{itemize}
\item[(i)] We have
\[
\inf_{\Q \in \mathcal{M}(\mu_1,\mu_2,\mu_3)}\E_\Q[c(S_{t_1},S_{t_3})]=\inf_{\Q \in \mathcal{M}(\mu_1,\mu_3)}\E_\Q[c(S_{t_1},S_{t_3})].
\]
\item[(ii)]
There exists $u_1,u_3 \in \C_b(\R)$, $\Delta_1 \in \mathcal{C}_b(\R)$, $\Delta_2 \in \mathcal{C}_b(\R^2)$ with $H_{(u_i),(\Delta_i)}\geq 0$ such that for all $v_1,v_3 \in C_b(\R)$ and all $\widetilde{\Delta}_1 \in \mathcal{C}_b(\R)$, $\widetilde{\Delta}_2 \in \mathcal{C}_b(\R^2)$ we have
\[
\E_{\mu_2} \left[H_{(u_i-v_i),(\Delta_i-\widetilde{\Delta}_i)}\right]- \E_{\mu_1}[v_1]- \E_{\mu_3}[v_3] \leq   0.
\]
\item[(iii)]
For all $\varepsilon>0$ there exist $u_1 \in \mathcal{C}_b(\R),u_3 \in \mathcal{C}_b(\R)$, $\Delta_1 \in \mathcal{C}_b(\R)$ and $\Delta_2 \in \mathcal{C}_b(\R^2)$ with 
\begin{equation}\label{eq_defn_H}
\begin{aligned}
H_{(u_i),(\Delta_i)}(x_2) \geq 0 \text{ for all }x_2 \in \R,
\end{aligned}
\end{equation}
such that
\begin{equation}\label{eq_ineq_requirement_H_smaller_eps}
\E_{\mu_2}[H_{(u_i),(\Delta_i)}] < \varepsilon,
\end{equation}
and such that 
\begin{equation}\label{eq_ineq_requirement_lemma_ii}
\left|\E_{\mu_1}[u_1]+\E_{\mu_2}[H_{(u_i),(\Delta_i)}]+\E_{\mu_3}[u_3]-\inf_{\Q \in \mathcal{M}(\mu_1,\mu_2,\mu_3)}\E_\Q[c(S_{t_1},S_{t_3})]\right|<\varepsilon.
\end{equation}
\item[(iv)]
For all $\varepsilon>0$ there exist $u_1,u_3 \in \mathcal{C}_b(\R)$, $\Delta_1 \in \mathcal{C}_b(\R)$ such that
\begin{equation}\label{eq_ineq_requirement_lemma_iii}
u_1(x_1)+u_3(x_3)+\Delta_1(x_1)(x_3-x_1) \leq c(x_1,x_3) \text{ for all } x_1,x_3 \in \R,
\end{equation}
and such that 
\begin{equation}\label{eq_ineq_requirement_lemma_iii_eps}
\left|\E_{\mu_1}[u_1]+\E_{\mu_3}[u_3]-\inf_{\Q \in \mathcal{M}(\mu_1,\mu_2,\mu_3)}\E_\Q[c(S_{t_1},S_{t_3})]\right|<\varepsilon.
\end{equation}
\item[(v)]
For all $\varepsilon>0$ and for all $u_1 , u_2, u_3 \in \C_b(\R)$, $\Delta_1 \in \mathcal{C}_b(\R)$, $\Delta_2 \in \mathcal{C}_b(\R^2)$ with 
\begin{equation}\label{eq_ineq_3marginals}
\sum_{i=1}^3u_i(x_i)+{\Delta_1}(x_1)(x_2-x_1)+{\Delta_2}(x_1,x_2)(x_3-x_2) \leq c(x_1,x_3) \text{ for all } x_1,x_2,x_3 \in \R,
\end{equation}
there exist $v_1,v_3 \in \C_b(\R)$, $\widetilde{\Delta}_1 \in \mathcal{C}_b(\R)$ with
\begin{equation}\label{eq_ineq_2marginals}
v_1(x_1)+v_3(x_3)+\widetilde{\Delta}_1(x_1)(x_3-x_1) \leq c(x_1,x_3) \text{ for all } x_1,x_3 \in \R,
\end{equation}
such that 
\[
\left|\E_{\mu_1}[v_1]+\E_{\mu_3}[v_3]-\left(\sum_{i=1}^3\E_{\mu_i}[u_i]\right)\right|<\varepsilon.
\]
\item[(vi)] There exists some $\Q \in \mathcal{M}(\mu_1,\mu_2,\mu_3)$ such that $\pi(\Q) \in \mathcal{Q}_c^*(\mu_1,\mu_3)$.
\item[(vii)] There exists some $\Q \in \mathcal{M}(\mu_1,\mu_2,\mu_3)$ such that 
\[
\E_\Q[c(S_{t_1},S_{t_3})]=\inf_{\Q \in \mathcal{M}(\mu_1,\mu_3)}\E_\Q[c(S_{t_1},S_{t_3})].
\]
\item[(viii)]
There exist {probability} kernels\footnote{A {probability} kernel is a map where for fixed first component the map is a probability measure, and for fixed second component the map is Borel-measurable.} $\Q_{1}:\R\times \mathcal{B}(\R) \ni (x_1,A) \mapsto \Q_1(x_1;A) \in [0,1]$, $\Q_{1,2}:\R^2\times \mathcal{B}(\R)\ni ((x_1,x_2),A) \mapsto \Q_{1,2}(x_1,x_2;A) \in [0,1]$ such that 
\begin{align*}
\mu_2 &= \int_\R \int_{\cdot} \Q_1(x_1;\D x_2) \mu_1(\D x_1),\\
\mu_3 &= \int_\R \int_\R \int_{\cdot} \Q_{1,2}(x_1,x_2;\D x_3)\Q_1(x_1;\D x_2) \mu_1(\D x_1), \\
x_2 &= \int_{\R} x_3 ~\Q_{1,2}(x_1,x_2;\D x_3)  ~ \text{ for all }x_1,x_2 \in \R,\\
x_1 &= \int_{\R} x_2 ~\Q_1(x_1;\D x_2) ~\text{ for all } x_1 \in \R, \\
\end{align*}
and such that 
\[\int_{\cdot} \int_{\R}\int_{\cdot} \Q_{1,2}(x_1,x_2;\D x_3)\Q_1(x_1;\D x_2) \mu_1(\D x_1) \in \mathcal{Q}_c^*(\mu_1,\mu_3).
\]
\end{itemize}
\end{prop}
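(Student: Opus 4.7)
The plan is to use the MOT pricing--hedging duality from \cite{beiglbock2013model,beiglbock2017complete} together with compactness of the feasible sets as the two main engines, and to organize the eight conditions into three clusters connected through (i). Write $V_2:=\inf_{\Q\in\mathcal{M}(\mu_1,\mu_3)}\E_\Q[c]$ and $V_3:=\inf_{\Q\in\mathcal{M}(\mu_1,\mu_2,\mu_3)}\E_\Q[c]$. Because $c$ depends only on $(x_1,x_3)$ and $\pi$ maps $\mathcal{M}(\mu_1,\mu_2,\mu_3)$ into $\mathcal{M}(\mu_1,\mu_3)$ with $\E_\Q[c]=\E_{\pi(\Q)}[c]$, one always has $V_2\le V_3$, and (i) is the equality $V_2=V_3$. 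By the compactness footnote both infima are attained.

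First I would handle the measure-theoretic cluster (i)$\Leftrightarrow$(vii)$\Leftrightarrow$(vi)$\Leftrightarrow$(viii). (i)$\Leftrightarrow$(vii) is immediate from attainment together with $V_2\le V_3$: any 3-marginal minimizer has $c$-integral $V_3$, and this equals $V_2$ exactly when (i) holds. (vii)$\Leftrightarrow$(vi) is an unfolding of $\mathcal{Q}_c^*$ using $\E_\Q[c]=\E_{\pi(\Q)}[c]$. (vi)$\Leftrightarrow$(viii) is the standard disintegration of $\Q\in\mathcal{M}(\mu_1,\mu_2,\mu_3)$ into kernels $(\mu_1,\Q_1(x_1;\cdot),\Q_{1,2}(x_1,x_2;\cdot))$: the four displayed identities in (viii) rewrite the $\mu_2$-marginal, $\mu_3$-marginal, and the two martingale conditions line by line, and the final display is exactly the projection membership in $\mathcal{Q}_c^*$.

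Next comes the dual-hedging cluster. (i)$\Leftrightarrow$(iv) is a restatement of two-marginal duality: $V_2$ is the supremum over 2-marginal sub-hedges of $\E_{\mu_1}[u_1]+\E_{\mu_3}[u_3]$, so the existence of $\varepsilon$-approximations of $V_3$ by such sub-hedges is equivalent to $V_2=V_3$. For (iv)$\Rightarrow$(iii), given a 2-marginal sub-hedge $(u_1,u_3,\Delta_1)$ I set $\Delta_2(x_1,x_2):=\Delta_1(x_1)$; by the telescoping $\Delta_1(x_2-x_1)+\Delta_1(x_3-x_2)=\Delta_1(x_3-x_1)$ the function $H_{(u_i),(\Delta_i)}$ is independent of $x_2$ and nonnegative, and three-marginal weak duality applied to the enlarged sub-hedge $(u_1,H,u_3,\Delta_1,\Delta_2)$ bounds $H\le V_3-(\E_{\mu_1}[u_1]+\E_{\mu_3}[u_3])<\varepsilon$. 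For (iii)$\Rightarrow$(iv), evaluating $H_{(u_i),(\Delta_i)}\ge 0$ at $x_2=x_1$ and using Lemma~\ref{lem_positivity_of_H}(i) gives the 2-marginal sub-hedge $(u_1,u_3,\tilde\Delta)$ with $\tilde\Delta(x_1):=\Delta_2(x_1,x_1)\in\mathcal{C}_b(\R)$, whose value differs from $V_3$ by at most $2\varepsilon$ by (iii). For (i)$\Leftrightarrow$(v) one proceeds analogously: given any 3-marginal sub-hedge of value $V\le V_3=V_2$, pick a $\varepsilon/2$-optimal 2-marginal sub-hedge and shift $v_1$ by an additive constant to force the value to equal $V$; conversely apply (v) to an $\varepsilon$-optimal 3-marginal sub-hedge.

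The main obstacle is (i)$\Leftrightarrow$(ii), because (ii) is an \emph{exact} (not approximate) optimality statement. For (ii)$\Rightarrow$(i), choosing $v_i=0,\tilde\Delta_i=0$ gives $\E_{\mu_2}[H_{(u_i),(\Delta_i)}]\le 0$, and combined with $H\ge 0$ forces $H=0$ $\mu_2$-a.s. Substituting $(w,\tilde w):=(u-v,\Delta-\tilde\Delta)$, the condition rewrites as $\E_{\mu_2}[H_{(w),(\tilde w)}]+\E_{\mu_1}[w_1]+\E_{\mu_3}[w_3]\le \E_{\mu_1}[u_1]+\E_{\mu_3}[u_3]$ for all $(w,\tilde w)$; the left-hand side is the value of a 3-marginal sub-hedge with $u_2:=H_{(w),(\tilde w)}$ (admissible by the very definition of $H$), and its supremum over $(w,\tilde w)$ equals $V_3$ by 3-marginal duality. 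Since the original $(u,\Delta)$ with $H\ge 0$ and $u_2=0$ yields $\E_{\mu_1}[u_1]+\E_{\mu_3}[u_3]\le V_3$, equality holds, and setting $x_2=x_1$ in $H\ge 0$ then produces a 2-marginal sub-hedge of value $V_3$, so $V_2\ge V_3$ and (i) follows. For (i)$\Rightarrow$(ii), I would use attainment of the 2-marginal MOT dual (either via an existing attainment result or by a regularization/compactness argument from the $\varepsilon$-optimizers produced in~(iv)) to obtain $(u_1,u_3,\Delta)\in\mathcal{C}_b(\R)^3$ of value $V_2=V_3$, then set $\Delta_1:=\Delta$, $\Delta_2(x_1,x_2):=\Delta(x_1)$ and verify the optimality condition via the same sup identification. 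Securing dual attainment in $\mathcal{C}_b$ under the linear-growth hypothesis on $c$ is the main technical hurdle.
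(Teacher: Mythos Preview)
Your overall organization into a measure-theoretic cluster and a dual-hedging cluster mirrors the paper's, and the individual arguments largely coincide: the block (i)$\Leftrightarrow$(vi)$\Leftrightarrow$(vii)$\Leftrightarrow$(viii) via primal attainment, $\E_\Q[c]=\E_{\pi(\Q)}[c]$, and disintegration is exactly the paper's; for (iv)$\Rightarrow$(iii) the paper also sets $\Delta_2\equiv\Delta_1$ so that the telescoping makes $H$ constant; for (iii)$\Rightarrow$(iv) the paper specializes $x_2=x_3$ (yielding $\tilde\Delta=\Delta_1$) rather than your $x_2=x_1$ (yielding $\tilde\Delta(x_1)=\Delta_2(x_1,x_1)$), but either choice works; and (i)$\Leftrightarrow$(v) is handled by the same shift-by-a-constant trick in the strict case and an $\varepsilon$-optimizer in the boundary case.

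The only substantive difference is (i)$\Leftrightarrow$(ii). The paper does not argue this directly: it first proves Proposition~\ref{prop_degree_of_improvement}, the identity
\[
V_3-V_2=\inf_{(u,\Delta):\,H_{(u),(\Delta)}\ge 0}\ \sup_{(v,\tilde\Delta)}\Bigl(\E_{\mu_2}\bigl[H_{(u-v),(\Delta-\tilde\Delta)}\bigr]-\E_{\mu_1}[v_1]-\E_{\mu_3}[v_3]\Bigr),
\]
via Lemma~\ref{lem_equality_f_dualities_H} (which rewrites $V_3$ as $\sup_{(w,\tilde\Delta)}\{\E_{\mu_1}[w_1]+\E_{\mu_2}[H_{(w),(\tilde\Delta)}]+\E_{\mu_3}[w_3]\}$), and then declares (i)$\Leftrightarrow$(ii) to follow. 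Your direct proof of (ii)$\Rightarrow$(i) by the substitution $w=u-v$ is exactly this route unwound and is correct. For (i)$\Rightarrow$(ii) your diagnosis is accurate: the inner supremum equals $V_3-\E_{\mu_1}[u_1]-\E_{\mu_3}[u_3]$ for every feasible $(u,\Delta)$, so under $V_2=V_3$ condition (ii) is \emph{precisely} attainment of the two-marginal dual supremum in $\mathcal{C}_b$. The paper's one-line appeal to Proposition~\ref{prop_degree_of_improvement} does not supply this attainment any more than your sketch does, so the ``main technical hurdle'' you identify is real and is glossed over in the paper as well.
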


%\begin{lem}\label{lem_F_set_compact}
%The set 
%\[
%\left\{F(\Q)~\middle|~\Q\in \mathcal{M}(\mu_1,\mu_2,\mu_3)\right\} \subset \mathcal{M}(\mu_1,\mu_3)
%\]
%is compact in the weak topology.
%\end{lem}

Note that if we consider the negation of the assertion of Proposition~\ref{prop_intermediate_projection}~(i), then improved price bounds can be explained from two different points of view.
While sub-points (ii)--(v) from Proposition~\ref{prop_intermediate_projection} allow to understand the improvement to be induced by the additional flexibility with respect to the semi-static hedging strategies, as discussed at the end of Section~\ref{sec_setting}, sub-points (vi)--(viii) explain the improvement by the exclusion of martingale models that were optimal in the two-marginal case and which are no more consistent with the new set of martingale measures.
Proposition~\ref{prop_intermediate_projection} makes it apparent that indeed both effects occur simultaneously.

{ With the following result we further characterize the set of intermediate marginals preventing improved price bounds.
\begin{prop}\label{prop_characterization_mu_2}
Let $c \in \mathcal{C}_{\operatorname{lin}}(\R^2)$ and assume that Assumption~\ref{asu_marginals} is fulfilled for the marginal distributions $\mu_1,\mu_3 \in \mathcal{P}(\R)$. Consider the set 
\[
\mathcal{I}:= \left\{\mu_2 \in \mathcal{P}(\R)~\middle|~ \mu_1 \preceq \mu_2 \preceq \mu_3\text{ and } \inf_{\Q \in \mathcal{M}(\mu_1,\mu_2,\mu_3)}\E_\Q[c(S_{t_1},S_{t_3})]=\inf_{\Q \in \mathcal{M}(\mu_1,\mu_3)}\E_\Q[c(S_{t_1},S_{t_3})]\right\}.
\]
Then, the following holds true.
\begin{itemize}
\item[(i)] The set $\mathcal{I}$ is closed in the space 
$
\left\{\mu_2 \in \mathcal{P}(\R)~\middle|~ \mu_1 \preceq \mu_2 \preceq \mu_3\right\}
$
equipped with the topology induced by the Wasserstein-distance\footnote{{ For any $\mu,\nu \in \mathcal{P}(\R^d)$ the Wasserstein distance of order $1$ (or Wasserstein $1$-distance) is defined as 
$
W_1(\mu,\nu):=\inf_{\PP \in \Pi(\mu,\nu)}\int_{\R^d \times \R^d} \|x-y\| ~\PP(\D x, \D y),
$
where $\|\cdot\|$ denotes the Euclidean norm on $\R^d$, and where $\Pi(\mu,\nu)$ denotes the set of joint distributions of $\mu$ and $\nu$, compare also for example \cite[Definition 6.1.]{villani2009optimal}. }} of order $1$.
\item[(ii)] The set $\mathcal{I}$ is convex.
\end{itemize}
\end{prop}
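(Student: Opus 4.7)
The plan is to prove both parts by invoking characterisation (vii) of Proposition~\ref{prop_intermediate_projection}, which rephrases $\mu_2 \in \mathcal{I}$ as the existence of some $\Q \in \mathcal{M}(\mu_1,\mu_2,\mu_3)$ realising the two-marginal infimum $v^\ast := \inf_{\Q \in \mathcal{M}(\mu_1,\mu_3)}\E_\Q[c(S_{t_1},S_{t_3})]$, a quantity that does not depend on $\mu_2$. Since (vii) reduces membership in $\mathcal{I}$ to the existence of a single well-chosen martingale measure, both convexity and closedness become tractable via elementary operations on such realising measures.

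For part (ii), given $\mu_2,\widetilde{\mu}_2 \in \mathcal{I}$ and $\lambda \in [0,1]$, I would use (vii) to pick $\Q \in \mathcal{M}(\mu_1,\mu_2,\mu_3)$ and $\widetilde{\Q}\in\mathcal{M}(\mu_1,\widetilde{\mu}_2,\mu_3)$ with $\E_\Q[c(S_{t_1},S_{t_3})] = \E_{\widetilde{\Q}}[c(S_{t_1},S_{t_3})] = v^\ast$, and then form $\Q^\lambda:= \lambda \Q + (1-\lambda)\widetilde{\Q}$. By linearity, $\Q^\lambda$ has first marginal $\mu_1$, intermediate marginal $\mu_2^\lambda := \lambda \mu_2 + (1-\lambda)\widetilde{\mu}_2$, and third marginal $\mu_3$, the martingale constraints survive the convex combination, and $\E_{\Q^\lambda}[c(S_{t_1},S_{t_3})] = v^\ast$. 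Integrating convex test functions against $\mu_2^\lambda$ also yields $\mu_1 \preceq \mu_2^\lambda \preceq \mu_3$, so (vii) applied once more delivers $\mu_2^\lambda \in \mathcal{I}$.

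For part (i), I would take a sequence $(\mu_2^{(n)})_n \subset \mathcal{I}$ converging to $\mu_2$ in $W_1$, and select via (vii) associated $\Q_n \in \mathcal{M}(\mu_1,\mu_2^{(n)},\mu_3)$ with $\E_{\Q_n}[c(S_{t_1},S_{t_3})]=v^\ast$. Since the first and third marginals of $\Q_n$ are constant while the second is $W_1$-convergent, hence tight with uniformly integrable first moment, the family $\{\Q_n\}$ is relatively compact in the $W_1$-topology on $\mathcal{P}(\R^3)$. Along a convergent subsequence $\Q_{n_k}\to \Q$, the marginals of $\Q$ are the correct ones by weak convergence, and each martingale constraint $\int \Delta(x_1,\dots,x_j)(x_{j+1}-x_j)\,\D\Q=0$ passes to the limit because $\Delta \in \mathcal{C}_b$ and $(x_{j+1}-x_j)$ is of linear growth, precisely the class of integrands against which $W_1$-convergence preserves integrals. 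The same observation applied to $c \in \Clin(\R^2)$ gives $\E_\Q[c(S_{t_1},S_{t_3})]=v^\ast$, and (vii) then concludes that $\mu_2 \in \mathcal{I}$.

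The main obstacle I foresee is precisely the justification that the martingale constraints and the expectation of $c$ pass to the limit in (i). This is where the choice of the Wasserstein-1 topology rather than purely weak convergence becomes essential, since the factor $x_{j+1}-x_j$ in the martingale condition and the linear growth of $c$ are both incompatible with mere weak convergence; I would close this gap either by invoking standard MOT stability results such as those in \cite{backhoff2022stability} or \cite{neufeld2021stability}, or directly by using the well-known characterisation of $W_1$-convergence as weak convergence together with convergence of first absolute moments.
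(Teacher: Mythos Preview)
Your proposal is correct and follows essentially the same route as the paper: both parts rely on the characterisation from Proposition~\ref{prop_intermediate_projection} (the paper phrases it via (vi), you via the equivalent (vii)), with convexity handled by convex-combining two realising measures and closedness by extracting a $W_1$-convergent subsequence of the three-dimensional martingale measures and passing marginals, martingale constraints, and $\E[c]$ to the limit. Your organisation of part (i) is in fact slightly leaner than the paper's, which first extracts a subsequence along which the projections $\pi(\Q^{(n_k)})$ converge in the compact set $\mathcal{Q}_c^*(\mu_1,\mu_3)$ before taking a further weakly convergent subsequence of the full measures and upgrading to $W_1$; you bypass this detour by arguing $W_1$-relative compactness of $\{\Q_n\}$ directly from the (fixed or $W_1$-convergent) marginals, which is perfectly valid.
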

\begin{rem}
 As a consequence of Proposition~\ref{prop_characterization_mu_2}~(i), the set 
\[
\left\{\mu_2 \in \mathcal{P}(\R)~\middle|~ \mu_1 \preceq \mu_2 \preceq \mu_3\text{ and } \inf_{\Q \in \mathcal{M}(\mu_1,\mu_2,\mu_3)}\E_\Q[c(S_{t_1},S_{t_3})]>\inf_{\Q \in \mathcal{M}(\mu_1,\mu_3)}\E_\Q[c(S_{t_1},S_{t_3})]\right\}
\]
is open in the space 
$
\left\{\mu_2 \in \mathcal{P}(\R)~\middle|~ \mu_1 \preceq \mu_2 \preceq \mu_3\right\}
$
equipped with the topology induced by the Wasserstein-distance of order $1$, i.e., 
 if an intermediate marginal leads to improved price bounds, then pertubating the marginal slightly w.r.t.\,the Wasserstein-distance while maintaining the convex order still leads to improved price bounds
 \end{rem}
}

\subsection{Degree of Improvement}\label{sec_degree_improvement}

It turns out that the function $H_{(u_i),(\Delta_i)}$, defined in \eqref{eq_defn_H}, is crucial to describe the degree of improvement emerging through the consideration of an additional intermediate marginal. The function $H_{(u_i),(\Delta_i)}$ can be understood as the \emph{sub-hedging error} of the semi-static strategy $(u_i)_{i=1,3},(\Delta_i)_{i=1,2}$, where by Lemma~\ref{lem_positivity_of_H} a pointwise positive value of $H_{(u_i),(\Delta_i)}$ indicates that the strategy is indeed a model-free sub-hedging strategy of $c$.

First we observe that by using $H_{(u_i),(\Delta_i)}$ we obtain an alternative representation of the optimal sub-replication strategies as follows.

\begin{lem}\label{lem_equality_f_dualities_H}
Let $c \in \Clin(\R^2)$ and assume that Assumption~\ref{asu_marginals} is fulfilled for $\mu_1,\mu_2,\mu_3 \in \mathcal{P}(\R)$. Then, we have
\begin{equation}
\label{eq_lem_equality_f_dualities_H_0}
\begin{aligned}
&\sup_{u_i \in L^1(\mu_i),\Delta_i \in \mathcal{C}_b(\R^i)}\bigg\{\sum_{i=1}^3\E_{\mu_i}[u_i]~\bigg|~\sum_{i=1}^3 u_i(x_i)+\Delta_1(x_1)(x_{2}-x_1)+\Delta_2(x_1,x_3)(x_{3}-x_2) \\
&\hspace{8cm}\leq c(x_1,x_3)~\text{ for all } (x_1,x_2,x_3) \in \R^3\bigg\}  \\
=&\sup_{u_1, u_3\in  \mathcal{C}_b(\R)\atop \Delta_1 \in \mathcal{C}_b(\R),\Delta_2 \in \mathcal{C}_b(\R^2)}\bigg\{\E_{\mu_1}[u_1]+\E_{\mu_2}\left[H_{(u_i),(\Delta_i)}\right]+\E_{\mu_3}[u_3]~\bigg\}.
\end{aligned}
\end{equation}
\end{lem}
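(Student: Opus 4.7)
The plan is to establish both inequalities between the two suprema in \eqref{eq_lem_equality_f_dualities_H_0} by directly identifying admissible dual variables on each side, using Lemma~\ref{lem_positivity_of_H}(ii) as the bridge. For the direction \emph{right-hand side $\leq$ left-hand side}, I would take any quadruple $(u_1, u_3, \Delta_1, \Delta_2)$ with $u_1, u_3 \in \mathcal{C}_b(\R)$, $\Delta_1 \in \mathcal{C}_b(\R)$, $\Delta_2 \in \mathcal{C}_b(\R^2)$ and set $u_2 := H_{(u_i),(\Delta_i)}$. Lemma~\ref{lem_positivity_of_H}(ii) is precisely the statement that $(u_1, u_2, u_3, \Delta_1, \Delta_2)$ satisfies the pointwise sub-hedging inequality of the left-hand side, and, provided $u_2 \in L^1(\mu_2)$, the corresponding left-hand value equals $\E_{\mu_1}[u_1] + \E_{\mu_2}[H_{(u_i),(\Delta_i)}] + \E_{\mu_3}[u_3]$, which is the right-hand value attained by this candidate. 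Candidates for which $H_{(u_i),(\Delta_i)}$ fails to be $\mu_2$-integrable contribute $-\infty$ to the right-hand supremum and can be discarded.

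For the reverse inequality I would first invoke the standard martingale optimal transport pricing--hedging duality (\cite[Theorem 1.1]{beiglbock2013model} or \cite{bartl2019robust}) to reduce, without loss of value, the left-hand supremum to admissible quintuples whose $u_i$ all lie in $\mathcal{C}_b(\R)$. Rearranging the constraint,
\[
u_2(x_2) \leq c(x_1,x_3) - u_1(x_1) - u_3(x_3) - \Delta_1(x_1)(x_2-x_1) - \Delta_2(x_1,x_2)(x_3-x_2)
\]
which holds for all $x_1, x_3 \in \R$, and taking the infimum over $(x_1,x_3)$, yields the pointwise bound $u_2 \leq H_{(u_i),(\Delta_i)}$ on $\R$. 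Integrating against $\mu_2$ gives $\E_{\mu_2}[u_2] \leq \E_{\mu_2}[H_{(u_i),(\Delta_i)}]$, so $\sum_{i=1}^{3}\E_{\mu_i}[u_i]$ is bounded above by the right-hand-side expression evaluated at $(u_1, u_3, \Delta_1, \Delta_2)$, and passing to the supremum completes the argument.

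The main technical point is the integrability of $H_{(u_i),(\Delta_i)}$ with respect to $\mu_2$, because $H$ is only upper semicontinuous as an infimum of continuous functions. Evaluating the defining infimum at a fixed pair, say $(x_1, x_3) = (0,0)$, and using the linear growth of $c$ together with the boundedness of $u_1, u_3, \Delta_1, \Delta_2$ delivers a pointwise bound of the form $H_{(u_i),(\Delta_i)}(x_2) \leq A + B|x_2|$; combined with the finite first moment of $\mu_2$ from Assumption~\ref{asu_marginals} this yields $H^+_{(u_i),(\Delta_i)} \in L^1(\mu_2)$, which suffices to make every integral that appears in the argument well defined in $[-\infty,\infty)$. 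A secondary, essentially notational, obstacle is the reduction from $L^1(\mu_i)$ to $\mathcal{C}_b(\R)$ candidates on the left-hand side; this is a routine consequence of the cited duality results and does not require separate work here.
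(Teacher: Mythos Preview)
Your proposal is correct and follows essentially the same route as the paper: both directions are obtained by setting $u_2:=H_{(u_i),(\Delta_i)}$ via Lemma~\ref{lem_positivity_of_H}(ii) for one inequality, and by the pointwise bound $u_2\le H_{(u_i),(\Delta_i)}$ after a reduction to $\mathcal{C}_b$ integrands for the other. The paper packages the integrability step into a separate lemma (Lemma~\ref{lem_contiinuity_H}) and invokes the approximation argument from the appendix of \cite{beiglbock2013model} to pass from $L^1(\mu_i)$ to $\mathcal{C}_b(\R)$, exactly as you do by citing the duality theorem directly.
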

The following proposition captures the degree of improvement of $\inf_{\Q \in \mathcal{M}(\mu_1,\mu_2,\mu_3)}\E_\Q[c(S_{t_1},S_{t_3})]$ over $\inf_{\Q \in \mathcal{M}(\mu_1,\mu_3)}\E_\Q[c(S_{t_1},S_{t_3})]$.

\begin{prop}\label{prop_degree_of_improvement}
Let $c \in \Clin(\R^2)$ and assume that Assumption~\ref{asu_marginals} is fulfilled for $\mu_1,\mu_2,\mu_3 \in \mathcal{P}(\R)$. Then, we have
\begin{equation}\label{eq_prop_degree_of_improvement_statement}
\begin{aligned}
\inf_{\Q \in \mathcal{M}(\mu_1,\mu_2,\mu_3)}&\E_\Q[c(S_{t_1},S_{t_3})]-\inf_{\Q \in \mathcal{M}(\mu_1,\mu_3)}\E_\Q[c(S_{t_1},S_{t_3})]\\
&\hspace{-2cm}=\inf_{u_1,u_3 \in \C_b(\R) \atop \Delta_1 \in \mathcal{C}_b(\R),\Delta_2 \in \mathcal{C}_b(\R^2): H_{(u_i),(\Delta_i)} \geq 0}\sup_{v_1,v_3 \in \C_b(\R)\atop \widetilde{\Delta}_1 \in \mathcal{C}_b(\R),\widetilde{\Delta}_2 \in \mathcal{C}_b(\R^2)}\left(\E_{\mu_2} \left[H_{(u_i-v_i),(\Delta_i-\widetilde{\Delta}_i)}\right]- \E_{\mu_1}[v_1]- \E_{\mu_3}[v_3]  \right).
\end{aligned}
\end{equation}
\end{prop}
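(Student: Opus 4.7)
The plan is to reduce the identity to the strong dualities already at our disposal. Denote $A := \inf_{\Q \in \mathcal{M}(\mu_1, \mu_2, \mu_3)} \E_\Q[c(S_{t_1}, S_{t_3})]$ and $B := \inf_{\Q \in \mathcal{M}(\mu_1, \mu_3)} \E_\Q[c(S_{t_1}, S_{t_3})]$, so that the left-hand side of \eqref{eq_prop_degree_of_improvement_statement} is $A - B$. My goal is to rewrite the right-hand side as
\[
A - \sup_{u,\Delta:\, H_{(u_i),(\Delta_i)} \geq 0}\bigl(\E_{\mu_1}[u_1] + \E_{\mu_3}[u_3]\bigr),
\]
and then to identify the remaining supremum with $B$.

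For the inner supremum I would perform the shift substitution $w_1 := u_1 - v_1$, $w_3 := u_3 - v_3$, $\Gamma_1 := \Delta_1 - \widetilde{\Delta}_1$, $\Gamma_2 := \Delta_2 - \widetilde{\Delta}_2$. Because $u_i \in \C_b(\R)$ and $\Delta_i \in \mathcal{C}_b$ are fixed, the admissible sets for $v_i, \widetilde{\Delta}_i$ are in bijection with the admissible sets for $w_i, \Gamma_i$. A direct calculation rewrites the inner sup as
\[
-\E_{\mu_1}[u_1] - \E_{\mu_3}[u_3] + \sup_{w_1, w_3 \in \C_b(\R),\, \Gamma_1 \in \mathcal{C}_b(\R),\, \Gamma_2 \in \mathcal{C}_b(\R^2)}\Bigl(\E_{\mu_1}[w_1] + \E_{\mu_2}\bigl[H_{(w_i),(\Gamma_i)}\bigr] + \E_{\mu_3}[w_3]\Bigr).
\]
By Lemma~\ref{lem_equality_f_dualities_H} together with MOT strong duality (e.g.~\cite{beiglbock2013model}), the remaining supremum equals $A$. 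Hence the inner supremum simplifies to $A - \E_{\mu_1}[u_1] - \E_{\mu_3}[u_3]$, and the outer infimum collapses to the announced form.

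It remains to identify $\sup_{u,\Delta:\, H_{(u_i),(\Delta_i)} \geq 0}(\E_{\mu_1}[u_1] + \E_{\mu_3}[u_3])$ with $B$. For the inequality $\leq B$, given admissible $(u_1, u_3, \Delta_1, \Delta_2)$ with $H \geq 0$, I would evaluate the pointwise subhedging inequality from Lemma~\ref{lem_positivity_of_H}(i) at $x_2 = x_1$: the $\Delta_1$-term vanishes pointwise and what remains is $u_1(x_1) + u_3(x_3) + \Delta_2(x_1, x_1)(x_3 - x_1) \leq c(x_1, x_3)$. Integrating against any $\Q \in \mathcal{M}(\mu_1, \mu_3)$ kills the $\Delta_2$-term by the martingale property, yielding $\E_{\mu_1}[u_1] + \E_{\mu_3}[u_3] \leq \E_\Q[c]$, and taking $\inf_\Q$ gives $\leq B$. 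For the reverse inequality I would invoke two-marginal MOT strong duality: for any $\varepsilon > 0$ there exist $u_1, u_3 \in \C_b(\R)$ and $\widetilde{\Delta}_1 \in \mathcal{C}_b(\R)$ with $u_1(x_1) + u_3(x_3) + \widetilde{\Delta}_1(x_1)(x_3 - x_1) \leq c(x_1, x_3)$ and $\E_{\mu_1}[u_1] + \E_{\mu_3}[u_3] > B - \varepsilon$. Setting $\Delta_1(x_1) := \widetilde{\Delta}_1(x_1)$ and $\Delta_2(x_1, x_2) := \widetilde{\Delta}_1(x_1)$ makes the two hedging increments telescope to $\widetilde{\Delta}_1(x_1)(x_3 - x_1)$, so by Lemma~\ref{lem_positivity_of_H}(i) this produces an admissible quadruple with value $\E_{\mu_1}[u_1] + \E_{\mu_3}[u_3] > B - \varepsilon$; letting $\varepsilon \to 0$ concludes.

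The main obstacle I foresee is the bookkeeping around the function classes: one must verify that the shift substitution preserves $\C_b$-membership (which it does, $\C_b$ being closed under addition of a fixed $\C_b$ function) and that the two- and three-marginal MOT strong duality applies in exactly the $\C_b$-class appearing in the statement. Both points are standard but should be tied carefully to Lemma~\ref{lem_equality_f_dualities_H} and the duality reference cited there; the remainder of the argument is routine algebra.
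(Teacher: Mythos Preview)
Your proposal is correct and follows essentially the same route as the paper's proof: both arguments hinge on the shift substitution $w_i=u_i-v_i$, $\Gamma_i=\Delta_i-\widetilde{\Delta}_i$ to decouple the inner supremum, then invoke Lemma~\ref{lem_equality_f_dualities_H} to identify that supremum with $A$, and finally use two-marginal duality (the paper via Lemma~\ref{lem_intermediate_trading}, you via the direct $x_2=x_1$ evaluation and telescoping construction, which is precisely the content of that lemma) to identify $\sup_{H\geq 0}(\E_{\mu_1}[u_1]+\E_{\mu_3}[u_3])$ with $B$. The order of presentation differs slightly, but the ingredients and logic are the same.
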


Note that the latter Proposition~\ref{prop_degree_of_improvement} asserts that in the case of improvement, i.e., if we have $\inf_{\Q \in \mathcal{M}(\mu_1,\mu_2,\mu_3)}\E_\Q[c(S_{t_1},S_{t_3})]>\inf_{\Q \in \mathcal{M}(\mu_1,\mu_3)}\E_\Q[c(S_{t_1},S_{t_3})]$, for all sub-hedging strategies with static option trading at two times, i.e., for all  $u_1,u_3 \in \C_b(\R), \Delta_1 \in \mathcal{C}_b(\R),\Delta_2 \in \mathcal{C}_b(\R^2)$ with $H_{(u_i),(\Delta_i)} \geq 0$, we can improve the value of this strategy by considering a strategy  $u_1-v_1,u_3-v_3 \in \C_b(\R), \Delta_1-\widetilde{\Delta}_1 \in \mathcal{C}_b(\R),\Delta_2 -\widetilde{\Delta}_2 \in \mathcal{C}_b(\R^2)$ and investing the remainder of the \emph{gap between $c$ and this strategy} in a static option with payoff $H_{(u_i-v_i),(\Delta_i-\widetilde{\Delta}_i)}(S_{t_2})$ at maturity $t_2$. Indeed, according to Lemma~\ref{lem_positivity_of_H}~(ii) this modified strategy fulfils for all $v_1,v_3 \in \C_b(\R)$ that
\begin{equation}\label{eq_improve_ui_vi}
\begin{aligned}
u_1(x_1)&-v_1(x_1)+H_{(u_i-v_i),(\Delta_i-\widetilde{\Delta}_i)}(x_2)+u_3(x_3)-v_3(x_3)\\
+&\left(\Delta_1(x_1)-\widetilde{\Delta}_1(x_1)\right)(x_2-x_1)+ \left(\Delta_2(x_1,x_2)-\widetilde{\Delta}_2(x_1,x_2)\right)(x_3-x_2) \leq c(x_1,x_3)
\end{aligned}
\end{equation}
for all $(x_1,x_2,x_3) \in \R^3$. Conversely, if we can improve any arbitrary strategy in this sense, then according to Proposition~\ref{prop_degree_of_improvement} we obtain improved price bounds.

%Moreover, as one can see from \eqref{eq_prop_degree_of_improvement_statement}, the smallest improvement with a strategy of the form as in \eqref{eq_improve_ui_vi} can be obtained for the optimal sub-hedging strategy of $c$ if it exists, compare also \cite[Section 4.3.]{beiglbock2013model}, \cite{beiglbock2019dual} and  \cite{beiglbock2017complete} for a discussion of the existence of solutions of the optimal model-free sub-replication problem.

\begin{rem}[Improvement of the upper bound]\label{rem_upper_bound}
Note that, by using the relation $\inf_x f(x)= - \sup_{x}-f(x)$, we obtain analogue results as in Propositions~\ref{prop_intermediate_projection} and~\ref{prop_degree_of_improvement} when considering the upper price bound problem. In particular, it holds under the assumptions of Proposition~\ref{prop_degree_of_improvement} that
\begin{equation}\label{eq_prop_degree_of_improvement_statement_upper}
\begin{aligned}
\sup_{\Q \in \mathcal{M}(\mu_1,\mu_3)}&\E_\Q[c(S_{t_1},S_{t_3})]-\sup_{\Q \in \mathcal{M}(\mu_1,\mu_2,\mu_3)}\E_\Q[c(S_{t_1},S_{t_3})]\\
&\hspace{-2cm}=\inf_{\Q \in \mathcal{M}(\mu_1,\mu_2,\mu_3)}\E_\Q[-c(S_{t_1},S_{t_3})]-\inf_{\Q \in \mathcal{M}(\mu_1,\mu_3)}\E_\Q[-c(S_{t_1},S_{t_3})]\\
&\hspace{-2cm}=\inf_{u_1,u_3 \in \C_b(\R) \atop \Delta_1 \in \mathcal{C}_b(\R),\Delta_2 \in \mathcal{C}_b(\R^2): \overline{H}_{(u_i),(\Delta_i)} \geq 0}\sup_{v_1,v_3 \in \C_b(\R)\atop \widetilde{\Delta}_1 \in \mathcal{C}_b(\R),\widetilde{\Delta}_2 \in \mathcal{C}_b(\R^2)}\left(\E_{\mu_2} \left[\overline{H}_{(u_i-v_i),(\Delta_i-\widetilde{\Delta}_i)}\right]- \E_{\mu_1}[v_1]- \E_{\mu_3}[v_3]  \right),
\end{aligned}
\end{equation}
where for $u_1 \in \mathcal{C}_b(\R)$, $u_3 \in \mathcal{C}_b(\R)$, $\Delta_1 \in \mathcal{C}_b(\R)$ and $\Delta_2 \in \mathcal{C}_b(\R^2)$ the function $\overline{H}_{(u_i),(\Delta_i)}$ is defined as
\begin{equation}\label{eq_defn_H_overline}
\begin{aligned}
\R \ni x_2 \mapsto \overline{H}_{(u_i),(\Delta_i)}(x_2):=\inf_{x_1,x_3 \in \R}\bigg\{ &u_1(x_1)+u_3(x_3)+\Delta_1(x_1)(x_2-x_1)\\
&+\Delta_2(x_1,x_2)(x_3-x_2)-c(x_1,x_3)\bigg\},
\end{aligned}
\end{equation}
i.e., $\overline{H}_{(u_i),(\Delta_i)}$ describes the \emph{pointwise gap} between a model-free super-hedging strategy and the payoff $c$.
\end{rem}

\begin{rem}[Improvement in the case of $n$ marginal distributions]\label{rem_nmarginals}
One may also extend Proposition~\ref{prop_degree_of_improvement} to the case with $n\in \N$ marginal distributions $\mu_1,\dots,\mu_n$ satisfying Assumption~\ref{asu_marginals}. To this end, let $I:=\{i_1,\dots,i_m\} \subseteq \{1,\dots,n\}$ with $m < n$ and $c \in \Clin(\R^m)$. Then, we have
\begin{equation*}
\begin{aligned}
\inf_{\Q \in \mathcal{M}(\mu_1,\dots,\mu_n)}&\E_\Q\left[c\left( \left(S_{t_{i}}\right)_{i \in \mathcal{I}}\right)\right]-\inf_{\Q \in \mathcal{M}((\mu_i)_{i\in \mathcal{I}})}\E_\Q\left[c\left( \left(S_{t_{i}}\right)_{i \in \mathcal{I}}\right)\right]\\
&\hspace{-2cm}=\inf_{u_i\in \C_b(\R), i = 1,\dots,m \atop \Delta_i \in \mathcal{C}_b(\R^i), i =1,\dots,n}  \sup_{v_i\in \C_b(\R), i = 1,\dots,m \atop \widetilde{\Delta}_i \in \mathcal{C}_b(\R^i), i =1,\dots,n}  \bigg\{\frac{1}{n-m} \sum_{i \in \{1,\dots,n\} \backslash \mathcal{I}}\E_{\mu_i}\left[H_{(u_j-v_j),(\Delta_j-\widetilde{\Delta}_j)}^{(i)}\right] -\sum_{i \in \mathcal{I}} \E_{\mu_i}[v_i]~\text{ s.t.}~\\
&\hspace{6cm} H_{(u_j-v_j),(\Delta_j-\widetilde{\Delta}_j)}^{(i)} \geq 0 \text{ for all } i \in \{1,\dots,n\} \backslash \mathcal{I}\bigg\},
\end{aligned}
\end{equation*}
with $H_{(u_j),(\Delta_j)}^{(i)}:\R \rightarrow\R$ for $i \in \{1,\dots,n\} \backslash \mathcal{I}$, defined by
\[
H_{(u_j),(\Delta_j)}^{(i)}(x_i):= \inf_{(x_j)_{j\in \{1,\dots,n\} \backslash \{i\}}} \bigg\{ c\left((x_{i_j})_{j=1,\dots,m}\right)-\sum_{j=1}^m u_j(x_{i_j}) -\sum_{j=1}^n \Delta_i(x_1,\dots,x_j)(x_{j+1}-x_j)\bigg\}.
\]
describing the \emph{pointwise gap} between the payoff $c$ (depending on $m$ times) and a sub-hedging strategy with static option trading at $m$ times and dynamic trading in the underlying security at $n$ times.
\label{rem_n_marginals}
\end{rem}
%Another important insight is a consequence of a result from \cite{henry2017modelfree} which asserts that allowing for additional self-financing trading at intermediate times does not influence the value of the most expensive model-independent sub-replication strategy. This means that potential improvements are not driven solely by additional incorporated martingale conditions, but by the supplementary intermediate marginal constraints combined with the restrictions imposed by the martingale property.
%\begin{lem}[\cite{henry2017modelfree}, Corollary 2.1]\label{thm_intermediate_phl}
%Let $\mu_1 \preceq \mu_2 \preceq \mu_3$ and let $c:\R^2 \rightarrow (-\infty,\infty]$ be a lower semi-continuous function such that
%\[
%c(x_1,x_3) \geq - K(1+|x_1|+|x_2|+|x_3|)
%\]
%for all $(x_1,x_2,x_3) \in \R^3$ and some $K \in \R$.
%Then it holds
%\begin{align*}
% \\
%&=\inf_{\Q \in \mathcal{M}(\mu_1,\mu_3)}\E_\Q[c(S_{t_1},S_{t_3})].
%\end{align*}
%\end{lem}
%
% which is also shown in Example~\ref{exa_intermediate_stepwise} where we discuss an example considering more than three marginals.

\subsection{Financial Markets with a Finite Number of Traded Options}\label{sec_financial_application}

In addition to the results from the idealized setting presented in Section~\ref{sec_setting}, where knowledge about entire marginal distributions is assumed, we next study the influence of additional intermediate price information in a more realistic market environment. In a real financial market only a finite amount of option prices can be observed and therefore, in contrast to the previously discussed setting from Section~\ref{sec_mot}, we do not assume to know the structure of the whole marginal distributions (which only can be obtained when observing prices for a \emph{continuum} of strikes via the results from \cite{breeden1978prices}), but the prices of a finite amount of call options maturing at prospective times $t_1,t_2,t_3$ with $t_1<t_2<t_3$, compare also the settings of the super-replication approaches involving a finite amount of options discussed in \cite{acciaio2016model}, \cite{bartl2020pathwise},  \cite{burzoni2017model}, \cite{hou2018robust}, and \cite{neufeld2021model}.

Let  $S_0\in \R$ denote the current spot value of the underlying security. At initial time, we observe for each maturity $t_i \in \{t_1,t_2,t_3\}$ the prices of $m_i+1$ call options with $m_i \in \N$ and with payoff function $\R \ni x \mapsto (x-K_{i,j})_+$ for $j=0,\dots,m_i$, $i=1,2,3$, where $(K_{i,j})_{i=1,\dots,m_i,\atop j=1,2,3} \subset \R$ denotes the set of associated strikes. Under absence of a bid-ask spread and transaction costs\footnote{For an extension of the presented setting that enables to formulate model-free super-hedging in a market with frictions, we refer the reader to \cite[Section 2]{burzoni2016arbitrage}, \cite[Section 3]{cheridito2017duality}, \cite{dolinsky2014robust}, \cite[Section 6.3]{eckstein2021martingale} or \cite[Appendix A.1]{neufeld2021model}.}, we assume that the call option with payoff $x \mapsto (x-K_{i,j})_+$ can at initial time be bought at price $\Pi_{i,j}$ for $j=0,\dots,m_i$, $i=1,2,3$.
Given $d,\lambda_{i,j}, \Delta_0 \in \R, \Delta_i \in \mathcal{C}_b(\R^i)$, we denote the profit of a semi-static trading strategy with trading at times $t_1$ and $t_3$ by
\[
\Psi^{1,3}_{d,(\lambda_{i,j}),(\Delta_i)}(x_1,x_3):=d+\sum_{i=1,3}\sum_{j=0}^{m_i} \lambda_{i,j}\left(x_i-K_{i,j}\right)^++\Delta_0(x_1-S_0)+\Delta_1(x_1)(x_3-x_1),~~ (x_1,x_3) \in \R^2
\]
and by
\begin{align*}
\Psi^{1,2,3}_{d,(\lambda_{i,j}),(\Delta_i)}(x_1,x_2,x_3):=d&+\sum_{i=1}^3\sum_{j=0}^{m_i} \lambda_{i,j}\left(x_i-K_{i,j}\right)^+\\
&+\Delta_0(x_1-S_0)+\Delta_1(x_1)(x_2-x_1)+\Delta_2(x_1,x_2)(x_3-x_2),
\end{align*}
for $(x_1,x_2,x_3) \in \R^3$ the profit of a semi-static trading strategy with  trading at times $t_1,t_2,t_3$.
We fix a payoff function $\R^2 \ni (x_1,x_3) \mapsto c(x_1,x_3)$ and we study the influence of price information on vanilla options with maturity $t_2$ on the model-free data-driven sub-hedging problem
\begin{equation}\label{eq_prob_vanilla_13}
\begin{aligned}
\sup_{d,\lambda_{i,j},\Delta_0\in \R, \Delta_1 \in \mathcal{C}_b(\R)} \left\{ d+\sum_{j=0}^{m_1} \lambda_{1,j}\Pi_{1,j}+\sum_{j=0}^{m_3} \lambda_{3,j}\Pi_{3,j}~\middle|~\Psi^{1,3}_{d,(\lambda_{i,j}),(\Delta_i)} \leq c \right\}.
\end{aligned}
\end{equation}

In particular, we are interested in conditions which guarantee a strictly positive difference between the formulation from \eqref{eq_prob_vanilla_13} and the value of super-hedging that also involves static trading in call options at intermediate time $t_2$.
\begin{equation}\label{eq_prob_vanilla_123}
\begin{aligned}
\sup_{d,\lambda_{i,j},\Delta_0\in \R, \Delta_i \in \mathcal{C}_b(\R^i)}\left\{d+\sum_{i=1}^3\sum_{j=0}^{m_i} \lambda_{i,j}\Pi_{i,j}~\middle|~\Psi^{1,2,3}_{d,(\lambda_{i,j}),(\Delta_i)} \leq c \right\}.
 \end{aligned}
\end{equation}

To solve  the problems formulated in \eqref{eq_prob_vanilla_13} and \eqref{eq_prob_vanilla_123} with MOT-methods, and then to study the degree of improvement of the sub-hedging problem from \eqref{eq_prob_vanilla_123} over  the sub-hedging problem from \eqref{eq_prob_vanilla_13} by Proposition~\ref{prop_intermediate_projection} and \ref{prop_degree_of_improvement}, we impose the following assumption on the set of strikes and observed market prices.

\begin{asu}\label{asu_strikes}
We assume for all $i=1,2,3$ that
\begin{itemize}
\item[(i)] $0=K_{i,0}<K_{i,1}<\cdots<K_{i,m_i}$,
\item[(ii)] $S_0=\Pi_{i,0}\geq \Pi_{i,1}\geq\cdots\geq\Pi_{i,m_i}=0$,
\item[(iii)] The call option prices are convex with respect to the strike, i.e.,
\[
\frac{\Pi_{i,j+1}-\Pi_{i,j}}{K_{i,j+1}-K_{i,j}}-\frac{\Pi_{i,j}-\Pi_{i,j-1}}{K_{i,j}-K_{i,j-1}}\geq 0 \text{ for all } j=0,\dots,m_i-1.
\]
\end{itemize}
\end{asu}
Note that Assumption~\ref{asu_strikes}~(i), (ii), and (iii) can be considered as natural since they correspond to empirical observations and since they exclude arbitrage, see, e.g. \cite[Section 10.1]{hull2003options}. For Assumption (iii) which prevents the existence of static arbitrage opportunities we refer to  \cite{cohen2020detecting} and \cite{cousot2007conditions}.
We now follow the construction from  \cite[Section 3]{bauerle2019consistent}, and define for $i=1,2,3$ marginal distributions $\mu_i^* \in \mathcal{P}(\R)$ as follows\footnote{We denote by $\delta_x$ the  Dirac measure centered on $x \in \R$, i.e., for any measurable set $A \subset \R$ we have $\delta_x(A) = 1$ if $x\in A$ and $0$ else.}
\begin{align*}
\mu_i^*:= &\sum_{j=0}^{m_i} \left(\frac{\Pi_{i,j+1}-\Pi_{i,j}}{K_{i,j+1}-K_{i,j}}-\frac{\Pi_{i,j}-\Pi_{i,j-1}}{K_{i,j}-K_{i,j-1}}\right)\delta_{K_{i,j}},\\
\text{ with } &\frac{\Pi_{i,m_i+1}-\Pi_{i,m_i}}{K_{i,m_i+1}-K_{i,m_i}}:=0,~
\text{ and }\frac{\Pi_{i,0}-\Pi_{i,-1}}{K_{i,0}-K_{i,-1}}:=-1.
\end{align*}
In particular, the marginals $\mu_i^*$, $i=1,2,3$ are due to Assumption~\ref{asu_strikes}~(iii) well defined probability measures and have the property to be consistent with the observed market prices, i.e., it holds
\[
\E_{\mu_i^*}[(S_{t_i}-K_{i,j})^+]=\Pi_{i,j}\text{ for all } j=0,\dots,m_i, ~~i =1,2,3,
\]
compare also \cite{bauerle2019consistent} and \cite{davis2007range}. Since by construction the  marginals $(\mu_i^*)_{i=1,2,3}$ possess the same mean $S_0$, Assumption~\ref{asu_marginals} is for the marginals $(\mu_i^*)_{i=1,2,3}$  fulfilled if and only if $\E_{\mu_3^*}[(S_{t_3}-K)^+]\geq \E_{\mu_2^*}[(S_{t_2}-K)^+]\geq \E_{\mu_1^*}[(S_{t_1}-K)^+]$ for all $K\in \R$, compare, e.g., \cite{shaked2007stochastic}. Hence, if $(K_1,j)_{j=0,\dots,m_1}=(K_{2,j})_{j=0,\dots,m_2}=(K_{3,j})_{j=0,\dots,m_3}$, then Assumption~\ref{asu_marginals} is fulfilled if $\Pi_{1,j}\leq \Pi_{2,j} \leq \Pi_{3,j}$ for all $j=0,\dots,m_1$. To link the problems  \eqref{eq_prob_vanilla_123} and \eqref{eq_prob_vanilla_13} to corresponding MOT problems we reduce the super-hedging problems  to the grid spanned by the considered strikes $(K_{i,j})_{j=0,\dots,m_i \atop i =1,2,3}$. This leads to the following problems.
\begin{equation}\label{eq_prob_vanilla_13_refined}
\begin{aligned}
{\underline{P}}_{1,3}(c):&=\sup_{d,\lambda_{i,j},\Delta_0\in \R, \atop  \Delta_1 \in \mathcal{C}_b(\R)} \bigg\{ d+\sum_{i=1,3}\sum_{j=0}^{m_i} \lambda_{i,j}\Pi_{i,j}~\bigg|~\Psi^{1,3}_{d,(\lambda_{i,j}),(\Delta_i)}(x_1,x_2,x_3) \leq c(x_1,x_3)  \\ &\hspace{6.5cm}\text{ for all } x_i \in  (K_{i,j})_{j=0,\dots,m_i}, \text{ for } i=1,2,3 \bigg\},\\
{\underline{P}}_{1,2,3}(c):&=\sup_{d,\lambda_{i,j},\Delta_0\in \R, \atop  \Delta_i \in \mathcal{C}_b(\R^i)}\bigg\{d+\sum_{i=1,2,3}\sum_{j=0}^{m_i} \lambda_{i,j}\Pi_{i,j}~\bigg|~\Psi^{1,2,3}_{d,(\lambda_{i,j}),(\Delta_i)}(x_1,x_2,x_3) \leq c(x_1,x_3) \\ &\hspace{6.5cm}\text{ for all } x_i \in  (K_{i,j})_{j=0,\dots,m_i}, \text{ for } i=1,2,3 \bigg\}.
 \end{aligned}
\end{equation}
The above formulated sub-replication problems allow to establish the following result.

\begin{prop}\label{prop_mot_finite_options}
Let Assumption~\ref{asu_marginals} hold true for the marginals $\mu_1^*,\mu_2^*,\mu_3^*$, let Assumption \ref{asu_strikes} be fulfilled and let $c \in \Clin(\R^2)$. Then, we have
\[
\underline{P}_{1,3}(c) = \inf_{\Q\in \mathcal{M}(\mu_1^*,\mu_3^*)}\E_\Q[c(S_{t_1},S_{t_3})]
\]
as well as 
\[
\underline{P}_{1,2,3}(c) = \inf_{\Q\in \mathcal{M}(\mu_1^*,\mu_2^*,\mu_3^*)}\E_\Q[c(S_{t_1},S_{t_3})].
\]
\end{prop}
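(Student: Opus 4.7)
The marginals $\mu_1^*,\mu_2^*,\mu_3^*$ are finitely supported on the strike grids, so both MOT infima on the right are finite-dimensional linear programs. My plan is to recognise $\underline{P}_{1,3}(c)$ and $\underline{P}_{1,2,3}(c)$ as the LP duals of these MOTs and to conclude via strong LP duality. I describe the three-marginal equality; the two-marginal case is obtained by deleting the $i=2$ contributions throughout.

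\textbf{Weak duality.} For any $\underline{P}_{1,2,3}$-feasible tuple $(d,\lambda_{i,j},\Delta_0,\Delta_1,\Delta_2)$ and any $\Q\in\mathcal{M}(\mu_1^*,\mu_2^*,\mu_3^*)$, the marginal constraints force $\Q$ to concentrate on the product of the strike grids, so $\Psi^{1,2,3}_{d,(\lambda_{i,j}),(\Delta_i)}\leq c$ holds $\Q$-a.s. Integrating under $\Q$ and using that $\E_{\mu_i^*}[(S_{t_i}-K_{i,j})^+]=\Pi_{i,j}$ by the very construction of $\mu_i^*$, that $\E_\Q[S_{t_1}]=S_0$ (since $\mu_1^*$ has mean $\Pi_{1,0}=S_0$), and that each martingale increment has zero $\Q$-expectation, the objective $d+\sum_{i,j}\lambda_{i,j}\Pi_{i,j}$ is bounded above by $\E_\Q[c(S_{t_1},S_{t_3})]$. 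Taking the sup over hedging portfolios and inf over $\Q$ yields $\underline{P}_{1,2,3}(c)\leq\inf_{\Q}\E_\Q[c(S_{t_1},S_{t_3})]$.

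\textbf{Strong duality and spanning.} By Assumption~\ref{asu_marginals} combined with Strassen's theorem, $\mathcal{M}(\mu_1^*,\mu_2^*,\mu_3^*)$ is nonempty, so the MOT primal is a feasible finite LP. Finite-dimensional LP duality then asserts that $\inf_\Q\E_\Q[c(S_{t_1},S_{t_3})]$ equals the supremum of $\sum_{i=1}^3\E_{\mu_i^*}[u_i]$ over real-valued $u_i$ on $\mathrm{supp}(\mu_i^*)$ and real-valued $\Delta_1,\Delta_2$ on the respective grids subject to the pathwise-on-grid inequality
\[
\sum_{i=1}^3 u_i(x_i) + \Delta_1(x_1)(x_2-x_1) + \Delta_2(x_1,x_2)(x_3-x_2) \leq c(x_1,x_3).
\]
The decisive step is to translate such a dual-feasible tuple into a $\underline{P}_{1,2,3}$-feasible portfolio of the same value. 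Because $K_{i,0}=0$, the family $\{1,(x-K_{i,j})^+:j=0,\dots,m_i\}$ spans every real-valued function on the grid $\{K_{i,0},\dots,K_{i,m_i}\}$: the matrix $[(K_{i,k}-K_{i,j})^+]_{k,j=1,\dots,m_i}$ is lower triangular with strictly positive subdiagonal by Assumption~\ref{asu_strikes}~(i), and the constant vector supplies the last coordinate. Hence each $u_i$ equals $d_i+\sum_j\lambda_{i,j}(x-K_{i,j})^+$ on its grid, and $\E_{\mu_i^*}[u_i]=d_i+\sum_j\lambda_{i,j}\Pi_{i,j}$ by construction of $\mu_i^*$. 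Choosing any continuous bounded extensions of the finitely-many grid values of $\Delta_1,\Delta_2$ (unconstrained, since the $\underline{P}_{1,2,3}$ inequality is only required on the grid), and setting $d=d_1+d_2+d_3$, $\Delta_0=0$, produces a $\underline{P}_{1,2,3}$-feasible portfolio with identical objective, establishing $\underline{P}_{1,2,3}(c)\geq\inf_\Q\E_\Q[c(S_{t_1},S_{t_3})]$ and hence equality.

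\textbf{Main obstacle.} The finite-dimensionality collapses the usually delicate MOT pricing--hedging duality into elementary LP duality, so the sole nontrivial ingredient is the spanning argument. It relies crucially on $K_{i,0}=0$ from Assumption~\ref{asu_strikes}~(i), which makes the affine function $x\mapsto x$ recoverable from the call payoffs on the grid and therefore allows the forward-type term $\Delta_0(x_1-S_0)$ to be absorbed into the static option positions when translating between the two formulations.
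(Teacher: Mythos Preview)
Your argument is correct and follows the same overall two-step skeleton as the paper (weak duality, then a spanning argument to translate the MOT dual into the portfolio formulation), but you replace one ingredient by a more elementary one. The paper does not appeal to finite LP duality; instead it invokes a general constrained MOT duality result (its Lemma~\ref{lem_xi_dual}, from \cite{neufeld2021model}) applied to the finite support set $\Xi=\{K_{1,j_1}\}\times\{K_{2,j_2}\}\times\{K_{3,j_3}\}$, and then uses its Lemma~\ref{lem_finite_options} --- essentially your triangular spanning observation --- to pass from general $u_i\in\Clin(\R)$ to call-option combinations in $\mathcal{S}_i$. Your route exploits directly that the discrete marginals make the MOT a feasible finite linear program, so strong LP duality yields the grid-dual without importing any measure-theoretic duality machinery; this is more self-contained and arguably cleaner for this particular statement. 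The paper's route, on the other hand, keeps the proof within the same MOT-duality framework used elsewhere in the article, at the cost of citing an external result. Both approaches hinge on the same spanning fact that $\{1,(x-K_{i,j})^+\}_{j}$ reproduces every function on the strike grid, and your remark that $K_{i,0}=0$ makes the forward term $\Delta_0(x_1-S_0)$ redundant is exactly why the paper can allow $\Delta_0$ to vary freely in its class $\mathcal{S}_i$.
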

%Since by construction the  marginals $(\mu_i^*)_{i=1,2,3}$ possess the same mean $S_0$, Assumption~\ref{asu_marginals} is for the marginals $(\mu_i^*)_{i=1,2,3}$  fulfilled if and only if $\E_{\mu_3^*}[(S_{t_3}-K)^+]\geq \E_{\mu_2^*}[(S_{t_2}-K)^+]\geq \E_{\mu_1^*}[(S_{t_1}-K)^+]$ for all $K\in \R$, compare, e.g., \cite{shaked2007stochastic}. Hence, if $(K_1,j)_{j=1,\dots,m_1}=(K_{2,j})_{j=1,\dots,m_2}=(K_{3,j})_{j=1,\dots,m_3}$, then Assumption~\ref{asu_marginals} is fulfilled if $\Pi_{1,j}\leq \Pi_{2,j} \leq \Pi_{3,j}$ for all $j=1,\dots,m_1$.

%We also remark that, according to \cite{rothschild1978increasing}, an increasing convex order $\mu_i^* \preceq \mu_{i+1}^*$ can be understood in the sense that $\mu_i^*$ is less \emph{risky} than $\mu_{i+1}^*$. This is consistent with the observation that marginal distributions associated to more distant maturities involve more uncertainty with respect to the possible outcomes of the underlying security. 

Note that the MOT-problems from Proposition~\ref{prop_mot_finite_options} can be solved efficiently with finite linear programming methods since they involve discrete marginals, compare also \cite{eckstein2021robust}, \cite{guo2019computational} and \cite{henry2013automated} for a description and analysis of these linear programming methods.

Moreover, by Proposition~\ref{prop_mot_finite_options}, we are able to apply Proposition~\ref{prop_intermediate_projection} and Proposition~\ref{prop_degree_of_improvement} to study the improvement of $\underline{P}_{1,2,3}(c)$ over $\underline{P}_{1,3}(c)$. In particular, we can describe the degree of improvement by the following corollary.

\begin{cor}\label{cor_improvement_call}
Let Assumption~\ref{asu_marginals} hold true for the marginals $\mu_1^*,\mu_2^*,\mu_3^*$, let Assumption \ref{asu_strikes} be fulfilled and let $c \in \Clin(\R^2)$. Then, we have that
\begin{align*}
&\underline{P}_{1,2,3}(c)-\underline{P}_{1,3}(c)\\&=\inf_{d,\lambda_{i,j}, \Delta_0\in \R, \Delta_i \in \mathcal{C}_b(\R^i): \atop  
\Psi^{1,2,3}_{d,(\lambda_{i,j}),(\Delta_i)} \leq c, \lambda_{2,j} \equiv 0} \sup_{\widetilde{d}, \widetilde{\lambda}_{i,j}, \Delta_0\in \R, \widetilde{\Delta}_i \in \mathcal{C}_b(\R^i): \atop \Psi^{1,2,3}_{{d-\widetilde{d}},({\lambda}_{i,j}- \widetilde{\lambda}_{i,j}),(\Delta_i-{\widetilde{\Delta}}_i)} \leq c}
\bigg( d-\widetilde{d}-\sum_{j=0}^{m_2}\widetilde{\lambda}_{2,j}\Pi_{2,j}-\sum_{i=1,3}\sum_{j=0}^{m_i}\left(\lambda_{i,j}-\widetilde{\lambda}_{i,j}\right)\Pi_{i,j} \bigg)\\
&=\inf_{u_1,u_3 \in \C_b(\R) \atop \Delta_1 \in \mathcal{C}_b(\R),\Delta_2 \in \mathcal{C}_b(\R^2): H_{(u_i),(\Delta_i)} \geq 0}\sup_{v_1,v_3 \in \C_b(\R)\atop \widetilde{\Delta}_1 \in \mathcal{C}_b(\R),\widetilde{\Delta}_2 \in \mathcal{C}_b(\R^2)}\left(\E_{\mu_2^*} \left[H_{(u_i-v_i),(\Delta_i-\widetilde{\Delta}_i)}\right]- \E_{\mu_1^*}[v_1]- \E_{\mu_3^*}[v_3]  \right).
\end{align*}
\end{cor}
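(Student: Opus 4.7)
The plan is to combine Propositions~\ref{prop_mot_finite_options} and~\ref{prop_degree_of_improvement}. First, Proposition~\ref{prop_mot_finite_options} lets me rewrite $\underline{P}_{1,2,3}(c)-\underline{P}_{1,3}(c)$ as $\inf_{\Q \in \mathcal{M}(\mu_1^*,\mu_2^*,\mu_3^*)}\E_\Q[c(S_{t_1},S_{t_3})] - \inf_{\Q \in \mathcal{M}(\mu_1^*,\mu_3^*)}\E_\Q[c(S_{t_1},S_{t_3})]$. Applying Proposition~\ref{prop_degree_of_improvement} to the marginals $(\mu_1^*,\mu_2^*,\mu_3^*)$ then gives the second equality of the corollary immediately, as the right-hand side of~\eqref{eq_prop_degree_of_improvement_statement} depends only on the three marginals and the fixed integrand $c$.

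For the first equality I would argue directly at the level of the finite-option sub-hedging problems. Using the elementary identity $\sup_a f(a) - \sup_b g(b) = \inf_b \sup_a(f(a)-g(b))$, valid because both $\underline{P}$'s are finite by Proposition~\ref{prop_mot_finite_options} and the linear growth of $c$, I obtain $\underline{P}_{1,2,3}(c)-\underline{P}_{1,3}(c) = \inf_b \sup_a [V(a)-V(b)]$, where $V$ is the linear valuation $V(d,\lambda) = d + \sum_{i,j}\lambda_{i,j}\Pi_{i,j}$ and $b$, $a$ range over $\Psi^{1,3}$-sub-hedges and $\Psi^{1,2,3}$-sub-hedges of $c$ respectively. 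Each $\Psi^{1,3}$-sub-hedge $(d,\lambda,\Delta)$ embeds as a $\Psi^{1,2,3}$-sub-hedge by setting $\lambda_{2,j}\equiv 0$ together with $\Delta_2(x_1,x_2):=\Delta_1(x_1)$, so that $\Delta_1(x_1)(x_2-x_1)+\Delta_2(x_1,x_2)(x_3-x_2)$ telescopes back to $\Delta_1(x_1)(x_3-x_1)$. Conversely, any $\Psi^{1,2,3}$-sub-hedge with $\lambda_{2,j}\equiv 0$ has its value bounded by $\underline{P}_{1,3}(c)$, as one sees by integrating against any $\Q \in \mathcal{M}(\mu_1^*,\mu_2,\mu_3^*)$ (for an arbitrary intermediate marginal $\mu_2$ with $\mu_1^* \preceq \mu_2 \preceq \mu_3^*$) and then using that the projection $\pi$ onto $\mathcal{M}(\mu_1^*,\mu_3^*)$ is surjective (Kellerer/Strassen).

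To reach the precise bookkeeping form of the first equality, I would then reparametrize the inner supremum via the adjustment $y = (\widetilde d,\widetilde\lambda,\widetilde\Delta)$ defined so that $a = b - y$. The admissibility constraint for $a$ then reads $\Psi^{1,2,3}_{d-\widetilde d,\lambda-\widetilde\lambda,\Delta-\widetilde\Delta}\leq c$, and by linearity of $V$ the objective $V(a)-V(b)$ reduces to $-V(y)$, which after substituting $\lambda_{2,j}\equiv 0$ can be regrouped into the bookkeeping displayed in the statement.

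The main obstacle is justifying the identification $\sup\{V(b):\Psi^{1,3}_b\leq c\} = \sup\{V(b):\Psi^{1,2,3}_b\leq c,~\lambda_{2,j}\equiv 0\}$, since the 3-marginal formulation offers two independent dynamic hedges $\Delta_1,\Delta_2$ rather than the single $\Delta$ available in $\Psi^{1,3}$. The $\leq$ direction is the embedding noted above; for the $\geq$ direction one checks that any $\Psi^{1,2,3}$-sub-hedge with $\lambda_{2,j}\equiv 0$ still satisfies $V(b)\leq \E_\Q[c(S_{t_1},S_{t_3})]$ for every $\Q$ in the larger admissible set, so that the additional degrees of freedom in $(\Delta_1,\Delta_2)$ cannot push the primal value past $\inf_{\Q\in \mathcal{M}(\mu_1^*,\mu_3^*)}\E_\Q[c] = \underline{P}_{1,3}(c)$ (again by Proposition~\ref{prop_mot_finite_options}). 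Once this is in place, the remainder is routine algebra and bookkeeping.
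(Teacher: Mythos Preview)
Your proposal is correct and follows essentially the same route as the paper. For the second equality you invoke Proposition~\ref{prop_mot_finite_options} followed by Proposition~\ref{prop_degree_of_improvement}, which is exactly the paper's argument. For the first equality, the identification you single out as the ``main obstacle'' --- that $\underline{P}_{1,3}(c)$ coincides with the supremum over $\Psi^{1,2,3}$-sub-hedges with $\lambda_{2,j}\equiv 0$ --- is precisely Lemma~\ref{lem_financial_intermediate} in the paper, and the rest of your argument (the $\inf_b\sup_a$ rewriting and the affine reparametrization $a=b-y$) is the ``directly from the definitions'' part of the paper's one-line proof.

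The only notable difference is in how this identification is established. You argue the $\geq$ direction by integrating the sub-hedge against measures in $\mathcal{M}(\mu_1^*,\mu_2,\mu_3^*)$ and appealing to surjectivity of the projection onto $\mathcal{M}(\mu_1^*,\mu_3^*)$; this works (and Kellerer/Strassen is actually unnecessary --- taking $\mu_2=\mu_1^*$ and the trivial extension $\Q(\D x_1,\D x_2,\D x_3)=\Q_{13}(\D x_1,\D x_3)\delta_{x_1}(\D x_2)$ already gives surjectivity). The paper instead proves Lemma~\ref{lem_financial_intermediate} by the elementary pointwise substitution $x_2=x_1$, which collapses $\Delta_1(x_1)(x_2-x_1)+\Delta_2(x_1,x_2)(x_3-x_2)$ to $\Delta_2(x_1,x_1)(x_3-x_1)$ and hence produces a $\Psi^{1,3}$-sub-hedge with the same value. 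The paper's argument is shorter and avoids any measure-theoretic detour, but yours has the advantage of making transparent that the extra dynamic degree of freedom in $(\Delta_1,\Delta_2)$ is orthogonal to every admissible pricing measure and therefore cannot raise the primal value.
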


\section{Examples and Numerical Experiments}\label{sec_examples}

Given two marginal distributions $\mu_1, \mu_3 \in \mathcal{P}(\R)$ with $\mu_1 \preceq \mu_3$ as well as a payoff function $c\in \mathcal{C}_{\operatorname{lin}}(\R^2)$, we know from Proposition~\ref{prop_intermediate_projection} that the relation 
\begin{equation}\label{eq_inf_eq_inf}
\inf_{\Q \in \mathcal{M}(\mu_1,\mu_2,\mu_3)}\E_\Q[c(S_{t_1},S_{t_3})]=\inf_{\Q \in \mathcal{M}(\mu_1,\mu_3)}\E_\Q[c(S_{t_1},S_{t_3})]
\end{equation} is fulfilled if and only if one of the requirements of Proposition~\ref{prop_intermediate_projection} is met. In Section~\ref{sec_no_improvement}, we discuss two classes of marginal distributions and associated payoff functions which guarantee equation  \eqref{eq_inf_eq_inf}. Moreover, in Section~\ref{sec_improvement} we study examples where tighter price bounds can be observed. Eventually, with Example~\ref{exa_real} we provide an investigation of the problem when applied to real financial markets, relying on the findings from Section~\ref{sec_financial_application}.
\subsection{Examples for the Exclusion of Improvement} \label{sec_no_improvement}
\subsubsection{Convex interpolation of Intermediate Marginals}\label{sec_convex_interpol}

Suppose for a fixed function $c\in \mathcal{C}_{\operatorname{lin}}(\R^2)$ and for marginals $\mu_1, \mu_3 \in \mathcal{P}(\R)$ with $\mu_1 \preceq \mu_3$ there exists some measure $\Q^*\in \mathcal{Q}_c^*(\mu_1,\mu_3)$ which exhibits a specific structure as it is supported on two deterministic maps. More specifically, we assume
\begin{equation}\label{eq_definition_q*}
\Q^*(\D x_1,\D x_3):=\mu_1(\D x_1)\left(q(x_1)\delta_{T_u(x_1)}(x_3)+(1-q(x_1))\delta_{T_d(x_1)}(x_3)\right)\D x_3
\end{equation}
with $T_d(x) \leq x \leq T_u(x)$ for all $x \in \R$ for some functions $T_d,T_u:\R\rightarrow \R$, and where
\begin{equation}\label{eq_definition_prob_q}
\R \ni x \mapsto q(x):=\frac{x-T_d(x)}{T_u(x)-T_d(x)}\one_{\{T_u(x)>T_d(x)\}} \in [0,1].
\end{equation}
Solutions of the type as in \eqref{eq_definition_q*} are optimal for a broad class of payoff functions and are for example discussed in  \cite{beiglbock2016problem}, \cite{beiglbock2021shadow} and \cite{henry2016explicit} in great detail.
In \cite{henry2017modelfree}, the author defines an interpolation between marginal measures which preserves the convex order. We use this definition to define a marginal $\mu_2 \in \mathcal{P}(\R)$ with
\[
\mu_1 \preceq \mu_2 \preceq \mu_3.
\]
\begin{asu}[Martingale convex interpolation] \label{asu_intermediate_1}
Assume $\mu_2 = \operatorname{Law}(S_{t_2}) \in \mathcal{P}(\R)$ with $S_{t_2}$ defined by  
\[
S_{t_2}:=\begin{cases}
&S_{t_1}\cdot(1-t)+t\cdot T_u(S_{t_1}) ~\text{ with probability } q(S_{t_1}), \\
&S_{t_1}\cdot(1-t)+t\cdot T_d(S_{t_1}) ~\text{ with probability } 1-q(S_{t_1})
\end{cases}
\]
for some $t \in [0,1]$, where $q$ is defined in \eqref{eq_definition_prob_q}.
\end{asu}
%The condition $T_d(x) \leq x \leq T_u(x)$ implies immediately $q(x) \in [0,1]$, but is not necessary. Moreover, the definition of $q$ leads to the relation 
%\[
%q(x)T_u(x)+(1-q(x))T_d(x)=x.
%\]
%The latter equation is essential for the proof of the subsequent theorem.
% We want to remark that we do not require that $T_u$ is increasing or $T_d$ decreasing as it is done for similar functions in \cite{henry2016explicit}.
The fact that the marginal distributions $\mu_1,\mu_2,\mu_3$, where $\mu_2$ is defined according to Assumption~\ref{asu_intermediate_1}, increase in convex order is ensured by \cite[Lemma 2.2.]{henry2017modelfree}.
With these definitions we are able to state the following result.

\begin{cor} Let $c\in \mathcal{C}_{\operatorname{lin}}(\R^2)$ and let Assumption~\ref{asu_intermediate_1} be true. Then we have no improved price bounds, i.e.,
\eqref{eq_equality of problems} holds true.
\end{cor}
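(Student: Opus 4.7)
The plan is to verify criterion (vi) of Proposition~\ref{prop_intermediate_projection} by constructing an explicit martingale measure $\Q\in\mathcal{M}(\mu_1,\mu_2,\mu_3)$ whose projection onto the first and third coordinates equals the optimal measure $\Q^*$ from \eqref{eq_definition_q*}. Since $\Q^*\in\mathcal{Q}_c^*(\mu_1,\mu_3)$ by assumption, this immediately yields \eqref{eq_equality of problems}.

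First I would insert $S_{t_2}$ into each of the two deterministic branches of $\Q^*$ using exactly the transition prescribed in Assumption~\ref{asu_intermediate_1}, i.e.\ the probability kernel
\[
\Q_1(x_1;\cdot):=q(x_1)\,\delta_{(1-t)x_1+tT_u(x_1)}(\cdot)+(1-q(x_1))\,\delta_{(1-t)x_1+tT_d(x_1)}(\cdot).
\]
By Assumption~\ref{asu_intermediate_1} the pushforward $\int \Q_1(x_1;\cdot)\,\mu_1(\D x_1)$ equals $\mu_2$, and using \eqref{eq_definition_prob_q} one computes $\int x_2\,\Q_1(x_1;\D x_2)=(1-t)x_1+t\bigl(q(x_1)T_u(x_1)+(1-q(x_1))T_d(x_1)\bigr)=x_1$, so the first martingale condition holds.

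Next, for the second kernel $\Q_{1,2}(x_1,x_2;\cdot)$ I would place mass only on the two points $T_u(x_1)$ and $T_d(x_1)$ and solve the single linear equation $\int x_3\,\Q_{1,2}(x_1,x_2;\D x_3)=x_2$ separately on each branch. A short calculation gives, on the ``up'' branch $x_2=(1-t)x_1+tT_u(x_1)$, weight $(1-t)q(x_1)+t$ on $T_u(x_1)$ (and the complementary weight on $T_d(x_1)$), and on the ``down'' branch weight $(1-t)q(x_1)$ on $T_u(x_1)$ (and $1-(1-t)q(x_1)$ on $T_d(x_1)$). The remaining check is pure bookkeeping: the total weight reaching $(x_1,T_u(x_1))$ through both branches is $q(x_1)\bigl((1-t)q(x_1)+t\bigr)+(1-q(x_1))(1-t)q(x_1)=q(x_1)$, and analogously $1-q(x_1)$ for $(x_1,T_d(x_1))$, so that $\pi(\Q)=\Q^*$ and in particular the $t_3$-marginal of $\Q$ is $\mu_3$.

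The only subtle point, which I expect to be the main nuisance rather than a real obstacle, is the degenerate set $\{x_1:T_u(x_1)=T_d(x_1)\}$ together with Borel measurability of the kernels. On the degenerate set \eqref{eq_definition_prob_q} sets $q(x_1)=0$ and forces $T_u(x_1)=x_1=T_d(x_1)$, so both kernels collapse to Dirac masses at $x_1$ and every constraint is trivially satisfied; Borel measurability of $\Q_1$ and $\Q_{1,2}$ follows from the measurability of $T_u,T_d,q$, which is inherent in the assumed structure \eqref{eq_definition_q*} of $\Q^*$. With these ingredients in place, Proposition~\ref{prop_intermediate_projection}~(vi) delivers \eqref{eq_equality of problems}, which is the claim.
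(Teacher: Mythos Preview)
Your proposal is correct and follows essentially the same route as the paper: both construct the two-step kernel $\Q_1$ (identical to yours) and a second kernel $\Q_{1,2}$ supported on $\{T_d(x_1),T_u(x_1)\}$, then invoke Proposition~\ref{prop_intermediate_projection} (the paper cites part~(viii), you cite the equivalent part~(vi)). The only cosmetic difference is that the paper writes $\Q_{1,2}$ via the unified weight $\widetilde{q}(x_1,x_2)=\frac{x_2-T_d(x_1)}{T_u(x_1)-T_d(x_1)}\one_{\{T_u(x_1)>T_d(x_1)\}}$, whereas you compute the weights separately on the up and down branches; evaluating $\widetilde{q}$ at $x_2=(1-t)x_1+tT_u(x_1)$ and $x_2=(1-t)x_1+tT_d(x_1)$ recovers exactly your values $(1-t)q(x_1)+t$ and $(1-t)q(x_1)$.
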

\begin{proof} To show the assertion we apply Proposition~\ref{prop_intermediate_projection}~(viii). To this end,  we define for all $x_1,x_2 \in \R$
\begin{align}
\Q_{1}(x_1;\D x_2)&:=\left(q(x_1)\delta_{(1-t)x_1+tT_u(x_1)}(x_2)+(1-q(x_1))\delta_{(1-t)x_1+tT_d(x_1)}(x_2)\right)\D x_2, \notag \\
\Q_{1,2}(x_1,x_2;\D x_3)&:=\left(\widetilde{q}(x_1,x_2)\delta_{T_u(x_1)}(x_3)+(1-\widetilde{q}(x_1,x_2))\delta_{T_d(x_1)}(x_3)\right)\D x_3, \notag \\
\text{ with } \widetilde{q}(x_1,x_2)&:=\frac{x_2-T_d(x_1)}{T_u(x_1)-T_d(x_1)}\one_{\{T_u(x_1)>T_d(x_1)\}}. \label{eq_q_tilde_def}
\end{align}
This defines { probability} kernels which fulfil the requirements of Proposition~\ref{prop_intermediate_projection}~(viii) {, compare also the proof of Proposition~\ref{prop_intermediate_Tu_Td}.} 
\end{proof}

\begin{rem}\label{rem_convex_interpol}
The assumption for the martingale convex interpolation in Assumption~\ref{asu_intermediate_1} can be slightly generalized and still one does not obtain improved price bounds, i.e., \eqref{eq_equality of problems} holds true. More precisely, assume 
$\mu_2=\operatorname{Law}(S_{t_2})\in \mathcal{P}(\R)$ with
\[
S_{t_2}=\begin{cases}
f_1(S_{t_1})~\text{ with probability } q(S_{t_1}), \\
f_2(S_{t_1})~\text{ with probability } 1-q(S_{t_1}),
\end{cases}
\]
for some functions $f_1,f_2:\R \rightarrow \R$ 
fulfilling $T_d \leq f_1 \leq \operatorname{Id} \leq f_2 \leq T_u$,
such that the following relation holds for all $x \in \R$:
\begin{equation}\label{eq_asu414_martingale}
q(x)f_1(x)+(1-q(x))f_2(x)=x,
\end{equation}
where $\R \ni x\mapsto q(x)=\frac{x-T_d(x)}{T_u(x)-T_d(x)}\one_{\{T_u(x)>T_d(x)\}}$ depends on the optimal measure $\Q^*\in \mathcal{Q}_c^*(\mu_1,\mu_3)$ as defined in \eqref{eq_definition_q*}. In this case, one defines Markov kernels fulfilling the requirements from Proposition~\ref{prop_intermediate_projection}~(vii) through
\begin{align*}
\Q_{1}(x_1;\D x_2)&:=\left(q(x_1)\delta_{f_1(x_1)}(x_2)+(1-q(x_1))\delta_{f_2(x_1)}(x_2)\right)\D x_2,\\
\Q_{1,2}(x_1,x_2;\D x_3)&:=\left(\widetilde{q}(x_1,x_2)\delta_{T_u(x_1)}(x_3)+(1-\widetilde{q}(x_1,x_2))\delta_{T_d(x_1)}(x_3)\right)\D x_3
\end{align*}
and with $\widetilde{q}$ defined as in \eqref{eq_q_tilde_def}{, compare also the proof of Proposition~\ref{prop_intermediate_Tu_Td}.} 
\end{rem}

\begin{exa}[Uniform marginals, Spence--Mirrlees cost function, Upper Bound]\label{exa_general_sm}
We consider a payoff function
\[
\R^3 \ni (x_1,x_2,x_3) \mapsto c(x_1,x_2,x_3):=x_1(x_3-x_1)^2,
\]
and continuous uniform marginal distributions $\mu_1 = \mathcal{U}([-1,1]), \mu_2 = \mathcal{U}([-2,2])$.
One can show that with
\begin{equation}\label{eq_def_Tu_Td}
\begin{aligned}
&\R \ni x \mapsto T_u(x)= \left(\frac{3}{2}x+\frac{1}{2}\right)\one_{\{x >-1\}}+x\one_{\{x\leq -1\}},\\
&\R \ni x \mapsto T_d(x)=\left(-\frac{1}{2}x-\frac{3}{2}\right)\one_{\{x >-1\}}+x\one_{\{x\leq -1\}}
\end{aligned}
\end{equation}
and $\R \ni x \mapsto q(x):= \frac{3}{4}\one_{\{x > -1\}}(x)$ the measure $\Q^*$ defined in \eqref{eq_definition_q*} fulfils
$
\E_{\Q^*}[c]= \sup_{\Q\in \mathcal{M}(\mu_1,\mu_3)} \E_{\Q}[c]
$, 
compare for example \cite{henry2016explicit}.
\begin{itemize}
\item[(i)]
In Figure \ref{fig_convex1}, we illustrate the intermediate marginal $\mu_2$ as defined by Assumption \ref{asu_intermediate_1} for different values of $t \in [0,1]$.
The density function of $\mu_2$ is given by
\[
\R \ni x \mapsto f_{\mu_2}(x)=\begin{cases}
&\frac{3}{8+4t}\one_{[-1,1+t]}(x)+\frac{1}{8-12t}\one_{[-1,1-3t]}(x) ~\text{ if } t \leq \frac{2}{3},\\
&\frac{3}{8+4t}\one_{[-1,1+t]}(x)-\frac{1}{8-12t}\one_{[1-3t,-1]}(x) ~\text{ if } t > \frac{2}{3}.
\end{cases}
\]
\begin{center}
\begin{figure}[h!]
	\subfigure[The interpolation from Example~\ref{exa_general_sm}~(i).]{
    \includegraphics[width=0.4\textwidth]{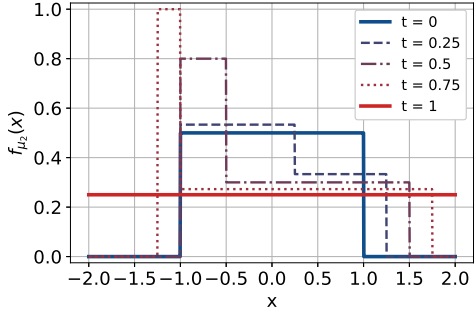} \label{fig_convex1}}
    \hspace{0.1\textwidth}
    \subfigure[The interpolation from Example~\ref{eq_ineq_u1u2u3c}.]{
    \includegraphics[width=0.4\textwidth]{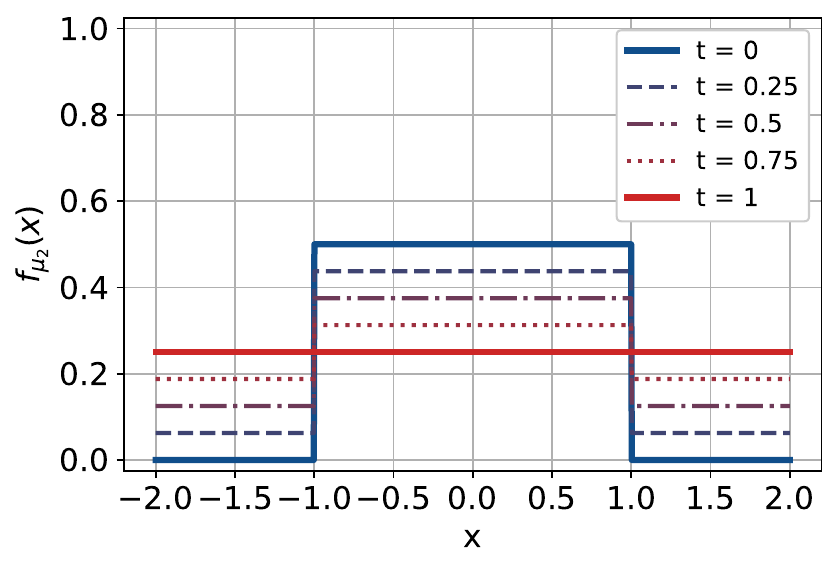}\label{fig_unif1}}
              \caption{The two plots show the probability density functions of interpolated marginals $\mu_2$ with $\mu_1 \preceq \mu_2 \preceq \mu_3$, where $\mu_1= \mathcal{U}([-1,1])$, $\mu_3 = \mathcal{U}([-2,2])$. The density functions of $\mu_2$ from Example~~\ref{exa_general_sm}~(i) (Figure~\ref{fig_convex1}) and Example~\ref{eq_ineq_u1u2u3c} (Figure~\ref{fig_unif1}) are depicted in dependence of different parameters $t\in [0,1]$.} \label{fig_unif_convex}
\end{figure}
\end{center}
\item[(ii)]
Similarly, by following Remark~\ref{rem_convex_interpol}, we can exclude improvement for intermediate measures possessing atoms.
By defining
\[
S_{t_2}= \begin{cases}
\frac{4}{3}S_{t_1}+\frac{1}{3},~&\text{ with probability } \frac{3}{4}, \\
-1,~&\text{ with probability } \frac{1}{4}. \\
\end{cases}
\]
we obtain an intermediate marginal $\mu_2: = \operatorname{Law}(S_{t_2})$
fulfilling the assumption postulated in \eqref{eq_asu414_martingale}.
\end{itemize}

\end{exa}

\subsubsection{Linear Interpolation of Intermediate Marginals}\label{sec_linear_interpol}

Suppose the marginal $\mu_2 \in \mathcal{P}(\R)$ possesses the property that for all $u \in L^1(\mu_2)$ there exists some $\lambda_u \in [0,1]$ such that 
\begin{equation}\label{eq_linear_interpolation}
\E_{\mu_2}[u]= \lambda_u \E_{\mu_1}[u]+(1-\lambda_u)\E_{\mu_3}[u].
\end{equation}
Then, as a consequence of Proposition~\ref{prop_intermediate_projection}, we show that whenever $\mu_2$ fulfils the above equation \eqref{eq_linear_interpolation}, then including this marginal information does not lead to an improved value of the MOT-problem.
\begin{lem}
Let $c\in \Clin(\R^2)$ and let $\mu_2 \in \mathcal{P}(\R)$ be such that \eqref{eq_linear_interpolation} holds true, then we obtain no improvement through inclusion of the marginal $\mu_2$, i.e., \eqref{eq_inf_eq_inf} is valid.
\end{lem}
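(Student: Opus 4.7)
The plan is to verify condition (v) of Proposition~\ref{prop_intermediate_projection}, which then gives \eqref{eq_inf_eq_inf}. Starting from any three-marginal sub-hedging strategy $u_1,u_2,u_3\in\mathcal{C}_b(\R)$, $\Delta_1\in\mathcal{C}_b(\R)$, $\Delta_2\in\mathcal{C}_b(\R^2)$ satisfying \eqref{eq_ineq_3marginals}, the goal is to exhibit a two-marginal sub-hedging strategy $v_1,v_3,\widetilde{\Delta}_1$ as in \eqref{eq_ineq_2marginals} whose dual value matches the three-marginal one, so that (v) holds even with $\varepsilon=0$.

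The key trick is to exploit that \eqref{eq_ineq_3marginals} is valid pointwise in $x_2$. I would substitute the two endpoint choices $x_2=x_1$ and $x_2=x_3$; the first kills the $\Delta_1$-term and leaves the martingale correction $\Delta_2(x_1,x_1)(x_3-x_1)$, while the second kills the $\Delta_2$-term and leaves $\Delta_1(x_1)(x_3-x_1)$. Next, applying \eqref{eq_linear_interpolation} to the specific $u_2\in\mathcal{C}_b(\R)\subset L^1(\mu_2)$ yields some $\lambda=\lambda_{u_2}\in[0,1]$ with $\E_{\mu_2}[u_2]=\lambda\E_{\mu_1}[u_2]+(1-\lambda)\E_{\mu_3}[u_2]$. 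Forming the convex combination of the two pointwise inequalities with weights $\lambda$ and $1-\lambda$ and setting
\[
v_1(x_1):=u_1(x_1)+\lambda u_2(x_1),\quad v_3(x_3):=u_3(x_3)+(1-\lambda)u_2(x_3),\quad \widetilde{\Delta}_1(x_1):=\lambda\Delta_2(x_1,x_1)+(1-\lambda)\Delta_1(x_1),
\]
produces continuous bounded functions that satisfy the two-marginal inequality \eqref{eq_ineq_2marginals}. Integrating and using the identity for $\lambda_{u_2}$ then gives $\E_{\mu_1}[v_1]+\E_{\mu_3}[v_3]=\sum_{i=1}^3\E_{\mu_i}[u_i]$, which is exactly the equality demanded by~(v).

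The main conceptual subtlety — rather than a technical obstacle — is that $\lambda_u$ in \eqref{eq_linear_interpolation} is permitted to depend on $u$, so a priori no single convex weight need realise $\mu_2$ as a measure-theoretic mixture of $\mu_1$ and $\mu_3$ uniformly in the test function. The argument above sidesteps this by choosing $\lambda$ afresh for each candidate dual strategy, matched to its particular $u_2$; the endpoint substitutions $x_2=x_1$ and $x_2=x_3$ are precisely what is needed for this single scalar $\lambda$ to be enough to absorb both the static and dynamic parts of the intermediate trades. Once (v) is established, the equivalence in Proposition~\ref{prop_intermediate_projection} yields \eqref{eq_inf_eq_inf}.
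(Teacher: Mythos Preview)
Your proposal is correct and follows essentially the same argument as the paper: both verify Proposition~\ref{prop_intermediate_projection}~(v) by substituting $x_2=x_1$ and $x_2=x_3$ into \eqref{eq_ineq_3marginals}, forming the convex combination with the weight $\lambda_{u_2}$ supplied by \eqref{eq_linear_interpolation}, and reading off the two-marginal strategy $v_1=u_1+\lambda_{u_2}u_2$, $v_3=u_3+(1-\lambda_{u_2})u_2$, $\widetilde{\Delta}_1=\lambda_{u_2}\Delta_2(\cdot,\cdot)+(1-\lambda_{u_2})\Delta_1$ with matching dual value.
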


\begin{proof}
Let $u_1,u_2, u_3, \Delta_1 \in \mathcal{C}_b(\R), \Delta_2 \in \mathcal{C}_b(\R^2)$ be such that 
\begin{equation}\label{eq_ineq_u1u2u3c}
u_1(x_1)+u_2(x_2)+ u_3(x_3)+\Delta_1(x_1)(x_2-x_1)+\Delta_2(x_1,x_2)(x_3-x_2) \leq c(x_1,x_3)\text{ for all } x_1,x_2,x_3 \in \R.
\end{equation}
By assumption \eqref{eq_linear_interpolation} there exists some $\lambda_{u_2} \in [0,1]$ such that 
\begin{equation}\label{eq_u2_lambda_u}
\E_{\mu_2}[u_2]= \lambda_{u_2} \E_{\mu_1}[u_2]+(1-\lambda_{u_2})\E_{\mu_3}[u_2].
\end{equation}
Then, we first set in \eqref{eq_ineq_u1u2u3c} $x_2=x_1$, then $x_2=x_3$ and consider convex combinations of the resulting inequalities to obtain the inequality
\begin{align*}
&u_1(x_1)+\lambda_{u_2} u_2(x_1)+\left(1-\lambda_{u_2}\right) u_2(x_3) +u_3(x_3)\\
&\hspace{2.5cm}+\left(\left(1-\lambda_{u_2}\right)\Delta_1(x_1)+\lambda_{u_2} \Delta_2(x_1,x_1)\right)(x_3-x_1) \leq c(x_1,x_3)\text{ for all } x_1,x_3 \in \R.
\end{align*}
Further, by \eqref{eq_u2_lambda_u} we have that 
\[
\E_{\mu_1}[u_1+\lambda_{u_2} u_2]+\E_{\mu_3}[(1-\lambda_{u_2})u_2+u_3]=\E_{\mu_1}[u_1]+\E_{\mu_2}[u_2]+\E_{\mu_3}[u_3]
\]
and the assertion follows by Proposition~\ref{prop_intermediate_projection}~(v).
\end{proof}

\begin{exa}[Uniform marginals]\label{exa_unif1}
We consider the marginal distributions
\[
\mu_1 =\mathcal{U}([-1,1]),~\mu_3=\mathcal{U}([-2,2]),
\]
and the Spence-Mirrlees type cost function $c(x_1,x_2,x_3)=x_1(x_3-x_1)^2$.
 Define $\mu_2=\operatorname{Law}(S_{t_2})\in \mathcal{P}(\R)$ for $t \in [0,1]$ with 
\[
S_{t_2}:=\begin{cases}
S_{t_1}, &\text{ with probability } 1-t, \\
\frac{3}{2}S_{t_1}+\frac{1}{2}, &\text{ with probability } t\frac{3}{4}, \\
-\frac{1}{2}S_{t_1}-\frac{3}{2}, &\text{ with probability } t\frac{1}{4}.
\end{cases}
\]
Then, we use the notation from Example~\ref{exa_general_sm} and have for all $u \in L^1(\mu_2)$ and for all $t \in [0,1]$ that
\begin{align*}
\E_{\mu_2}[u(S_{t_2})]&=\E_{\mu_1}[(1-t)\cdot u(S_{t_1})]+\E_{\mu_1}\left[t\frac{3}{4}u\left(\frac{3}{2}S_{t_1}+\frac{1}{2}\right)\right]+\E_{\mu_1}\left[t\frac{1}{4}u\left(-\frac{1}{2}S_{t_1}-\frac{3}{2}\right)\right]\\
&=\E_{\mu_1}[(1-t)\cdot u(S_{t_1})]+\E_{\mu_1}\left[tq(S_{t_1})u\left(T_u(S_{t_1})\right)\right]+\E_{\mu_1}\left[t(1-q(S_{t_1}))u\left(T_d(S_{t_1})\right)\right]\\
&=(1-t)\E_{\mu_1}[u(S_{t_1})]+t\E_{\mu_3}[u(S_{t_3})].
\end{align*}
and Assumption \eqref{eq_linear_interpolation} is fulfilled.
Compare Figure \ref{fig_unif1} for an illustration of the resulting marginal density functions when varying the parameter $t \in [0,1]$. The probability density functions of the intermediate marginal distributions are given by
\[
\R \ni x \mapsto f_{\mu_2}(x)=t\frac{1}{4}\one_{[-2,2]\backslash [-1,1]}(x)+\left(t\frac{1}{4}+(1-t)\frac{1}{2}\right)\one_{[-1,1]}(x),~t \in [0,1].
\]
\end{exa}

\subsection{Examples for Improvement}\label{sec_improvement}
{
\subsubsection{The case $T_d \leq \operatorname{Id} \leq T_u$}
In Section~\ref{sec_convex_interpol} we have seen that \emph{interpolating} the optimal martingale coupling at an intermediate time implies marginal distributions preventing improved price bounds. 
By formalizing these observations below, we entirely characterize marginal distributions leading to improved price bounds given the setting of Section~\ref{sec_convex_interpol}.
\begin{prop}\label{prop_intermediate_Tu_Td}
Let $c\in \mathcal{C}_{\operatorname{lin}}(\R^2)$ and $\mu_1,\mu_2, \mu_3 \in \mathcal{P}(\R)$ with $\mu_1 \preceq \mu_2 \preceq \mu_3$. Assume $\mathcal{Q}_c^*(\mu_1,\mu_3)=\{\Q^*\}$ for $\Q^*$ defined in \eqref{eq_definition_q*}. Then, we have
$
\inf_{\Q \in \mathcal{M}(\mu_1,\mu_2,\mu_3)}\E_\Q[c(S_{t_1},S_{t_3})]=\inf_{\Q \in \mathcal{M}(\mu_1,\mu_3)}\E_\Q[c(S_{t_1},S_{t_3})]
$ if and only if there exists some $\widetilde{\Q} \in \mathcal{M}(\mu_1,\mu_2)$ with 
$$
\widetilde{\Q} \big(T_d(S_{t_1}) \leq S_{t_2} \leq T_u(S_{t_1})\big)=1.
$$
\end{prop}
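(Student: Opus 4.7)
The plan is to invoke Proposition~\ref{prop_intermediate_projection}~(vi), which rephrases the equality of infima as the existence of some $\Q \in \mathcal{M}(\mu_1,\mu_2,\mu_3)$ with $\pi(\Q)\in \mathcal{Q}_c^*(\mu_1,\mu_3)$. Since by assumption $\mathcal{Q}_c^*(\mu_1,\mu_3)=\{\Q^*\}$, this reduces to showing that the existence of some $\Q \in \mathcal{M}(\mu_1,\mu_2,\mu_3)$ with $\pi(\Q)=\Q^*$ is equivalent to the existence of a $\widetilde{\Q}\in\mathcal{M}(\mu_1,\mu_2)$ concentrated on the set $\{(x_1,x_2): T_d(x_1)\leq x_2 \leq T_u(x_1)\}$. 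The two implications then boil down to extracting or constructing the appropriate disintegration kernels.

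For the forward implication, I would disintegrate $\Q$ as $\Q(\D x_1,\D x_2,\D x_3)=\mu_1(\D x_1)\Q_1(x_1;\D x_2)\Q_{1,2}(x_1,x_2;\D x_3)$ and observe that $\pi(\Q)=\Q^*$ forces, for $\mu_1$-a.e.\ $x_1$, the $\Q_1(x_1;\cdot)$-average of $\Q_{1,2}(x_1,x_2;\cdot)$ to be supported on the two-point set $\{T_d(x_1),T_u(x_1)\}$. A null-set argument applied to
\[
0 \;=\; \int_\R \Q_{1,2}\bigl(x_1,x_2;\R\setminus\{T_d(x_1),T_u(x_1)\}\bigr)\,\Q_1(x_1;\D x_2)
\]
then yields that $\Q_{1,2}(x_1,x_2;\cdot)$ itself is supported on $\{T_d(x_1),T_u(x_1)\}$ for $\Q_1(x_1;\cdot)$-a.e.\ $x_2$. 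Combined with the second-step martingale identity $\int x_3\,\Q_{1,2}(x_1,x_2;\D x_3)=x_2$, this pins $x_2$ down as a convex combination of $T_d(x_1)$ and $T_u(x_1)$, i.e.\ $T_d(x_1)\leq x_2\leq T_u(x_1)$. Taking $\widetilde{\Q}$ to be the $(x_1,x_2)$-projection of $\Q$ concludes this direction.

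For the converse, starting from such a $\widetilde{\Q}$, I would disintegrate $\widetilde{\Q}(\D x_1,\D x_2)=\mu_1(\D x_1)\Q_1(x_1;\D x_2)$ and define the second kernel exactly as in Remark~\ref{rem_convex_interpol},
\[
\Q_{1,2}(x_1,x_2;\D x_3):=\widetilde{q}(x_1,x_2)\delta_{T_u(x_1)}(\D x_3)+(1-\widetilde{q}(x_1,x_2))\delta_{T_d(x_1)}(\D x_3),
\]
with $\widetilde{q}$ as in \eqref{eq_q_tilde_def}. Since $T_d(x_1)\leq x_2\leq T_u(x_1)$ holds $\widetilde{\Q}$-a.s., $\widetilde{q}(x_1,x_2)\in[0,1]$, so $\Q_{1,2}$ is a bona fide probability kernel and the second martingale identity holds by direct computation. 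Using $\int x_2\,\Q_1(x_1;\D x_2)=x_1$, a short calculation shows that $\int_\R\Q_{1,2}(x_1,x_2;\cdot)\,\Q_1(x_1;\D x_2)=q(x_1)\delta_{T_u(x_1)}+(1-q(x_1))\delta_{T_d(x_1)}$, so the induced joint law $\Q$ satisfies $\pi(\Q)=\Q^*$. The first and second marginals of $\Q$ are $\mu_1$ and $\mu_2$ by construction of $\widetilde{\Q}$, while the third marginal equals $\mu_3$ because $\pi(\Q)=\Q^*\in\mathcal{M}(\mu_1,\mu_3)$. This places $\Q\in\mathcal{M}(\mu_1,\mu_2,\mu_3)$ and the conclusion then follows from Proposition~\ref{prop_intermediate_projection}~(viii).

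The main technical obstacle is the support-transfer step in the forward implication---passing from the support of the mixture being the two-point set $\{T_d(x_1),T_u(x_1)\}$ to the support of almost every component $\Q_{1,2}(x_1,x_2;\cdot)$ being contained in the same two points---which is a careful disintegration/null-set argument; the rest of the proof amounts to the routine verification of kernels and the bookkeeping already carried out in the proof of the remark on martingale convex interpolation.
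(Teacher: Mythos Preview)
Your proposal is correct and follows essentially the same route as the paper's proof: both directions hinge on Proposition~\ref{prop_intermediate_projection}~(vi)/(viii), with the forward direction extracting the support constraint from the conditional law of $S_{t_3}$ given $(S_{t_1},S_{t_2})$ combined with the martingale identity, and the converse building the three-marginal measure via the kernel $\Q_{1,2}$ with weight $\widetilde q$. If anything, your explicit null-set argument for the support-transfer step is slightly more detailed than the paper, which asserts that conditional support property directly.
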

The above Proposition~\ref{prop_intermediate_Tu_Td}  asserts that in terms of Proposition~\ref{prop_characterization_mu_2}, the set of intermediate marginals leading to improved price bounds is given by 
\[
\left\{\mu_2 \in \mathcal{P}(\R)~\middle|~ \mu_1 \preceq \mu_2 \preceq \mu_3 ,~\not\exists \Q \in \mathcal{M}(\mu_1,\mu_2): ~{\Q} \big(T_d(S_{t_1}) \leq S_{t_2} \leq T_u(S_{t_1})\right)=1 \big\}.
\]}
{
\begin{exa}
We consider the marginal distributions
$
\mu_1=\mathcal{U}([-1,1]),~\mu_2=\mathcal{U}(\{-1,1\}),~\mu_3=\mathcal{U}([-2,2])
$
and a payoff function
$
\R^3 \ni (x_1,x_2,x_3) \mapsto c(x_1,x_2,x_3)=x_1(x_3-x_1)^2.
$
Since the partial derivatives of $c$ fulfil the relation $\frac{\partial^3c}{\partial {x_1} \partial x_3^2}=2>0$, the \emph{left-curtain} coupling is according to \cite{henry2016explicit} and \cite{beiglboeck2016problem} the unique optimizing measure for the (upper bound) two-marginal martingale transport problem. More precisely, it holds
\[
\sup_{\Q \in \mathcal{M}(\mu_1,\mu_3)} \E_\Q[c(S_{t_1},S_{t_3})]=\E_{\Q_{\operatorname{lc}}}[c(S_{t_1},S_{t_3})]
\]
for the joint distribution $\Q_{\operatorname{lc}}$ defined via
\[
\Q_{\operatorname{lc}}(\D x_1,\D x_3)=\mu_1(dx_1)\left(q(x_1)\delta_{T_u(x_1)}(x_3)+(1-q(x_1))\delta_{T_d(x_1)}(x_3)\right)(\D x_3),
\]
where, for the prevailing marginal distributions, the functions $T_d, T_u$ are  given by \eqref{eq_def_Tu_Td}.
Let $\Q\in \mathcal{M}(\mu_1,\mu_2)$ be arbitrary, then the martingale property 
\[
S_{t_1} = \Q(S_{t_2}=1~|~S_{t_1})-\Q(S_{t_2}=-1~|~S_{t_1})
\]
implies together with $\Q(S_{t_2}=1~|~S_{t_1})+\Q(S_{t_2}=-1~|~S_{t_1})=1$ that
\[
\Q(S_{t_2}=1~|~S_{t_1}) =\frac{1}{2}(1+S_{t_1}),\qquad\Q(S_{t_2}=-1~|~S_{t_1}) =\frac{1}{2}(1-S_{t_1}).
\]
We then see $\Q(S_{t_2}=1~|~S_{t_1}=0) =\tfrac{1}{2}>0$, and $T_u(0) = \frac{1}{2}$ leading to $\Q\left(S_{t_2}>T_u(S_{t_1})\right) >0$. Hence, according to Proposition~\ref{prop_intermediate_Tu_Td} the price bounds can be improved. Indeed, one can verify that 
\[
0=\sup_{\Q \in \mathcal{M}(\mu_1,\mu_2,\mu_3)}\E_\Q[c(S_{t_1},S_{t_3})]<\sup_{\Q \in \mathcal{M}(\mu_1,\mu_3)}\E_\Q[c(S_{t_1},S_{t_3})]= \E_{\Q_{\operatorname{lc}}}[c(S_{t_1},S_{t_3})] =0.5.
\]
\end{exa}
}

\subsubsection{Further Examples}
In this section we discuss several examples for improved price bounds. While the first two settings discussed in Example~\ref{exa_theoretical_1} and Example~\ref{exa_binomial} rely on the knowledge of entire marginal distributions, we provide with Example~\ref{exa_artificial_calls}  and Example~\ref{exa_real} also an investigation of the situation in practice where only a finite amount of prices of traded options can be taken into account.

\begin{exa}\label{exa_theoretical_1}
We consider the three marginal distributions
\[
\mu_1 = \mathcal{U}([-1,1]),~\mu_2 = \mathcal{U}(\{-1,1,\}),~\mu_3 = \mathcal{U}([-2,2]),
\]
and the payoff function of a forward start straddle, given by
\[
\R^2 \ni (x_1,x_3) \mapsto c(x_1,x_3):=|x_1-x_3|.
\]
{ First note that in this situation Proposition~\ref{prop_intermediate_Tu_Td} is not applicable since, as shown in \cite{hobson2015robust} or \cite{ghoussoub2019structure}, conditional on $S_{t_1}$ the law of $S_{t_3}$ is supported on three and not only on two values, under the optimal martingale measure for the minimization problem\footnote{{As shown in \cite{hobson2012robust}, under the unique optimal measure for the \textbf{maximization} problem, the conditional law of $S_{t_3}$ is  supported only on two values, and therefore Proposition~\ref{prop_intermediate_Tu_Td} is applicable to the analogue maximization problem.}}}.
We show that in this setting including the marginal $\mu_2$ { nevertheless} improves the price bounds. In \cite{hobson2015robust} model-independent price bounds for this specific derivative were extensively studied, and it was shown that the solution of $\inf_{\Q \in \mathcal{M}(\mu_1,\mu_3)}\E_\Q[c]$ is determined via the dual strategy given by
\begin{align*}
 u_3(x)&:= \left[\alpha(p^{-1}(x))+(p^{-1}(x)-x)(1-\theta(p^{-1}(x)))\right]\one_{\{x<-1\}} +\alpha(x) \one_{\{x\in (-1,1)\}}\\
  &\hspace{1cm}\left[+\alpha(q^{-1}(x))
+q^{-1}(x)-x)(-1-\theta(q^{-1}(x)))\right]\one_{\{x>1\}},\\
 u_1(x)&:= - u_3(x),\\
\Delta_1(x)&:= -\theta(p^{-1}(x))\one_{\{x<-1\}}-\theta(x) \one_{\{x \in (-1,1)\}}-\theta)q^{-1}(x)\one_{\{x>1\}},
\end{align*}
with 
\begin{align*}
\alpha(x):&= ((2x)/\sqrt{3})(\arcsin(x/2))+(2-\sqrt{4-x^2})/\sqrt{3},&& \theta(x):= \frac{2}{\sqrt{3}} \arcsin(x/2),\\
p(x):&=\frac{-\sqrt{12-3x^2}-x}{2},  && p^{-1}(x):= \frac{-x-\sqrt{3(4-x^2)}}{2}, \\
q(x):&=\frac{\sqrt{12-3x^2}-x}{2},    && q^{-1}(x):=\frac{-x+\sqrt{3(4-x^2)}}{2},
\end{align*}
for all $x\in \R$. This leads to a value of $\inf_{\Q \in \mathcal{M}(\mu_1,\mu_3)}\E_\Q[c] =\E_{\mu_1}[u_1]+\E_{\mu_3}[u_3] \approx 0.5931$. We consider now a strategy
\begin{align*}
\widehat{u}_1(x_1)&:=\left(1-x_1\right)^2 \one_{\{x_1\in [-1,1]\}},&& \widehat{u}_2(x_2):=\left(-1-|x_2|\right)\one_{\{x_2 \not\in \{-1,1\}\}},&&\widehat{u}_3:\equiv 0, \\
\widehat{\Delta}_1(x_1)&:=-x_1\one_{\{x_1\in [-1,1]\}},&& \widehat{\Delta}_2(x_1,x_2):=\left(\one_{\{x_2=1\}}-\one_{\{x_2=-1\}}\right) \one_{\{x_1\in [-1,1]\}}
\end{align*}
fulfilling for all $ x_1,x_2,x_3 \in \R$ that
\begin{align*}
&\widehat{u}_1(x_1)+\widehat{u}_2(x_2)+\widehat{u}_3(x_3) +\widehat{\Delta}_1(x_1)(x_2-x_1)+\widehat{\Delta}_2(x_1,x_2)(x_3-x_2) \\
=& \one_{\{x_1\in [-1,1]\}}\bigg(\left(-1-|x_2|-x_1 x_2 \right)\one_{\{x_2 \not\in \{-1,1\}\}}  \\
&\hspace{2.5cm}+\left(1-x_1+x_3-1\right) \one_{\{x_2=1\}}\\
&\hspace{2.5cm}+\left(1+x_1-x_3-1\right) \one_{\{x_2=-1\}}
\bigg)
\leq |x_1-x_3|.
\end{align*}
The value of this dual sub-replication strategy computes as
$\E_{\mu_1}[\widehat{u}_1]+\E_{\mu_2}[\widehat{u}_2]+\E_{\mu_3}[\widehat{u}_3]=\E_{\mu_1}[\widehat{u}_1]=\frac{2}{3}$. In particular, we have 
$$
\inf_{\Q \in \mathcal{M}(\mu_1,\mu_2, \mu_3)}\E_\Q[c] \geq \inf_{\Q \in \mathcal{M}(\mu_1,\mu_2, \mu_3)}\E_\Q[\widehat{u}_1+\widehat{u}_2+\widehat{u}_3] = \E_{\mu_1}[\widehat{u}_1]=\frac{2}{3} > \inf_{\Q \in \mathcal{M}(\mu_1,\mu_3)}\E_\Q[c],
$$
demonstrating that we can indeed improve the price bounds by including the intermediate marginal $\mu_2$. 

Alternatively, we can verify the existence of improvement by considering an arbitrary  strategy $u_1,u_3 \in \C_b(\R), \Delta_1 \in \mathcal{C}_b(\R),\Delta_2 \in \mathcal{C}_b(\R^2)$ with $H_{(u_i),(\Delta_i)} \geq 0$, then defining 
$v_1:= u_1-\widehat{u}_1, v_3 :=  u_3-\widehat{u}_3$, $\widetilde{\Delta}_1 := \Delta_1 - \widehat{\Delta}_1$, $\widetilde{\Delta}_2:= \Delta_2 - \widehat{\Delta}_2$ and eventually checking the condition from Proposition~\ref{prop_intermediate_projection}~(ii) implying 
\begin{align*}
\E_{\mu_2} \left[H_{(u_i-v_i),(\Delta_i-\widetilde{\Delta}_i)}\right]- \E_{\mu_1}[v_1]- \E_{\mu_3}[v_3] &\geq \E_{\mu_2}[\widehat{u}_2]+\frac{2}{3}-\E_{\mu_1}[u_1]-\E_{\mu_3}[u_3] \\
&= \frac{2}{3}-\E_{\mu_1}[u_1]-\E_{\mu_3}[u_3] \geq \frac{2}{3}- 0.5931 >0.
\end{align*}

 Note that the improvement does not rely on the discrete nature of the intermediate marginal. Indeed, when considering for $0<\varepsilon<1$ marginal distributions of the form $\mu_2^\varepsilon = \mathcal{U}(A^{\varepsilon})$ with $A^{\varepsilon}:=A_1^\varepsilon \cup A_2^\varepsilon$ for $A_1^\varepsilon:=[-1,-1+\varepsilon]$ and $A_2^\varepsilon:=[1-\varepsilon,1]$, then we obtain with the sub-hedging strategy 
\begin{align*}
\widehat{u}_1^{\varepsilon}(x_1)&:=\left(1-\varepsilon-x_1\right)^2 \one_{\{x_1\in [-1,1]\}},  &&\widehat{u}_2^{\varepsilon}(x_2):=\left(-1+\varepsilon-|x_2|\right)\one_{\{x_2 \not\in A^{\varepsilon}\}},&& \widehat{u}_3^{\varepsilon}\equiv 0, \\
\widehat{\Delta}_1^{\varepsilon}(x_1)&:=-x_1\one_{\{x_1\in [-1,1]\}},&&\widehat{\Delta}_2^{\varepsilon}(x_1,x_2):=\left(\one_{A_2^{\varepsilon}}-\one_{A_1^{\varepsilon}}\right) \one_{\{x_1\in [-1,1]\}},
\end{align*}
that 
$\E_{\mu_1^{\varepsilon}}[\widehat{u}_1^{\varepsilon}]+\E_{\mu_2^{\varepsilon}}[\widehat{u}_2^{\varepsilon}] +\E_{\mu_3^{\varepsilon}}[\widehat{u}_3^{\varepsilon}]= \frac{2}{3}-\varepsilon$ and therefore an improvement over $\inf_{\Q \in \mathcal{M}(\mu_1,\mu_3)}\E_\Q[c]\approx 0.5931$ whenever $\varepsilon$ is small enough.
\end{exa}

\label{sec_improvement}

We modify an example that was already studied in \cite{eckstein2021martingale}. Note that the computations for all of the subsequent Examples~\ref{exa_binomial}, \ref{exa_artificial_calls}, and \ref{exa_real} rely on linear programming approaches to calculate solutions for MOT-problems with discrete marginals, see also \cite{eckstein2021robust}, \cite{guo2019computational} and \cite{henry2013automated}.

\begin{exa}\label{exa_binomial}
The authors from \cite{eckstein2021martingale} consider marginals possessing the same support as the marginals in an additive binomial model with step size $1$, starting value $S_{t_0}=100$ and nine prospective times $t_1,\dots,t_9$. More specifically, the marginals are supported on 
\begin{equation}\label{eq_support_1}
\{100-i,100-i+2,\cdots,100+i\}~\text{ for } i =1,\dots,9.
\end{equation}
The authors further study uniform marginals on the given support. When considering a financial derivative with payoff function $(S_{t_9}-S_{t_8})_+$ the authors observe no improvement by solely adding additional marginal information on $\mu_1,\dots,\mu_7$ associated to $S_{t_1},\cdots,S_{t_7}$. However, a large amount of improvement is observable if an additional assumption on the homogeneity of an underlying process is incorporated.\\
We modify the setting and consider on the support defined through \eqref{eq_support_1} new marginals implied by an additive binomial model starting at $100$ with step size $1$ in which the probability for an upward movement is $0.5$, i.e. $S_{t_1} \sim \mu_1=\mathcal{U}(\{99,101\}),~S_{t_2} \sim \mu_2 = \frac{1}{4}\delta_{98}+\frac{1}{2}\delta_{100}+\frac{1}{4}\delta_{102}$ etc.

\begin{figure}[h!]
\begin{center}
    \includegraphics[width=0.4\textwidth]{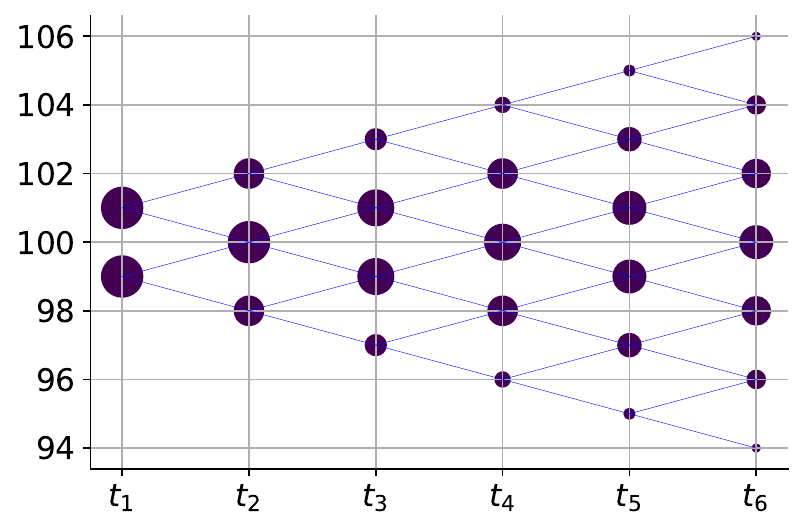} 
    \caption{The marginals in Example~\ref{exa_binomial} are implied by this binomial model, where the size of the dots indicate the probabilities to reach the respective node.}\label{fig_binomial_model}
    \end{center}
\end{figure}
 Compare also Figure~\ref{fig_binomial_model} where we illustrate this model for six time steps and indicate the corresponding probabilities of the supporting values of the marginals through the size of the dots at the respective nodes. Then, the derivative under consideration is $(S_{t_6}-S_{t_1})_+$ and we gradually add information on intermediate marginals in two different ways:
\begin{itemize}
\item[1.]
We start by adding to incorporate information on $\mu_2$, proceed with $\mu_3$, $\mu_4$, and eventually include $\mu_5$. We include information \textbf{from the left} in terms of the time scale.
\item[2.]
We start by adding to incorporate information on $\mu_5$, proceed with $\mu_4$, $\mu_3$, and in the last step we include $\mu_2$. We include information \textbf{from the right}.
\end{itemize}
Both approaches lead stepwise to a remarkable amount of improvement. With all marginals included, lower and upper bound even coincide. However, including information \emph{from the right} tightens the bounds faster. This means, by Proposition~\ref{prop_intermediate_projection}~(vi), that including information on $\mu_5$ restricts possible martingale transport plans more than taking into account information on $\mu_2$. Information on $\mu_2$ alone has barely an impact, whereas in combination with information on $\mu_5$ it leads to some improvement.
\begin{center}
\begin{figure}[h!]
	\subfigure[Including intermediate marginals from the left.]{
    \includegraphics[width=0.45\textwidth]{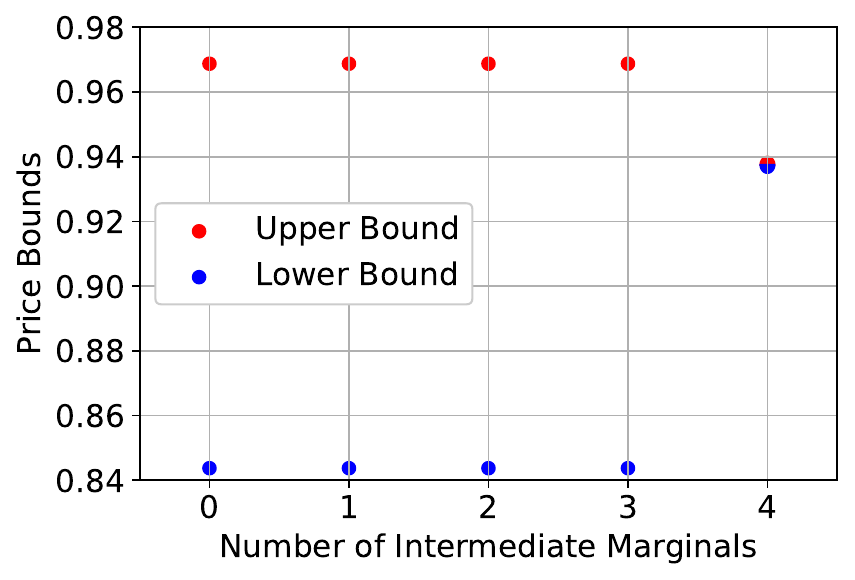} \label{fig_binomial_left}}
    \subfigure[Including intermediate marginals from the right.]{
    \includegraphics[width=0.45\textwidth]{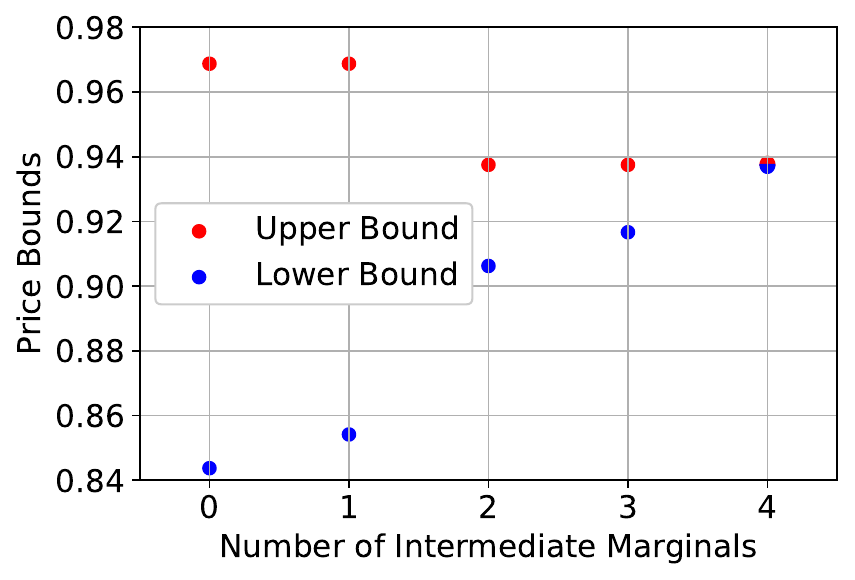}\label{fig_binomial_right}}
              \caption{The two plots show the improvement implied by the inclusion of intermediate marginals in the setting of Example~\ref{exa_binomial}. The left panel shows the improvement when starting the inclusion of intermediate marginals with $\mu_2$ and ending with $\mu_5$, whereas the right panel shows the improvement when including respecting information in reverse order, i.e., first including $\mu_5$ and eventually incorporating $\mu_2$.}\label{fig_information_left_right}
\end{figure}
\end{center}
Compare also Figure~\ref{fig_binomial_left} and Figure~\ref{fig_binomial_right} where we illustrate this behavior.
As discussed, e.g., in \cite{rothschild1978increasing}, $\mu_5$ involves more uncertainty than $\mu_2$ (which is ensured by the increasing convex order). Therefore, information on $\mu_5$ is more \emph{valuable} for improving the price bounds as it reduces more future uncertainty by imposing stronger restrictions on the set of possible joint distributions. This delivers a sound explanation why including marginals \emph{from the right} improves price bounds stronger than including marginals \emph{from the left}.

\end{exa}

The next example uses the setting from Section~\ref{sec_financial_application} where a finite amount of options are observed.

\begin{exa}\label{exa_artificial_calls}
We consider prices of call options that are indicated in Table~\ref{tbl_artificial} for strikes $(K_{i,j})_{j=0,\dots,6}=\{0,50,80,100,120,200,250\}$ for $i=1,2,3$. See also the left plot of Figure~\ref{fig_art_calls}. 
\begin{table}[h!]
\begin{center}{
\begin{tabular}{lccccccc} \toprule
             & $\Pi_{i,0}$ &$\Pi_{i,1}$ &$\Pi_{i,2}$ &$\Pi_{i,3}$ &$\Pi_{i,4}$ &$\Pi_{i,5}$ &$\Pi_{i,6}$ \\
\midrule
$i=1$ & 100 &50&23&6&3&0.2&0 \\
$i=2$ & 100 &53&24.8&6&5.2&2&0 \\
$i=3$ & 100 &57&34&20&8&2&0 \\
\end{tabular}}
\end{center}
\caption{The prices of the considered call options from Example~\ref{exa_artificial_calls}.}\label{tbl_artificial}
\end{table}
\begin{center}
\begin{figure}[h!]
	\subfigure[The prices of the call options.]{
    \includegraphics[width=0.48\textwidth]{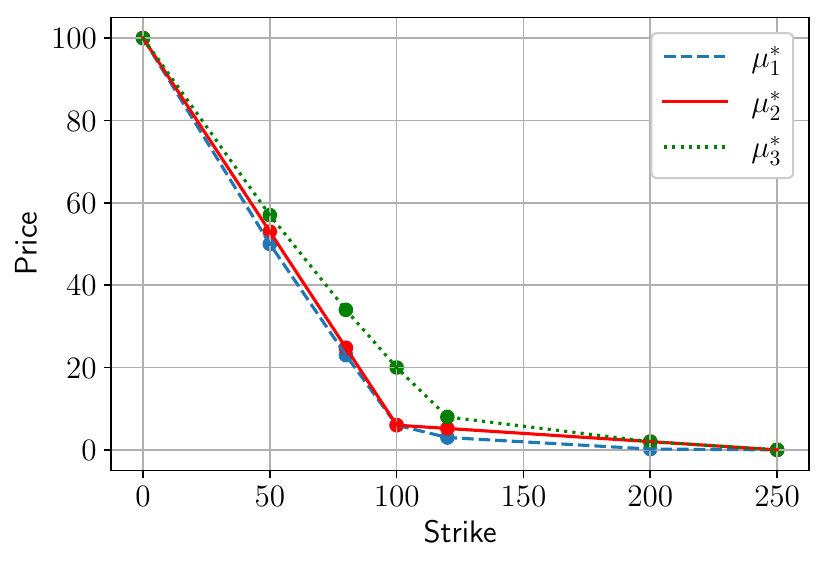} \label{fig_calls_artificial_Prices}}
    \subfigure[The implied marginals $\mu_i^*$.]{
    \includegraphics[width=0.48\textwidth]{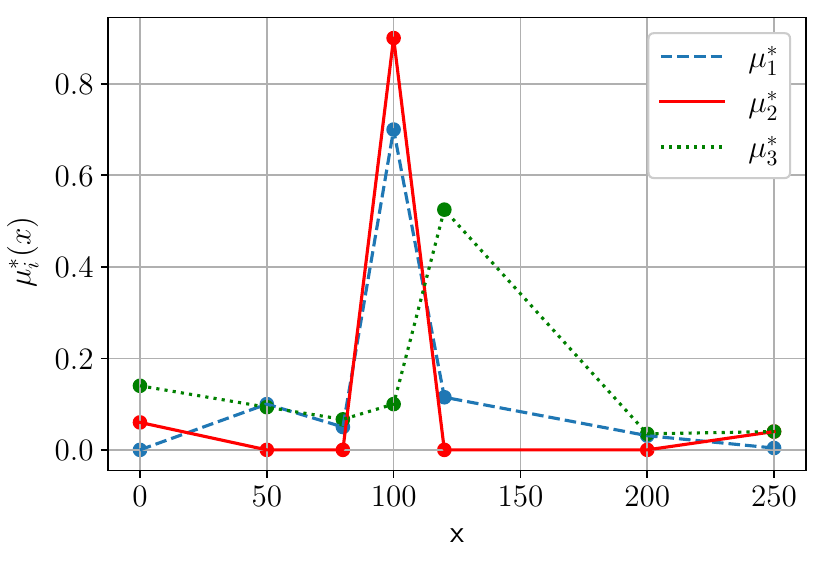}\label{fig_calls_artificial}}
%    \subfigure[The percentage improvement in dependence of $\theta$.]{
%    \includegraphics[width=0.45\textwidth]{fig_calls_artificial_Improvement-eps-converted-to.pdf}\label{fig_calls_artificial_Improvement}}
              \caption{In the setting of Example~\ref{exa_artificial_calls} we illustrate the considered prices of the call options (a) as well as the implied marginal distributions $\mu_1^*,\mu_2^*,\mu_3^*$ (b).}\label{fig_art_calls}
\end{figure}
\end{center}
Note that, in particular, according to the discussion after Proposition~\ref{prop_mot_finite_options}, Assumption~\ref{asu_strikes} is fulfilled. 
We then derive the marginal distributions $\mu_i^*$, $i=1,2,3$, defined in Section~\ref{sec_financial_application}, and compute, by using Proposition~\ref{prop_mot_finite_options}, the improvement of the three-marginal MOT problem over the two-marginal MOT problem, i.e., the  difference between $\underline{P}_{1,2,3}(c)$ and $\underline{P}_{1,3}(c)$ for different payoff functions $c$. Moreover we study the improvement of the associated upper bounds $\overline{P}_{1,2,3}(c):=-\underline{P}_{1,2,3}(-c)$ and $\overline{P}_{1,3}(c):=-\underline{P}_{1,3}(-c)$. The improvements are indicated in Table~\ref{tbl_artificial_improve} and they reveal that including the intermediate marginal has a relatively strong effect on the price bounds. This can be well explained since the marginal distribution $\mu_2^*$ is constructed as an \emph{extreme} case which, in contrast to $\mu_1^*$ and $\mu_3^*$, only  possesses positive mass at $3$ atoms and is therefore rather restrictive with respect to possible joint distributions, obviously excluding joint distributions that were optimal in the two-marginal case, in line with Proposition~\ref{prop_intermediate_projection}~(vi).  Moreover, Table~\ref{tbl_artificial_improve} shows, that the degree of improvement highly depends on the considered payoff function.

\begin{table}[h!]
\begin{center}{
\begin{tabular}{lcccccc} \toprule
Payoff    $c(x_1,x_3)$        & $\underline{P}_{1,3}$ &$\underline{P}_{1,2,3}$ &$\overline{P}_{1,2,3}$ &$\overline{P}_{1,3}$ &$\frac{\underline{P}_{1,2,3}-\underline{P}_{1,3}}{\underline{P}_{1,3}}$ & $\frac{\overline{P}_{1,3}-\overline{P}_{1,2,3}}{\overline{P}_{1,3}}$\\
\midrule

$|x_3-x_1|$ & 28.13&31.63& 39.99&39.99&12.43$\%$& 0.0$\%$\\
$(\frac{1}{2}(x_1+x_3)-70)^+$ & 33.57& 33.68&  35.01&35.14
& 0.33 $\%$&0.38$\%$
 \\
$(\frac{1}{2}(x_1+x_3)-100)^+$ & 11.08&11.11& 12.83& 13.0&
0.32$\%$&1.28$\%$ \\
$(\frac{1}{2}(x_1+x_3)-130)^+$ &3.26 &  3.58&4.6& 4.75&
9.61$\%$&3.26$\%$
\\
\end{tabular}}
\end{center}
\caption{The improvement  of $\underline{P}_{1,2,3}(c)$ over $\underline{P}_{1,3}(c)$, and of $\overline{P}_{1,2,3}(c)$ over $\overline{P}_{1,3}(c)$ for different payoff functions $c$.}\label{tbl_artificial_improve}
\end{table}
\end{exa}

We indicate within the following example on which scale the improvement emerges in a more realistic situation.

\begin{exa}\label{exa_real}
We consider market prices\footnote{We use the mid-prices, i.e., the mean of observed bid and ask prices.} of call options written on the stock of \emph{Alphabet Inc.\,} (Ticker GOOG) and prices of call options written on the stock of \emph{Apple Inc.\,} (Ticker AAPL). The prices for these call options were observed on $11$ October $2022$. After receiving the prices from \emph{Yahoo Finance}, we apply the methodology from \cite{cohen2020detecting} that allows by slightly altering the prices (minimally with respect to the $\ell^1$ -norm) to obtain arbitrage-free prices which fulfil Assumption~\ref{asu_strikes}. This procedure also ensures that the associated marginals $\mu_i^*$, $i=1,2,3$ fulfil Assumption~\ref{asu_marginals}. We then compute, by applying Proposition~\ref{prop_mot_finite_options}, the improvement of $\underline{P}_{1,2,3}(c)$ over $\underline{P}_{1,3}(c)$ and of $\overline{P}_{1,2,3}(c)$ over $\overline{P}_{1,3}(c)$ for a forward start straddle with payoff $c(S_{t_1},S_{t_3})=|S_{t_3}-S_{t_1}|$ for different combinations of maturities $t_1<t_2<t_3$.

\begin{table}[h!]
\begin{center}{
\begin{tabular}{lcccc} \toprule
&\multicolumn{2}{c}{Alphabet Inc.  } &\multicolumn{2}{c}{Apple Inc.   } \\ 
\midrule
Maturities $(t_1,t_2,t_3)$            &$\frac{\underline{P}_{1,2,3}-\underline{P}_{1,3}}{\underline{P}_{1,3}}$ & $\frac{\overline{P}_{1,3}-\overline{P}_{1,2,3}}{\overline{P}_{1,3}}$ &$\frac{\underline{P}_{1,2,3}-\underline{P}_{1,3}}{\underline{P}_{1,3}}$ & $\frac{\overline{P}_{1,3}-\overline{P}_{1,2,3}}{\overline{P}_{1,3}}$\\
\midrule
(0.02, 0.1, 0.18) 	&0.47 &0 &0.83 &0.65 \\
(0.02, 0.1, 0.25) 	&0.26 &0 &0.20 &0.02 \\
(0.02, 0.1, 0.42)  &0.17 &0 &0.08 &0 \\
(0.02, 0.18, 0.25) 	&0.08 &0 &0.05 &0 \\
(0.02, 0.18, 0.42) 	&0.02 &0 &0.03 &0 \\
(0.02, 0.25, 0.42) 	&0.01 &0 &0.00 &0 \\
(0.1, 0.18, 0.25) 	&1.29 &0 &0.03 &0 \\
(0.1, 0.18, 0.42) 	&0.39 &0 &1.54 &0 \\
(0.1, 0.25, 0.42) 	&0.08 &0 &3.44 &0 \\
(0.18, 0.25, 0.42) 	&0.21 &0 &1.42 &0 \\
\end{tabular}}
\end{center}
\caption{The improvement (in $\%$) of $\underline{P}_{1,2,3}(c)$ over $\underline{P}_{1,3}(c)$ and of $\overline{P}_{1,2,3}(c)$ over $\overline{P}_{1,3}(c)$ for the payoff function $
c(x_1,x_3):=|x_3-x_1|.
$ The marginals are derived from call options written on Alphabet Inc. and Apple Inc., respectively. Note that the maturities are measured in years.}\label{tbl_calls_1}
\end{table}
The results are summarized in Table~\ref{tbl_calls_1} and they reveal that an improvement is in most cases observable, and the improvement tends to be slightly larger when more distant maturities are involved, which is in line with the discussion from Example~\ref{exa_binomial} and from {\cite{rothschild1978increasing}} stating that an increasing convex order\footnote{Note that the largest maturities correspond to marginals that are the largest with respect to the convex order.} comes with an increasing \emph{risk}, and therefore with more uncertainty which can be reduced by including additional intermediate marginals. Since our study does not take into account any interest rates nor dividend yields we however only considered rather short maturities. A larger improvement can be expected for more distant maturities. 

While the amount of resulting improvement of the bounds turns out to be on a rather small scale, we highlight again that including additional intermediate marginal information does not require to impose any additional assumptions nor to expensively collect  information. The approach simply uses the entire information available in the market. Therefore, the results imply, since intermediate marginals evidently may have a (small) impact on price bounds that it is strongly advisable to use as much price information about liquid market instruments as available.

%\begin{table}[h!]
%\begin{center}{
%\begin{tabular}{lcccc} \toprule
%&\multicolumn{2}{c}{Alphabet Inc.  } &\multicolumn{2}{c}{Apple Inc.   } \\ 
%\midrule
%Maturities $(t_1,t_2,t_3)$            &$\frac{\underline{P}_{1,2,3}-\underline{P}_{1,3}}{\underline{P}_{1,3}}$ & $\frac{\overline{P}_{1,3}-\overline{P}_{1,2,3}}{\overline{P}_{1,3}}$ &$\frac{\underline{P}_{1,2,3}-\underline{P}_{1,3}}{\underline{P}_{1,3}}$ & $\frac{\overline{P}_{1,3}-\overline{P}_{1,2,3}}{\overline{P}_{1,3}}$\\
%\midrule
%(0.02, 0.1, 0.18) 	&0 &0.01 &0 &0.01 \\
%(0.02, 0.1, 0.25) 	&0 &0 &0 &0 \\
%(0.02, 0.1, 0.42)  &0 &0 &0 &0.01 \\
%(0.02, 0.18, 0.25) 	&0 &0.04 &0 &0\\
%(0.02, 0.18, 0.42) 	&0 &0 &0.02 &0  \\
%(0.02, 0.25, 0.42) &0 &0 &0.02 &0 \\
%(0.1, 0.18, 0.25) 	&0.36 &0.04  &0 &0.06  \\
%(0.1, 0.18, 0.42) 	&0 &0.07 &0.09 &0 \\
%(0.1, 0.25, 0.42) 	&0 &0.03 &0.11 &0.01  \\
%(0.18, 0.25, 0.42) 	&0 &0.04 &0.14 &0.01  \\
%\end{tabular}}
%\end{center}
%\caption{The improvement (in $\%$) of $\underline{P}_{1,2,3}(c)$ over $\underline{P}_{1,3}(c)$, and of $\overline{P}_{1,2,3}(c)$ over $\overline{P}_{1,3}(c)$ for the payoff function $
%c(x_1,x_3):=\left(\frac{1}{2}(x_1+x_3)-S_0\right)^+.
%$ The marginals are derived from call options written on Alphabet Inc. and Apple Inc., respectively.}\label{tbl_calls_2}
%\end{table}
\end{exa}

\section{Proofs}\label{sec_proofs}
In Section~\ref{sec_appendix_auxiliary} we report auxiliary results that are helpful to establish the proofs the main results which are provided in Section~\ref{sec_appendix_proofs_main}.
\subsection{Auxiliary Results}\label{sec_appendix_auxiliary}

The first lemma states a duality result which was initially proved in \cite{beiglbock2013model}. We refer also to \cite[Theorem 2.2.]{bartl2019robust} and  \cite[Theorem 2.1]{zaev2015monge}, where more general duality results are provided.
\begin{lem}[\cite{beiglbock2013model}, Theorem 1.1.]\label{lem_beiglboeck_dual}
Let $n \in \N$,  $c\in \mathcal{C}_{\operatorname{lin}}(\R^n)$ and assume that Assumption~\ref{asu_marginals} is fulfilled for $(\mu_i)_{i=1,\dots,n} \subset  \mathcal{P}(\R)$.
Then it holds
\begin{align}
&\sup_{u_i \in L^1(\mu_i),\Delta_i \in \mathcal{C}_b(\R^i)}\bigg\{\sum_{i=1}^n\E_{\mu_i}[u_i]~\bigg|~\sum_{i=1}^nu_i(x_i)+\sum_{i=1}^{n-1}\Delta_i(x_1,\dots,x_i)(x_{i+1}-x_i)\notag \\
&\hspace{6cm}\leq c(x_1,\dots,x_n)~\text{ for all } (x_1,\dots,x_n) \in \R^n\bigg\} \notag \\
&=\sup_{u_i \in \C_b(\R),\Delta_i \in \mathcal{C}_b(\R^i)}\bigg\{\sum_{i=1}^n\E_{\mu_i}[u_i]~\bigg|~\sum_{i=1}^nu_i(x_i)+\sum_{i=1}^{n-1}\Delta_i(x_1,\dots,x_i)(x_{i+1}-x_i)\notag \\
&\hspace{6cm}\leq c(x_1,\dots,x_n)~\text{ for all } (x_1,\dots,x_n) \in \R^n\bigg\} \notag \\
&=\inf_{\Q \in \mathcal{M}(\mu_1,\dots,\mu_n)}\E_\Q[c(S_{t_1},\dots,S_{t_n})]. \label{eq_rhs_duality}
\end{align}
Moreover, the infimum in \eqref{eq_rhs_duality} is attained by some measure $\Q \in \mathcal{M}(\mu_1,\dots,\mu_n)$.
\end{lem}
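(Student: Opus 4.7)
The plan is to establish three things: weak duality (the dual sup is bounded by the primal inf), strong duality (the reverse inequality), and attainment of the primal infimum. The equality between the $L^1$-sup and the $\mathcal{C}_b$-sup then comes nearly for free from a truncation/mollification argument once strong duality is in hand.

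First I would dispatch weak duality by a direct integration argument. Fix any $\Q\in\mathcal{M}(\mu_1,\dots,\mu_n)$ and any admissible dual $(u_i,\Delta_i)$ with $u_i\in L^1(\mu_i)$ and $\Delta_i\in\mathcal{C}_b(\R^i)$ satisfying $\sum_i u_i(x_i) + \sum_i \Delta_i(x_1,\dots,x_i)(x_{i+1}-x_i) \le c(x_1,\dots,x_n)$. Integrate against $\Q$: the marginal terms yield $\sum_i \E_{\mu_i}[u_i]$, and each martingale term vanishes because $\Delta_i$ is bounded continuous and $\Q$ is a martingale, giving $\sum_i \E_{\mu_i}[u_i] \le \E_\Q[c]$. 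Taking sup on the left and inf on the right yields $(A)\le(C)$, and since $\mathcal{C}_b(\R)\subset L^1(\mu_i)$ (as $\mu_i$ is a probability measure with finite first moment), also $(B)\le(A)$.

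For strong duality (the reverse $(C)\le(B)$), I would set this up as an abstract min-max problem on an appropriate pairing. On the space $\mathcal{C}_{\operatorname{lin}}(\R^n)$ of continuous functions of linear growth, define the sublinear functional $p(f) := \inf\{\sum_i \E_{\mu_i}[u_i] : u_i\in\mathcal{C}_b(\R),\ \Delta_i\in \mathcal{C}_b(\R^i),\ f\le \sum_i u_i + \sum_i \Delta_i(x_{1:i})(x_{i+1}-x_i)\}$, i.e.\ the super-hedging price. Using the convex order assumption (Strassen) to guarantee non-emptiness of $\mathcal{M}(\mu_1,\dots,\mu_n)$, and a compactness/tightness argument for the set of admissible measures that calibrate to finitely many test functions, one applies a Hahn--Banach separation / minimax theorem (as in \cite{beiglbock2013model}, or a direct argument via the superhedging functional). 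The main obstacle here is handling the unboundedness of the linear growth payoff and the martingale constraint simultaneously: one standard workaround is to first truncate $c$ to a bounded continuous function, apply Fan's minimax theorem to obtain duality at each truncation level, and then pass to the limit using the finite first moment of the marginals to control the tail error uniformly. This yields $(C)\le(B)$ and hence $(A)=(B)=(C)$.

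Finally, attainment of the infimum in $(C)$ is obtained from weak compactness: under Assumption~\ref{asu_marginals} the set $\mathcal{M}(\mu_1,\dots,\mu_n)$ is tight (each marginal is fixed with finite first moment, so Prokhorov applies) and closed under weak convergence (the martingale property is preserved because the test functions $\Delta_i$ are bounded continuous, and the marginal constraints persist). Since $c\in\mathcal{C}_{\operatorname{lin}}$ and the marginals have uniformly integrable first moments, $\Q\mapsto\E_\Q[c]$ is lower semicontinuous on this set, so the infimum is attained. The one subtlety to flag is the step from the $\mathcal{C}_b$-sup to the $L^1(\mu_i)$-sup, which follows because any $u_i\in L^1(\mu_i)$ in a feasible dual strategy can be approximated by bounded continuous functions in $L^1(\mu_i)$ while preserving the inequality up to an arbitrarily small error absorbed into a constant; this gives $(A)\le(B)$ and closes the circle.
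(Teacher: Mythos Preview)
The paper does not give its own proof of this lemma: it is simply quoted as Theorem~1.1 of \cite{beiglbock2013model}, with additional pointers to \cite{bartl2019robust} and \cite{zaev2015monge} for more general versions. So there is no in-paper argument to compare your proposal against.

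That said, your outline is essentially the strategy of the original reference: weak duality by direct integration, strong duality via a convex-analytic minimax/separation argument combined with a truncation to handle the linear growth, primal attainment from tightness and lower semicontinuity, and the passage between $L^1(\mu_i)$ and $\mathcal{C}_b(\R)$ integrands by approximation from below (this last point is exactly the step the present paper itself invokes later, citing the Appendix of \cite{beiglbock2013model}). As a roadmap your sketch is sound; the only place where real work is hidden is the strong duality step, where one has to be careful that the truncation and limiting procedure interact correctly with the unbounded martingale increments and the linear-growth bound on $c$. For the purposes of this paper, citing \cite{beiglbock2013model} is what the author does, and your sketch would not be out of place as an informal summary of that proof.
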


A proof of the following observation can be found in \cite[Corollary 2.1]{henry2017modelfree}. It states that the value of the  model-independent sub-hedging problem is not influenced when we additionally trade at an intermediate time.

\begin{lem}[\cite{henry2017modelfree}, Corollary 2.1]\label{lem_intermediate_trading}
Let $c\in \mathcal{C}_{\operatorname{lin}}(\R^2)$ and assume that Assumption~\ref{asu_marginals} is fulfilled for $\mu_1,\mu_3 \in \mathcal{P}(\R)$.
Then it holds
\begin{align*}
\lefteqn{\sup_{u_i \in L^1(\mu_i),\Delta_i \in \mathcal{C}_b(\R^i)}\bigg\{\E_{\mu_1}[u_1]+\E_{\mu_3}[u_3]~\big|~u_1(x_1)+u_3(x_3)+\Delta_1(x_1)(x_2-x_1)}\\
&\hspace{4cm}+\Delta_2(x_1,x_2)(x_3-x_2) \leq c(x_1,x_3)~\text{ for all } (x_1,x_2,x_3) \in \R^3\bigg\} \\
&= \lefteqn{\sup_{u_i \in \C_b(\R),\Delta_i \in \mathcal{C}_b(\R^i)}\bigg\{\E_{\mu_1}[u_1]+\E_{\mu_3}[u_3]~\big|~u_1(x_1)+u_3(x_3)+\Delta_1(x_1)(x_2-x_1)}\\
&\hspace{4cm}+\Delta_2(x_1,x_2)(x_3-x_2) \leq c(x_1,x_3)~\text{ for all } (x_1,x_2,x_3) \in \R^3\bigg\} \\
&=\inf_{\Q \in \mathcal{M}(\mu_1,\mu_3)}\E_\Q[c(S_{t_1},S_{t_3})].
\end{align*}
\end{lem}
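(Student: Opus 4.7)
My plan is to prove the chain of equalities by establishing
$$
(\mathrm{B}) \leq (\mathrm{A}) \leq (\mathrm{C}) \leq (\mathrm{B}),
$$
where $(\mathrm{A})$ denotes the first supremum (over $L^1$ strategies with intermediate trading at $t_2$), $(\mathrm{B})$ the second supremum (over $\mathcal{C}_b$ strategies with intermediate trading), and $(\mathrm{C}) = \inf_{\Q \in \mathcal{M}(\mu_1,\mu_3)}\E_\Q[c(S_{t_1},S_{t_3})]$. The inequality $(\mathrm{B}) \leq (\mathrm{A})$ is immediate, since $\mathcal{C}_b(\R) \subset L^1(\mu_i)$ whenever $\mu_i$ has a finite first moment, which holds by Assumption~\ref{asu_marginals}.

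The key step is $(\mathrm{A}) \leq (\mathrm{C})$. Here I would exploit the fact that the defining inequality
$$
u_1(x_1)+u_3(x_3)+\Delta_1(x_1)(x_2-x_1)+\Delta_2(x_1,x_2)(x_3-x_2) \leq c(x_1,x_3)
$$
is required to hold for \emph{every} triple $(x_1,x_2,x_3) \in \R^3$, and in particular for the specialization $x_2 = x_3$. With this choice the $\Delta_2$-term vanishes (being multiplied by $x_3 - x_2 = 0$), leaving
$$
u_1(x_1) + u_3(x_3) + \Delta_1(x_1)(x_3 - x_1) \leq c(x_1,x_3) \qquad \text{for all } x_1,x_3 \in \R.
$$
Thus the triple $(u_1,u_3,\Delta_1)$ with $u_1 \in L^1(\mu_1),\, u_3 \in L^1(\mu_3),\, \Delta_1 \in \mathcal{C}_b(\R)$ is admissible in the two-marginal dual problem of Lemma~\ref{lem_beiglboeck_dual} (applied with $n=2$), whose supremum equals $(\mathrm{C})$. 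Consequently $\E_{\mu_1}[u_1]+\E_{\mu_3}[u_3] \leq (\mathrm{C})$, and taking the supremum over all admissible four-tuples $(u_1,u_3,\Delta_1,\Delta_2)$ yields $(\mathrm{A}) \leq (\mathrm{C})$.

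For the closing inequality $(\mathrm{C}) \leq (\mathrm{B})$, I would invoke Lemma~\ref{lem_beiglboeck_dual} in the reverse direction: it identifies $(\mathrm{C})$ with the supremum over $\mathcal{C}_b$-admissible two-marginal sub-hedging strategies $(u_1,u_3,\Delta_1)$. Any such strategy extends trivially to a $\mathcal{C}_b$-admissible three-time strategy in the required form by setting $\Delta_2 \equiv 0$; the three-time pointwise inequality then reduces exactly to the two-time one, and the dual value $\E_{\mu_1}[u_1]+\E_{\mu_3}[u_3]$ is unchanged. This embedding gives $(\mathrm{C}) \leq (\mathrm{B})$ and closes the chain.

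The only subtlety I anticipate is bookkeeping around the $L^1$-formulation: since $u_1,u_3$ are a priori only defined up to $\mu_i$-null sets but the hedging inequality is posited pointwise on all of $\R^3$, one implicitly fixes representatives, after which the substitution $x_2 = x_3$ is unambiguous. Beyond this, the proof is essentially structural: it rests on the observation that both $(x_2 - x_1)$ and $(x_3 - x_2)$ admit a value of $x_2$ that annihilates them, combined with the two-marginal duality of Lemma~\ref{lem_beiglboeck_dual}.
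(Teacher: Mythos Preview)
Your overall architecture is sound and matches how the paper treats analogous reductions elsewhere (the paper itself does not prove this lemma but cites \cite{henry2017modelfree}). However, the step $(\mathrm{C}) \leq (\mathrm{B})$ contains a genuine error.

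You propose to embed a two-marginal strategy $(u_1,u_3,\Delta_1)$ into the three-time problem by setting $\Delta_2 \equiv 0$. With this choice the three-time constraint reads
\[
u_1(x_1)+u_3(x_3)+\Delta_1(x_1)(x_2-x_1) \leq c(x_1,x_3)\quad\text{for all }(x_1,x_2,x_3)\in\R^3,
\]
and this does \emph{not} reduce to the two-marginal inequality: the left-hand side still depends on the free variable $x_2$ through $\Delta_1(x_1)(x_2-x_1)$. Letting $x_2\to\pm\infty$ forces $\Delta_1\equiv 0$, so only the trivial trading strategy survives and the embedding collapses.

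The fix is to set $\Delta_2(x_1,x_2):=\Delta_1(x_1)$. Then
\[
\Delta_1(x_1)(x_2-x_1)+\Delta_2(x_1,x_2)(x_3-x_2)=\Delta_1(x_1)(x_3-x_1),
\]
the $x_2$-dependence cancels exactly, and the three-time inequality becomes identical to the two-marginal one for every $x_2$. This is precisely the device the paper uses in the proof of Proposition~\ref{prop_intermediate_projection}, implication (iv)$\Rightarrow$(iii), where it writes ``we set $\Delta_2:\equiv\Delta_1$''. With this correction your chain $(\mathrm{B})\leq(\mathrm{A})\leq(\mathrm{C})\leq(\mathrm{B})$ closes.
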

Next, we establish the following assertion ensuring that the function $H_{(u_i),(\Delta_i)}$, defined in \eqref{eq_defn_H}, is integrable if it is bounded from below which is in particular the case if we require $H_{(u_i),(\Delta_i)} \geq 0$.

\begin{lem}\label{lem_contiinuity_H}
Let $c\in \mathcal{C}_{\operatorname{lin}}(\R^2)$. Let $u_1,u_3 \in \C_b(\R)$, $\Delta_1 \in C_b(\R), \Delta_2 \in \C_b(\R^2)$, and let $H_{(u_i),(\Delta_i)}$ be defined as in \eqref{eq_defn_H} and assume that Assumption~\ref{asu_marginals} is fulfilled for $\mu_1,\mu_2,\mu_3 \in \mathcal{P}(\R)$. If there exists some $u_2\in L^1(\mu_2)$ such that 
\begin{equation}\label{eq_condition_Lemma53}
\sum_{i=1}^3u_i(x_i)+{\Delta_1}(x_1)(x_2-x_1)+{\Delta_1}(x_1,x_2)(x_3-x_2) \leq c(x_1,x_3) \text{ for all } x_1,x_2,x_3 \in \R,
\end{equation}
then we have
\[
H_{(u_i),(\Delta_i)} \in L^1(\mu_2).
\]
\end{lem}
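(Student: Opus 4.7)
The plan is to sandwich $H_{(u_i),(\Delta_i)}$ between two explicit $\mu_2$-integrable functions and to verify measurability separately. First I would note that the inner map
\[
(x_1,x_2,x_3)\mapsto c(x_1,x_3)-u_1(x_1)-u_3(x_3)-\Delta_1(x_1)(x_2-x_1)-\Delta_2(x_1,x_2)(x_3-x_2)
\]
is jointly continuous on $\R^3$, since all its constituents are, and $c\in \mathcal{C}_{\operatorname{lin}}(\R^2)$. Its pointwise infimum over $(x_1,x_3)\in\R^2$ is therefore upper semi-continuous in $x_2$, so $H_{(u_i),(\Delta_i)}$ is Borel measurable.

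Next I would produce the upper envelope by evaluating the defining infimum at the specific point $x_1=x_3=x_2$. The convenient observation is that with this choice the two martingale correction terms $\Delta_1(x_1)(x_2-x_1)$ and $\Delta_2(x_1,x_2)(x_3-x_2)$ both vanish, leaving
\[
H_{(u_i),(\Delta_i)}(x_2)\;\leq\; c(x_2,x_2)-u_1(x_2)-u_3(x_2).
\]
Since $c$ is of linear growth and $u_1,u_3\in \mathcal{C}_b(\R)$, the right-hand side is bounded in absolute value by $A+B|x_2|$ for some constants $A,B\geq 0$. By Assumption~\ref{asu_marginals} the measure $\mu_2$ has finite first moment, so this upper envelope lies in $L^1(\mu_2)$.

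For the lower envelope I would invoke the hypothesized sub-hedging inequality \eqref{eq_condition_Lemma53}. Rearranged it reads
\[
u_2(x_2)\;\leq\; c(x_1,x_3)-u_1(x_1)-u_3(x_3)-\Delta_1(x_1)(x_2-x_1)-\Delta_2(x_1,x_2)(x_3-x_2)
\]
for every $(x_1,x_2,x_3)\in\R^3$. Fixing $x_2$ and passing to the infimum over $(x_1,x_3)$ on the right yields $u_2(x_2)\leq H_{(u_i),(\Delta_i)}(x_2)$ for every $x_2\in\R$. Since $u_2\in L^1(\mu_2)$ by assumption, combining the two envelopes gives
\[
|H_{(u_i),(\Delta_i)}(x_2)|\;\leq\; |u_2(x_2)|+|c(x_2,x_2)-u_1(x_2)-u_3(x_2)|\qquad \mu_2\text{-a.e.},
\]
so $H_{(u_i),(\Delta_i)}\in L^1(\mu_2)$.

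I do not anticipate any real obstacle: the argument is essentially book-keeping. The only genuinely substantive point is recognizing that the choice $x_1=x_3=x_2$ simultaneously cancels both $\Delta$-terms and supplies a linearly growing upper envelope that $\mu_2$ integrates (thanks to the first-moment condition), while the existence of some $u_2\in L^1(\mu_2)$ making the semi-static strategy sub-hedging is exactly what produces a $\mu_2$-integrable lower envelope after passing to the infimum.
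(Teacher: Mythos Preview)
Your proof is correct and follows essentially the same approach as the paper: measurability via upper semicontinuity of an infimum of continuous functions, the lower envelope $u_2\leq H_{(u_i),(\Delta_i)}$ from the sub-hedging inequality, and a linear-growth upper envelope combined with the finite first moment of $\mu_2$. The only cosmetic difference is that you obtain the upper bound by evaluating the infimum at the specific point $x_1=x_3=x_2$ (which kills the $\Delta$-terms), whereas the paper bounds the full integrand by $C(1+|x_1|+|x_2|+|x_3|)$ and then takes the infimum over $x_1,x_3$; both routes yield an upper envelope of the form $A+B|x_2|$.
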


\begin{proof}[Proof of Lemma~\ref{lem_contiinuity_H}]
As a pointwise infimum of continuous functions, the function $H_{(u_i),(\Delta_i)}$ is upper semicontinuous and in particular measurable.
Further, note that condition \eqref{eq_condition_Lemma53} implies that 
\begin{equation}\label{eq_H_small_linear_growth_0}
H_{(u_i),(\Delta_i)}  \geq u_2 \in L^1(\mu_2).
\end{equation}
Since $u_1,u_3,\Delta_1 \in C_b(\R), \Delta_2 \in \C_b(\R^2)$, there exists some $C>0$ such that for all $x_2 \in \R$ we have
\begin{equation}\label{eq_H_small_linear_growth}
H_{(u_i),(\Delta_i)}(x_2) \leq \inf_{x_1,x_3 \in \R} \left\{C(1+|x_1|+|x_2|+|x_3|)\right\}= C(1+|x_2|).
\end{equation}
By Assumption~\ref{asu_marginals}, the first moment of $\mu_2$ exists. Hence \eqref{eq_H_small_linear_growth_0} and \eqref{eq_H_small_linear_growth} together show that $H_{(u_i),(\Delta_i)}$  is indeed $\mu_2$-integrable.
\end{proof}

In the setting of Section~\ref{sec_financial_application} we further establish the following assertion.
\begin{lem}\label{lem_financial_intermediate}
Let $c:\R^2 \rightarrow \R$. Then, we have
\begin{equation}\label{eq_finanical_intermediate_proposition}
\begin{aligned}
\underline{P}_{1,3}(c)=\sup_{d,\lambda_{i,j}, \Delta_0\in \R, \atop  \Delta_1 \in \mathcal{C}_b(\R),\Delta_2 \in \mathcal{C}_b(\R^2)} \bigg\{  d+\sum_{i=1,3}\sum_{j=0}^{m_i} \lambda_{i,j}\Pi_{i,j}~\bigg|~&\Psi^{1,2,3}_{d,(\lambda_{i,j}),(\Delta_i)}(x_1,x_2,x_3) \leq c(x_1,x_3) , \\&\hspace{-1cm}\text{ for all } x_i \in  (K_{i,j})_{j=0,\dots,m_i} \text{ for } i=1,2,3  \\
&\hspace{0.5cm}\text{ and }\lambda_{2,j} = 0 \text{ for } j=0,\dots,m_2 \bigg\}.
\end{aligned}
\end{equation}
\end{lem}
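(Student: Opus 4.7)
The ``$\leq$'' direction of \eqref{eq_finanical_intermediate_proposition} is immediate: given any $\underline{P}_{1,3}(c)$-feasible strategy $(d,\lambda_{1,j},\lambda_{3,j},\Delta_0,\Delta_1)$, setting $\Delta_2(x_1,x_2):=\Delta_1(x_1)\in\mathcal{C}_b(\R^2)$ and $\lambda_{2,j}:=0$ produces a strategy feasible for the right-hand side with the identical profit function, since $\Delta_1(x_1)(x_2-x_1)+\Delta_2(x_1,x_2)(x_3-x_2)=\Delta_1(x_1)(x_3-x_1)$ identically in $(x_1,x_2,x_3)\in\R^3$, so that $\Psi^{1,2,3}\equiv\Psi^{1,3}$, the sub-hedging constraint is unchanged, and the objective values coincide.

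For the reverse inequality, the plan is to identify both sides with the MOT value $M:=\inf_{\Q\in\mathcal{M}(\mu_1^*,\mu_3^*)}\E_\Q[c(S_{t_1},S_{t_3})]$, using Lemma~\ref{lem_intermediate_trading} and Lemma~\ref{lem_beiglboeck_dual}. Given any right-hand-side-feasible strategy $(d,\lambda,\Delta_0,\Delta_1,\Delta_2)$ with $\lambda_{2,j}=0$ and value $V:=d+\sum_{i=1,3}\sum_{j}\lambda_{i,j}\Pi_{i,j}$, I would set
\[
u_1(x_1):=d+\sum_{j=0}^{m_1}\lambda_{1,j}(x_1-K_{1,j})^+ +\Delta_0(x_1-S_0), \qquad u_3(x_3):=\sum_{j=0}^{m_3}\lambda_{3,j}(x_3-K_{3,j})^+.
\]
By the construction of $\mu_1^*,\mu_3^*$ in Section~\ref{sec_financial_application} (and using $\E_{\mu_1^*}[S_{t_1}]=S_0$), one has $\E_{\mu_1^*}[u_1]+\E_{\mu_3^*}[u_3]=V$. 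Because $\mu_i^*$ charges only the grid $(K_{i,j})$, redefining $u_i$ off the grid by a sufficiently large negative penalty preserves the expectations, and by a careful continuous extension of $\Delta_2(x_1,\cdot)$ between points of $(K_{2,j})$, the grid-level sub-hedging inequality promotes to a pointwise inequality on all of $\R^3$. The extended quadruple is then admissible for the sub-hedging problem with intermediate dynamic trading from Lemma~\ref{lem_intermediate_trading} (applied with marginals $\mu_1^*,\mu_3^*$), whose supremum equals $M$; thus $V\leq M$.

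It then remains to show that $\underline{P}_{1,3}(c)=M$. The bound $\underline{P}_{1,3}(c)\leq M$ follows by integrating the defining grid inequality of $\underline{P}_{1,3}(c)$ against any $\Q\in\mathcal{M}(\mu_1^*,\mu_3^*)$, whose support lies in $(K_{1,j})\times(K_{3,j})$, and using the martingale property to eliminate the $\Delta_1$-term. For the reverse bound, Lemma~\ref{lem_beiglboeck_dual} provides, for every $\varepsilon>0$, continuous functions $u_1,u_3\in\mathcal{C}_b(\R)$ and $\Delta_1\in\mathcal{C}_b(\R)$ with $u_1(x_1)+u_3(x_3)+\Delta_1(x_1)(x_3-x_1)\leq c(x_1,x_3)$ and $\E_{\mu_1^*}[u_1]+\E_{\mu_3^*}[u_3]\geq M-\varepsilon$; since $(u_1,u_3)$ are needed only at the grid points, and since $1,\,x,\,(x-K_{i,j})^+$ span the space of functions on the $m_i+1$-point grid $(K_{i,j})$, one can match $(u_1,u_3)$ on $(K_{i,j})$ by a suitable choice of $(d,\lambda_{i,j},\Delta_0)$, producing an $\underline{P}_{1,3}(c)$-feasible strategy of the same value. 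Combining both steps gives $V\leq M=\underline{P}_{1,3}(c)$, establishing the reverse inequality.

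The main obstacle I anticipate is the continuous extension in the second paragraph: one must define $\Delta_2(x_1,x_2)$ for $x_2\notin(K_{2,j})$, together with off-grid penalties for $u_1,u_3$, in such a way that the pointwise inequality on $\R^3$ survives despite the linear growth of $u_3$ and the possibly large slope of $\Delta_2(x_1,x_2)(x_3-x_2)$ in $x_3$. The finiteness of the grids and the freedom to choose penalty growth rates should make this possible, but the verification is delicate.
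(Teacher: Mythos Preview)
Your ``$\leq$'' direction is fine and matches the paper. For ``$\geq$'', however, you are taking an enormously roundabout route where the paper uses a one-line observation. The paper simply takes any right-hand-side-feasible strategy $(d,\lambda_{i,j},\Delta_0,\Delta_1,\Delta_2)$ with $\lambda_{2,j}=0$ and defines $\widetilde{\Delta}_1(x_1):=\Delta_2(x_1,x_1)$. Then
\[
\Psi^{1,3}_{d,(\lambda_{i,j}),(\widetilde{\Delta}_0,\widetilde{\Delta}_1)}(x_1,x_3)
= \Psi^{1,2,3}_{d,(\lambda_{i,j}),(\Delta_i)}(x_1,x_1,x_3)
\leq c(x_1,x_3),
\]
so the same $(d,\lambda_{i,j})$ is already feasible for $\underline{P}_{1,3}(c)$ with identical objective value. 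No duality, no MOT value $M$, no off-grid extensions are needed: the point is purely algebraic---collapsing $x_2$ to $x_1$ kills the $\Delta_1(x_1)(x_2-x_1)$ term and turns $\Delta_2(x_1,x_2)(x_3-x_2)$ into a legitimate one-step trading term.

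By contrast, your plan hinges on (a) promoting a grid-level inequality in three variables to a pointwise inequality on $\R^3$ via off-grid penalties and a continuous extension of $\Delta_2$, and (b) proving $\underline{P}_{1,3}(c)=M$, which is precisely the content of Proposition~\ref{prop_mot_finite_options} (proved separately in the paper by Lemma~\ref{lem_xi_dual} and Lemma~\ref{lem_finite_options}). Step (a) is the part you yourself flag as ``delicate,'' and you do not actually carry it out; making the extension work simultaneously for all $x_3$ on the grid while keeping $\Delta_2\in\mathcal C_b(\R^2)$ is not automatic. Step (b) is logically available but means you are invoking a substantially harder result to prove what is, in the paper, a trivial lemma. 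So your proposal is both incomplete (the extension argument is only asserted) and needlessly heavy; the direct substitution $\widetilde{\Delta}_1(x_1)=\Delta_2(x_1,x_1)$ replaces all of it.
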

\begin{proof}[Proof of Lemma~\ref{lem_financial_intermediate}]
The inequality $\leq$ in \eqref{eq_finanical_intermediate_proposition} follows immediately. To show the other inequality, let $d,\lambda_{i,j}, \Delta_0 \in \R, \Delta_1 \in \mathcal{C}_b(\R),\Delta_2 \in \mathcal{C}_b(\R^2)$ such that $\Psi^{1,2,3}_{d,(\lambda_{i,j}),(\Delta_i)} \leq c$, and such that $\lambda_{2,j} = 0 \text{ for } j=0,\dots,m_2 $. Then, we define $\widetilde{\Delta}_0:= \Delta_0$ as well as $\R \ni x_1 \mapsto \widetilde{\Delta}_1(x_1):=\Delta_2(x_1,x_1)$. One directly sees that
\[
\Psi^{1,3}_{d,(\lambda_{i,j}),(\widetilde{\Delta}_i)} \leq c
\] 
which implies the remaining inequality.
\end{proof}
The following two lemmas are crucial to prove Proposition~\ref{prop_mot_finite_options}.
\begin{lem}\label{lem_finite_options}
Let Assumption~\ref{asu_strikes} hold true, let $i\in \{1,2,3\}$ and let $f \in \Clin(\R)$. Then, there exists some $u\in \mathcal{S}_i$ with
\[
{\mathcal{S}}_i :=\left\{u:\R \rightarrow \R~\middle|~u(x)=d+\sum_{j=0}^{m_i}\lambda_j (x-K_{i,j})^++\Delta_0(x-S_0) \text{ for some } d, \Delta_0 , \lambda_j \in \R \right\}
\]
such that it holds 
\[
f(x) = u(x) \text{ for all } x \in \{K_{i,0},\dots,K_{i,m_i}\},
\]
as well as
\[
\E_{\mu_i^*}[f]=\E_{\mu_i^*}[u].
\]
\end{lem}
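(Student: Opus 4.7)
The plan is to exploit the atomic structure of $\mu_i^*$ and reduce the assertion to a piecewise-linear interpolation of $f$ at the strikes.

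First, I would observe that, by its definition, $\mu_i^*$ is a discrete probability measure supported on the finite set $\{K_{i,0},\dots,K_{i,m_i}\}$. Consequently, for any two measurable maps $g,h\colon\R\to\R$ that agree on this set one automatically has $\E_{\mu_i^*}[g]=\E_{\mu_i^*}[h]$. Therefore it suffices to construct some $u\in\mathcal{S}_i$ with $u(K_{i,j})=f(K_{i,j})$ for all $j=0,\dots,m_i$; the equality of expectations then follows at once, and the continuity/linear-growth assumption on $f$ is used only to make the statement $\E_{\mu_i^*}[f]=\E_{\mu_i^*}[u]$ meaningful (integrability is anyway automatic because $\mu_i^*$ is finitely supported).

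Second, I would construct $u$ explicitly as a piecewise-linear interpolant of $f$ at the strikes. Recall that $K_{i,0}=0$ by Assumption~\ref{asu_strikes}~(i). I set $\Delta_0:=0$, $d:=f(0)$, introduce the secant slopes
\[
s_j:=\frac{f(K_{i,j+1})-f(K_{i,j})}{K_{i,j+1}-K_{i,j}},\qquad j=0,\dots,m_i-1,
\]
and then define $\lambda_0:=s_0$, $\lambda_j:=s_j-s_{j-1}$ for $j=1,\dots,m_i-1$, and $\lambda_{m_i}:=0$. A short induction on $j$, using that for every $j\geq 1$ one has $(K_{i,j}-K_{i,k})^+=K_{i,j}-K_{i,k}$ for $k<j$ and $0$ otherwise, together with the telescoping identity $\sum_{k=0}^{j-1}\lambda_k=s_{j-1}$, shows that $u(K_{i,j})=f(K_{i,j})$ for every $j=0,\dots,m_i$.

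The argument has no real obstacle; the essential observation is simply that the class $\mathcal{S}_i$ contains every continuous piecewise-linear function with kinks confined to $\{K_{i,0},\dots,K_{i,m_i}\}$, since the constant $d$ and the linear term $\Delta_0(x-S_0)$ determine the affine behaviour on $(-\infty,K_{i,0}]$ while each $\lambda_j$ encodes one slope change at the strike $K_{i,j}$. The only care required is the bookkeeping in the induction; no use is made of Assumption~\ref{asu_strikes}~(ii)--(iii) or of Assumption~\ref{asu_marginals}.
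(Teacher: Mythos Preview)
Your proof is correct and follows essentially the same approach as the paper: both construct $u$ as the piecewise-linear interpolant of $f$ at the strikes and then use that $\mu_i^*$ is supported on $\{K_{i,0},\dots,K_{i,m_i}\}$ to deduce equality of expectations. The only cosmetic difference is that you fix $\Delta_0=0$ and give closed-form expressions $\lambda_j=s_j-s_{j-1}$ for the slope increments, whereas the paper leaves $\Delta_0$ free and defines the $\lambda_j$ recursively; the underlying idea is identical.
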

\begin{proof}
Let $f \in \Clin(\R)$. Then pick $d, \Delta_0 \in \R$ such that 
\[
f(K_{i,0})=d+\Delta_0(K_{i,0}-S_0).
\]
We continue by choosing some $\lambda_0\in \R$ such that 
\[
f(K_{i,1})= d+ \lambda_0 (K_{i,1}-K_{i,0})+\Delta_0(K_{i,1}-S_0)
\]
and iteratively for all $j=1,\dots,m_i-1$ we pick some $\lambda_j \in \R$
such that 
\[
f(K_{i,j+1})= d+   {\lambda_j} (K_{i,j+1}-K_{i,j})+ \sum_{k=0}^{j-1} {\lambda_k} (K_{i,j+1}-K_{i,k})+\Delta_0(K_{i,j+1}-S_0).
\]
We choose an arbitrary value $\lambda_{m_i}\in \R$ and define the function
\[
\R \ni x \mapsto u(x):= d+  \sum_{k=0}^{m_i} {\lambda_k} (x-K_{i,k})^++\Delta_0(x-S_0)
\]
which fulfils, by construction, that $u = f $ on $ \{K_{i,0},\dots,K_{i,m_i}\}$ as well as
\[
\E_{\mu_i^*}[u]=\E_{\mu_i^*}[f]
\]
since $\mu_i^*$ is supported on $ \{K_{i,0},\dots,K_{i,m_i}\}$ .
\end{proof}

\begin{lem}\label{lem_xi_dual}
Let $\Xi \subseteq \R^n$ be closed, let $n \in \N$,  $c\in \mathcal{C}_{\operatorname{lin}}(\R^n)$ and assume that Assumption~\ref{asu_marginals} is fulfilled for $(\mu_i)_{i=1,\dots,n} \subset  \mathcal{P}(\R)$. Moreover, assume that $\{\Q \in \mathcal{M}(\mu_1,\dots,\mu_n), \Q(\Xi)=1\} \neq \emptyset$.
Then it holds
\begin{align}
&\sup_{u_i \in \Clin(\R),\Delta_i \in \mathcal{C}_b(\R^i)}\bigg\{\sum_{i=1}^n\E_{\mu_i}[u_i]~\bigg|~\sum_{i=1}^nu_i(x_i)+\sum_{i=1}^{n-1}\Delta_i(x_1,\dots,x_i)(x_{i+1}-x_i)\notag \\
&\hspace{6cm}\leq c(x_1,\dots,x_n)~\text{ for all } (x_1,\dots,x_n) \in \Xi\bigg\} \notag \\
&=\inf_{\Q \in \mathcal{M}(\mu_1,\dots,\mu_n)} \left\{\E_\Q[c(S_{t_1},\dots,S_{t_n})]~\middle|~\Q(\Xi)=1\right\}. \label{eq_rhs_duality}
\end{align}
\end{lem}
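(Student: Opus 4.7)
Denote by $D$ the supremum on the left-hand side and by $P$ the infimum on the right-hand side; the goal is $D = P$. The inequality $D \le P$ is immediate: for any competitor $(u_i), (\Delta_i)$ in the supremum and any $\Q \in \mathcal{M}(\mu_1, \dots, \mu_n)$ with $\Q(\Xi) = 1$, I would integrate the sub-hedging inequality against $\Q$. It holds $\Q$-almost surely because $\Q(\Xi) = 1$, the $\Delta_i$-terms integrate to zero by the martingale property (as $\Delta_i \in \mathcal{C}_b(\R^i)$ is bounded and the $\mu_i$ have finite first moments), and $u_i \in \Clin(\R) \subseteq L^1(\mu_i)$, giving $\sum_i \E_{\mu_i}[u_i] \le \E_\Q[c]$.

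For the reverse direction I penalize deviation from $\Xi$. Set $\phi(x) := 1 \wedge d(x, \Xi)$, which is continuous, bounded by $1$, and vanishes exactly on $\Xi$. For $k \in \N$ define $c_k := c + k\phi$, which still lies in $\Clin(\R^n)$. Applying Lemma~\ref{lem_beiglboeck_dual} to $c_k$ gives
\begin{equation*}
P_k := \inf_{\Q \in \mathcal{M}(\mu_1,\dots,\mu_n)} \E_\Q[c_k] = \sup\Bigl\{\sum_{i=1}^n \E_{\mu_i}[u_i] ~\Bigm|~ \sum_{i=1}^n u_i(x_i) + \sum_{i=1}^{n-1} \Delta_i(x_1,\dots,x_i)(x_{i+1}-x_i) \le c_k \text{ on } \R^n\Bigr\},
\end{equation*}
the sup running over $u_i \in \mathcal{C}_b(\R)$, $\Delta_i \in \mathcal{C}_b(\R^i)$, with the infimum attained by some $\Q_k \in \mathcal{M}(\mu_1,\dots,\mu_n)$. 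Since $c_k = c$ on $\Xi$ and $\mathcal{C}_b(\R) \subseteq \Clin(\R)$, every dual feasible point above is feasible for $D$, hence $D \ge P_k$ for every $k$. It then remains to show $\lim_k P_k = P$. Fixing any $\Q^* \in \mathcal{M}(\mu_1,\dots,\mu_n)$ with $\Q^*(\Xi) = 1$ (which exists by hypothesis) gives $P_k \le \E_{\Q^*}[c_k] = \E_{\Q^*}[c]$ and hence $\sup_k P_k \le P$. The bound $\E_{\Q_k}[c] + k\E_{\Q_k}[\phi] \le \E_{\Q^*}[c]$ combined with uniform boundedness of $|\E_{\Q_k}[c]|$ (from linear growth of $c$ and the prescribed first marginal moments) forces $\E_{\Q_k}[\phi] \to 0$. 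By the $W_1$-compactness of $\mathcal{M}(\mu_1,\dots,\mu_n)$ invoked in the footnote of Section~\ref{sec_setting} I extract a subsequence $\Q_{k_j} \to \Q_\infty \in \mathcal{M}(\mu_1,\dots,\mu_n)$; continuity and boundedness of $\phi$ yield $\E_{\Q_\infty}[\phi] = 0$, whence $\Q_\infty(\Xi) = 1$ because $\Xi$ is closed and $\phi > 0$ off $\Xi$; meanwhile $W_1$-convergence together with the linear growth of $c$ gives $\E_{\Q_{k_j}}[c] \to \E_{\Q_\infty}[c] \ge P$. Hence $\liminf_k P_k \ge P$, so $D \ge \lim_k P_k = P$.

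The main obstacle is the limit step, specifically passing $\E_{\Q_{k_j}}[c] \to \E_{\Q_\infty}[c]$ for a merely continuous $c$ with linear growth. This requires that $W_1$-convergence on $\mathcal{M}(\mu_1,\dots,\mu_n)$ implies convergence of $\int c\,\D\Q$ for all $c \in \Clin(\R^n)$, which holds because the prescribed first moments force uniform integrability of $\|x\|$ along any $W_1$-convergent sequence. With this (standard) fact available, the remaining bookkeeping, namely attainment of the $\Q_k$ via Lemma~\ref{lem_beiglboeck_dual} and sandwiching the $\Clin$-sup between the $\mathcal{C}_b$-sup and the $L^1$-sup (both of which Lemma~\ref{lem_beiglboeck_dual} equates to the unconstrained primal), is routine.
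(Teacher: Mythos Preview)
Your argument is correct. The paper, however, does not give a self-contained proof of this lemma at all: it simply cites \cite[Theorem~2.4~(b)]{neufeld2021model}, specializing that result to the case of no dynamic option trading. So your approach is genuinely different in that you actually supply a proof rather than defer to an external reference.

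What you do is a standard penalization: you relax the support constraint by adding $k\phi$ with $\phi = 1 \wedge d(\cdot,\Xi)$, invoke the unconstrained duality of Lemma~\ref{lem_beiglboeck_dual} for each $c_k$, and then pass to the limit using $W_1$-compactness of $\mathcal{M}(\mu_1,\dots,\mu_n)$. This is exactly the kind of argument that underlies the cited theorem, so you are effectively reproducing (a special case of) its proof. The advantage of your route is that it is self-contained within the tools the paper already sets up (Lemma~\ref{lem_beiglboeck_dual}, compactness of $\mathcal{M}$, and the $W_1$-characterization of convergence in \eqref{eq_wassersteinspace_convergence}); the advantage of the paper's route is brevity and the fact that \cite{neufeld2021model} handles a more general setting with dynamic option trading, should one need it. One minor remark: the footnote you invoke for compactness is literally about $\mathcal{Q}_c^*(\mu_1,\mu_3)$ rather than $\mathcal{M}(\mu_1,\dots,\mu_n)$, but the latter's $W_1$-compactness is the more elementary fact on which the former rests, so this is not a gap.
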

\begin{proof}
This follows, e.g.,  from \cite[Theorem 2.4~(b)]{neufeld2021model} when considering no dynamic option trading, i.e., $V= \emptyset$ in the notation of \cite{neufeld2021model}.
\end{proof}
\subsection{Proofs of the Main Results}\label{sec_appendix_proofs_main}
\begin{proof}[Proof of Lemma~\ref{lem_positivity_of_H}]
This follows directly by definition of $H_{(u_i),(\Delta_i)}$.
\end{proof}
\begin{proof}[Proof of Proposition~\ref{prop_intermediate_projection}]~\\
\framebox{$(i) \Leftrightarrow (ii)$}\\
This follows directly by Proposition~\ref{prop_degree_of_improvement}.\\
\framebox{$(i) \Rightarrow (iii)$}\\
Let $\varepsilon>0$. According to Lemma~\ref{lem_intermediate_trading} we find some $u_1^\varepsilon \in \C_b(\R)$, $u_3^\varepsilon \in \C_b(\R)$, $\Delta_1^\varepsilon \in \C_b(\R)$, $\Delta_2^\varepsilon\in \C_b(\R^2)$ with
\begin{equation}\label{eq_ineq_ui_itoiii_no1}
u_1^\varepsilon(x_1)+u_3^\varepsilon(x_3)+\Delta_1^\varepsilon(x_1)(x_2-x_1)+\Delta_2^\varepsilon(x_1,x_2)(x_3-x_2) \leq c(x_1,x_3)~\text{ for all } (x_1,x_2,x_3) \in \R^3
\end{equation}
such that
\begin{equation}\label{eq_ineq_ui_itoiii}
\inf_{\Q \in \mathcal{M}(\mu_1,\mu_2,\mu_3)}\E_\Q[c(S_{t_1},S_{t_3})] = \inf_{\Q \in \mathcal{M}(\mu_1,\mu_3)}\E_\Q[c(S_{t_1},S_{t_3})] < \E_{\mu_1}[u_1^\varepsilon]+\E_{\mu_3}[u_3^\varepsilon]+\varepsilon.
\end{equation}
Note that by Lemma~\ref{lem_positivity_of_H}~(i), inequality \eqref{eq_ineq_ui_itoiii_no1} is equivalent to $H_{(u_i^\varepsilon),(\Delta_i^\varepsilon)}\geq 0$.
By Lemma~\ref{lem_positivity_of_H}~(ii) we also have
\begin{equation}\label{eq_ineq_H_itoiii}
u_1^\varepsilon(x_1)+H_{(u_i^\varepsilon),(\Delta_i^\varepsilon)}(x_2)+u_3^\varepsilon(x_3)+\Delta_1^\varepsilon(x_1)(x_2-x_1)+\Delta_2^\varepsilon(x_1,x_2)(x_3-x_2) \leq c(x_1,x_3)~
\end{equation}
for all $(x_1,x_2,x_3) \in \R^3$.
Then, by Lemma~\ref{lem_contiinuity_H} it follows $H_{(u_i^\varepsilon),(\Delta_i^\varepsilon)}\in L^1(\mu_2)$. We integrate both sides of the inequality \eqref{eq_ineq_H_itoiii} with respect to some $\Q\in \mathcal{M}(\mu_1,\mu_2,\mu_3)$ and obtain
\begin{equation}\label{eq_ineq_H_eps_proof_i_iii}
\E_{\mu_1}[u_1^\varepsilon]+\E_{\mu_2}[H_{(u_i^\varepsilon),(\Delta_i^\varepsilon)}]+\E_{\mu_3}[u_3^\varepsilon]\leq \E_\Q[c(S_{t_1},S_{t_3})].
\end{equation}
As $\Q\in \mathcal{M}(\mu_1,\mu_2,\mu_3)$ was arbitrary, the inequality  \eqref{eq_ineq_H_eps_proof_i_iii} implies together with \eqref{eq_ineq_ui_itoiii} that
\[
\E_{\mu_1}[u_1^\varepsilon]+\E_{\mu_2}[H_{(u_i^\varepsilon),(\Delta_i^\varepsilon)}]+\E_{\mu_3}[u_3^\varepsilon]\leq \inf_{\Q \in \mathcal{M}(\mu_1,\mu_2,\mu_3)}\E_\Q[c(S_{t_1},S_{t_3})] <\E_{\mu_1}[u_1^\varepsilon]+\E_{\mu_3}[u_3^\varepsilon]+\varepsilon,
\]
and thus, as  $H_{(u_i^\varepsilon),(\Delta_i^\varepsilon)}\geq 0$, also
\[
0\leq \E_{\mu_2}[H_{(u_i^\varepsilon),(\Delta_i^\varepsilon)}]< \varepsilon.
\]
Hence, we have
\[
\left|\E_{\mu_1}[u_1^\varepsilon]+\E_{\mu_2}[H_{(u_i^\varepsilon),(\Delta_i^\varepsilon)}]+\E_{\mu_3}[u_3^\varepsilon]-\inf_{\Q \in \mathcal{M}(\mu_1,\mu_2,\mu_3)}\E_\Q[c(S_{t_1},S_{t_3})]\right|<\varepsilon.
\]
Thus, (iii) follows.\\
\framebox{(iii) $\Rightarrow$ (i)}\\
Let (iii) be true, and assume that 
\begin{equation}\label{asu_greater_eps}
\inf_{\Q \in \mathcal{M}(\mu_1,\mu_2,\mu_3)}\E_\Q[c(S_{t_1},S_{t_3})]-\inf_{\Q \in \mathcal{M}(\mu_1,\mu_3)}\E_\Q[c(S_{t_1},S_{t_3})]=2\varepsilon \text{ for some }\varepsilon>0.
\end{equation}
Let $u_1,u_3,\Delta_1 \in \C_b(\R),\Delta_2 \in \C_b(\R^2)$ be as in the assertion of (iii) for $\varepsilon$, then we obtain by Lemma~\ref{lem_positivity_of_H}~(i), due to $H_{(u_i),(\Delta_i)}\geq 0$, that 
\[
u_1(x_1)+u_3(x_3)+\Delta_1(x_1)(x_2-x_1)+\Delta_2(x_1,x_2)(x_3-x_2) \leq c(x_1,x_3) \text{ for all }x_1,x_2,x_3 \in \R,
\]
and thus with $x_2=x_3$
\[
u_1(x_1)+u_3(x_3)+\Delta_1(x_1)(x_3-x_1)\leq c(x_1,x_3) \text{ for all }x_1,x_3 \in \R,
\]
implying also that 
\[
\E_{\mu_1}[u_1]+\E_{\mu_3}[u_3] \leq \inf_{\Q \in \mathcal{M}(\mu_1,\mu_3)}\E_\Q[c(S_{t_1},S_{t_3})].
\]
We then obtain  a contradiction by
\begin{align*}
2\varepsilon &= \inf_{\Q \in \mathcal{M}(\mu_1,\mu_2,\mu_3)}\E_\Q[c(S_{t_1},S_{t_3})]-\inf_{\Q \in \mathcal{M}(\mu_1,\mu_3)}\E_\Q[c(S_{t_1},S_{t_3})]\\
&\leq \inf_{\Q \in \mathcal{M}(\mu_1,\mu_2,\mu_3)}\E_\Q[c(S_{t_1},S_{t_3})]-(\E_{\mu_1}[u_1]+\E_{\mu_3}[u_3])\\
&\leq \left|\inf_{\Q \in \mathcal{M}(\mu_1,\mu_2,\mu_3)}\E_\Q[c(S_{t_1},S_{t_3})]-(\E_{\mu_1}[u_1]+\E_{\mu_2}[H_{(u_i^\varepsilon),(\Delta_i^\varepsilon)}]+\E_{\mu_3}[u_3])\right|+\left|\E_{\mu_2}[H_{(u_i^\varepsilon),(\Delta_i^\varepsilon)}]\right|<2\varepsilon,
\end{align*}
where the last strict inequality follows from \eqref{eq_ineq_requirement_lemma_ii}  and \eqref{eq_ineq_requirement_H_smaller_eps}.
And thus there exists no $\varepsilon>0$ such that \eqref{asu_greater_eps} is true.\\
\framebox{(iii) $\Rightarrow$ (iv)}\\
Let $\varepsilon>0$. According to (iii), there exists some $u_1,u_3, \Delta_1 \in \C_b(\R)$, $\Delta_2 \in \C_b(\R^2)$ such that
\begin{equation}\label{eq_23_eq_1}
\left|\E_{\mu_1}[u_1]+\E_{\mu_2}[H_{(u_i),(\Delta_i)}]+\E_{\mu_3}[u_3]-\inf_{\Q \in \mathcal{M}(\mu_1,\mu_2,\mu_3)}\E_\Q[c(S_{t_1},S_{t_3})]\right|<\frac{\varepsilon}{2}
\end{equation}
and such that
\begin{equation}\label{eq_23_eq_2}
0\leq \E_{\mu_2}[H_{(u_i),(\Delta_i)}] < \frac{\varepsilon}{2}.
\end{equation}
Moreover, we have $H_{(u_i),(\Delta_i)}(x_2)\geq 0$ for all $x_2 \in \R$ which implies by Lemma~\ref{lem_positivity_of_H}~(i)
\[
u_1(x_1)+ u_3(x_3)+\Delta_1(x_1)(x_2-x_1) + \Delta_2(x_1,x_2)(x_3-x_2) \leq c(x_1,x_3) \text{ for all } x_1,x_2, x_3 \in \R,
\]
and thus through setting $x_2 = x_3$ also 
\[
u_1(x_1)+ u_3(x_3)+\Delta_1(x_1)(x_3-x_1) \leq c(x_1,x_3) \text{ for all } x_1, x_3 \in \R.
\]
Moreover, by \eqref{eq_23_eq_1} and by \eqref{eq_23_eq_2} we have 
\begin{align*}
&\left|\E_{\mu_1}[u_1]+\E_{\mu_3}[u_3]-\inf_{\Q \in \mathcal{M}(\mu_1,\mu_2,\mu_3)}\E_\Q[c(S_{t_1},S_{t_3})]\right|\\
\leq &\left|\E_{\mu_1}[u_1]+\E_{\mu_2}[H_{(u_i),(\Delta_i)}]+\E_{\mu_3}[u_3]-\inf_{\Q \in \mathcal{M}(\mu_1,\mu_2,\mu_3)}\E_\Q[c(S_{t_1},S_{t_3})]\right|+\left|\E_{\mu_2}[H_{(u_i),(\Delta_i)}]\right| < \varepsilon.
\end{align*}
\framebox{(iv) $\Rightarrow$ (iii)}\\
Let $\varepsilon>0$ and let $u_1,u_3,\Delta_1 \in \C_b(\R)$ such that \eqref{eq_ineq_requirement_lemma_iii} and \eqref{eq_ineq_requirement_lemma_iii_eps} hold true. Then we set $\Delta_2 :\equiv \Delta_1$ and obtain $H_{(u_i),(\Delta_i)}(x_2)\geq 0$ as a consequence of \eqref{eq_ineq_requirement_lemma_iii}, see {Lemma~\ref{lem_positivity_of_H}~(i)}. We then have by Lemma~\ref{lem_positivity_of_H}~(ii) that 
\[
u_1(x_1)+H_{(u_i),(\Delta_i)}(x_2)+u_3(x_3)+\Delta_1(x_1)(x_3-x_1) \leq c(x_1,x_3) \text{ for all } x_1,x_2,x_3 \in \R.
\]
In particular, Lemma~\ref{lem_contiinuity_H} is applicable showing that $H_{(u_i),(\Delta_i)}\in L^1(\mu_2)$, which implies
\[
\E_{\mu_1}[u_1]+ \E_{\mu_3}[u_3] \leq \E_{\mu_1}[u_1]+\E_{\mu_2}[H_{(u_i),(\Delta_i)}]+\E_{\mu_3}[u_3]\leq \inf_{\Q \in \mathcal{M}(\mu_1,\mu_2,\mu_3)}\E_\Q[c(S_{t_1},S_{t_3})],
\]
and thus \eqref{eq_ineq_requirement_lemma_ii} follows with \eqref{eq_ineq_requirement_lemma_iii_eps}.\\
\framebox{(i) $\Rightarrow$ (v)}\\
Let $\varepsilon>0$ and pick  $u_1 \in \mathcal{C}_b(\R), u_2 \in \mathcal{C}_b(\R), u_3 \in \mathcal{C}_b(\R)$, $\Delta_1 \in \mathcal{C}_b(\R)$, $\Delta_2 \in \mathcal{C}_b(\R^2)$ that satisfy \eqref{eq_ineq_3marginals}. Then, by (i) and by Lemma~\ref{lem_beiglboeck_dual} we have that 
\begin{align}
\E_{\mu_1}[u_1]+ \E_{\mu_2}[u_2]+\E_{\mu_3}[u_3]&\leq \inf_{\Q \in \mathcal{M}(\mu_1,\mu_2,\mu_3)}\E_\Q[c(S_{t_1},S_{t_3})] = \inf_{\Q \in \mathcal{M}(\mu_1,\mu_3)}\E_\Q[c(S_{t_1},S_{t_3})\notag] \\
&\hspace{-4cm}= \sup_{v_i \in L^1(\mu_i),\widetilde{\Delta}_1 \in \mathcal{C}_b(\R)}\bigg\{\E_{\mu_1}[v_1]+\E_{\mu_3}[u_3]~\big|~v_1(x_1)+v_3(x_3)+\widetilde{\Delta}_1(x_1)(x_3-x_1) \label{eq_inf_sup_i_to_v}\\
&\hspace{3.5cm}\leq c(x_1,x_3)~\text{ for all } (x_1,x_3) \in \R^3\bigg\}. \notag
\end{align}
And thus, if $\E_{\mu_1}[u_1]+ \E_{\mu_2}[u_2]+\E_{\mu_3}[u_3]< \inf_{\Q \in \mathcal{M}(\mu_1,\mu_2,\mu_3)}\E_\Q[c(S_{t_1},S_{t_3})]$, then by \eqref{eq_inf_sup_i_to_v} we can find $v_1,v_3 \in \C_b(\R),\widetilde{\Delta}_1 \in \mathcal{C}_b(\R)$ fulfilling \eqref{eq_ineq_2marginals} with \[
\E_{\mu_1}[u_1]+ \E_{\mu_2}[u_2]+\E_{\mu_3}[u_3] = \E_{\mu_1}[v_1]+\E_{\mu_3}[v_3].
\]
In case $\E_{\mu_1}[u_1]+ \E_{\mu_2}[u_2]+\E_{\mu_3}[u_3]= \inf_{\Q \in \mathcal{M}(\mu_1,\mu_2,\mu_3)}\E_\Q[c(S_{t_1},S_{t_3})]$, by \eqref{eq_inf_sup_i_to_v} we can find $v_1,v_3 \in\C_b(\R),\widetilde{\Delta}_1 \in \mathcal{C}_b(\R)$ fulfilling \eqref{eq_ineq_2marginals} and 
\[
\left|\E_{\mu_1}[v_1]+\E_{\mu_3}[v_3]-\inf_{\Q \in \mathcal{M}(\mu_1,\mu_3)}\E_\Q[c(S_{t_1},S_{t_3})]\right|< \varepsilon,
\]
implying the claim.\\
\framebox{(v) $\Rightarrow$ (iv)}\\
Let $\varepsilon>0$ and let $u_1 \in \mathcal{C}_b(\R), u_2 \in \mathcal{C}_b(\R), u_3 \in \mathcal{C}_b(\R)$, $\Delta_1 \in \mathcal{C}_b(\R)$, $\Delta_2 \in \mathcal{C}_b(\R^2)$ such that \eqref{eq_ineq_3marginals} holds true and such that 
\[
\left|\E_{\mu_1}[u_1]+ \E_{\mu_2}[u_2]+\E_{\mu_3}[u_3]-\inf_{\Q \in \mathcal{M}(\mu_1,\mu_2,\mu_3)}\E_\Q[c(S_{t_1},S_{t_3})]\right|< \frac{\varepsilon}{2}.
\]
This choice is possible due to Lemma~\ref{lem_beiglboeck_dual}.
Then according to (v), there exist $v_1,v_3 \in \mathcal{C}_b(\R),\widetilde{\Delta}_1 \in \mathcal{C}_b(\R)$ fulfilling \eqref{eq_ineq_2marginals} such that 
\[
\bigg|\E_{\mu_1}[v_1]+\E_{\mu_3}[v_3]-\left(\E_{\mu_1}[u_1]+\E_{\mu_2}[u_2]+\E_{\mu_3}[u_3]\right)\bigg|<\frac{\varepsilon}{2}.
\]
and thus
\[
\bigg|\E_{\mu_1}[v_1]+\E_{\mu_3}[v_3]-\inf_{\Q \in \mathcal{M}(\mu_1,\mu_2,\mu_3)}\E_\Q[c(S_{t_1},S_{t_3})]\bigg|<{\varepsilon}.
\]
\framebox{(i) $\Rightarrow$ (vi)}\\
Let (i) hold true. Then, we have
\begin{equation}\label{eq_14_eq_1}
\inf_{\Q \in \mathcal{M}(\mu_1,\mu_2,\mu_3)}\E_\Q[c(S_{t_1},S_{t_3})] = \inf_{\Q \in \mathcal{M}(\mu_1,\mu_3)}\E_{\Q}[c(S_{t_1},S_{t_3})].
\end{equation}
According to Lemma~\ref{lem_beiglboeck_dual} there exists some measure $\Q^*\in \mathcal{M}(\mu_1,\mu_2,\mu_3)$ such that 
\[
\E_{\Q^*}[c(S_{t_1},S_{t_3})]=\inf_{\Q \in \mathcal{M}(\mu_1,\mu_2,\mu_3)}\E_\Q[c(S_{t_1},S_{t_3})].
\]
Hence, with \eqref{eq_14_eq_1}, we obtain 
\[
\E_{\Q^*}[c(S_{t_1},S_{t_3})]=\E_{\pi(\Q^*)}[c(S_{t_1},S_{t_3})]= \inf_{\Q \in \mathcal{M}(\mu_1,\mu_3)}\E_{\Q}[c(S_{t_1},S_{t_3})],
\]
and thus
$\pi(\Q^*)\in \mathcal{Q}_c^*(\mu_1,\mu_3)$.
\\
\framebox{(vi) $\Rightarrow$ (i)}\\
Let $\Q^* \in \mathcal{M}(\mu_1,\mu_2,\mu_3)$ such that $\pi(\Q^*) \in \mathcal{Q}_c^*(\mu_1,\mu_3)$. Then, we have that
\begin{equation}\label{eq_14_eq_2}
\begin{aligned}
\inf_{\Q \in \mathcal{M}(\mu_1,\mu_2,\mu_3)}\E_\Q[c(S_{t_1},S_{t_3})]  &\leq \E_{\Q^*}[c(S_{t_1},S_{t_3})]\\
&=\E_{\pi(\Q^*)}[c(S_{t_1},S_{t_3})] \\
&= \inf_{\Q \in \mathcal{M}(\mu_1,\mu_3)}\E_{\Q}[c(S_{t_1},S_{t_3})] \\
&\leq \inf_{\Q \in \mathcal{M}(\mu_1,\mu_2,\mu_3)}\E_\Q[c(S_{t_1},S_{t_3})].
\end{aligned}
\end{equation}
\framebox{(vi) $\Rightarrow$ (vii)}\\
Let $\Q^* \in \mathcal{M}(\mu_1,\mu_2,\mu_3)$ such that $\pi(\Q^*) \in \mathcal{Q}_c^*(\mu_1,\mu_3)$. Then, we have
\begin{equation}\label{eq_iv_v_identity}
\E_{\Q^*}[c(S_{t_1},S_{t_3})]=\E_{\pi(\Q^*)}[c(S_{t_1},S_{t_3})]=\inf_{\Q \in \mathcal{M}(\mu_1,\mu_3)}\E_{\Q}[c(S_{t_1},S_{t_3})].
\end{equation}
\framebox{(vii) $\Rightarrow$ (vi)}\\
Let $\Q^* \in \mathcal{M}(\mu_1,\mu_2,\mu_3)$ such that $\E_{\Q^*}[c(S_{t_1},S_{t_3})]=\inf_{\Q \in \mathcal{M}(\mu_1,\mu_3)}\E_{\Q}[c(S_{t_1},S_{t_3})]$. Then, we have $\E_{\Q^*}[c(S_{t_1},S_{t_3})]=\E_{\pi(\Q^*)}[c(S_{t_1},S_{t_3})]$, and therefore with \eqref{eq_iv_v_identity} it follows $\pi(\Q^*) \in \mathcal{Q}_c^*(\mu_1,\mu_3)$.\\
\framebox{(vi) $\Leftrightarrow$ (viii)}\\
This follows directly when disintegrating $\Q\in \mathcal{M}(\mu_1,\mu_2,\mu_3)$ from the disintegration theorem for probability measures, see, e.g. \cite[Theorem 1]{chang1997conditioning} or \cite[Theorem 5.3.1]{ambrosio2005gradient}.
\end{proof}
{
\begin{proof}[Proof of Proposition~\ref{prop_characterization_mu_2}]{~}
\begin{itemize}
\item[(i)]
Before proving the assertion we note that, according to  \cite[Definition 6.8~(iv) and Theorem 6.9]{villani2009optimal}, for $d\in \N$,
the convergence $\PP^{(n)} \rightarrow \PP$   of a sequence $\left(\PP^{(n)}\right)_{n\in \N} \subseteq \mathcal{P}(\R^d)$ to some limit $\PP \in \mathcal{P}(\R^d)$ in the Wasserstein $1$-topology is equivalent to
\begin{equation}\label{eq_wassersteinspace_convergence}
\lim_{n\rightarrow\infty}\int_{\R^d} f(x)~\PP^{(n)}( \D  x) =\int_{\R^d} f(x)~\PP( \D  x) \text{ for all } f \in \Clin(\R^d).
\end{equation}

Let $\left(\mu_2^{(n)}\right)_{n \in \N} \subseteq \mathcal{I}$ such that $\mu_2^{(n)} \rightarrow \mu_2 \in \mathcal{P}(\R)$  as $n \rightarrow \infty$ w.r.t.\,the Wasserstein $1$-distance. Then, we observe $\mu_1 \preceq \mu_2 \preceq \mu_3$ since the Wasserstein-convergence implies by \eqref{eq_wassersteinspace_convergence} that for all $t \in \R$ we have 
 $$\lim_{n \rightarrow \infty}\int_\R \max\{x-t,0\}~ \mu_2^{(n)}(\D x) = \int_\R \max\{x-t,0\}~ \mu_2(\D x) 
$$
 which characterizes the convex order by, e.g.,  \cite[Theorem 2.1]{alfonsi2019sampling}. 

According to Proposition~\ref{prop_intermediate_projection}~(vi), for all $n \in \N$ there exists some $\Q^{(n)} \in \mathcal{M}(\mu_1,\mu_2^{(n)},\mu_2)$ such that $\pi(\Q^{(n)}) \in \mathcal{Q}_c^*(\mu_1,\mu_3)$. By \cite[Proposition 2.2.]{neufeld2021stability}, the set $\mathcal{Q}_c^*(\mu_1,\mu_3)$ is compact. Hence, there exists a subsequence $\left(\pi(\Q^{(n_k)})\right)_{k \in \N}$ such that $\pi(\Q^{(n_k)}) \rightarrow \Q^{*} \in \mathcal{Q}_c^*(\mu_1,\mu_3)$ as $k \rightarrow \infty$ w.r.t.\,the Wasserstein $1$-distance.

Note that, since $\mu_2^{(n_k)} \rightarrow \mu_2$ as $k \rightarrow \infty$ w.r.t.\,the Wasserstein $1$-distance and hence weakly, we have by Prokhorov's Theorem that $\left\{ \mu_2^{(n_k)},k \in \N \right\}$ is tight. 
We now denote by 
\[
\widetilde{\Pi}:= \Pi\left(\mu_1,\left\{ \mu_2^{(n_k)},k \in \N \right\}, \mu_2\right)
\]
the set of probability measures on $\R^3$ with first marginal $\mu_1$, third marginal $\mu_3$ and second marginal in $\left\{ \mu_2^{(n_k)},k \in \N \right\}$. Then, since $\left\{ \mu_2^{(n_k)},k \in \N \right\}$ is tight, for all $\varepsilon>0$ there exists some compact set $K_{\varepsilon,2}\subset \R$ such that for all $k \in \N$ we have $\mu_2^{(n_k)}(\R \backslash K_{\varepsilon,2} )<
\varepsilon$. Moreover, since $\mu_1$ and $\mu_3$ are probability measures, we can also find compact sets $K_{\varepsilon,1}\subset \R,K_{\varepsilon,3}\subset \R$ such that 
\[
\mu_1(\R \backslash K_{\varepsilon,1}) <\varepsilon,\qquad \mu_3(\R \backslash K_{\varepsilon,3}) <\varepsilon.
\]
Then, for all $\pi \in \widetilde{\Pi}$ we have 
\[
\pi \left(\R^3 \backslash (K_{\varepsilon,1}\times K_{\varepsilon,2} \times K_{\varepsilon,3})\right) \leq \mu_1(\R \backslash K_{\varepsilon,1})+\sup_{k \in \N} \mu_2^{(n_k)}(\R \backslash K_{\varepsilon,2})+\mu_3(\R \backslash K_{\varepsilon,3})< 3 \varepsilon.
\]
Hence $\widetilde{\Pi}$ is tight. Thus, according to Prokhorov's theorem there exists a subsequence $\left(\Q^{({n_k}_\ell)}\right)_{\ell \in \N}$ such that $\Q^{({n_k}_\ell)} \rightarrow \Q^{**} \in \mathcal{P}(\R^3)$ weakly as $\ell \rightarrow \infty$. We now show that $\Q^{**} \in \mathcal{M}(\mu_1,\mu_2,\mu_3)$. To this end, recall that by definition $\R^3 \ni (x_1,x_2,x_3) \mapsto S_{t_i}(x_1,x_2,x_3)= x_i \in \R$ for $i=1,2,3$. Then, we have by the continuous mapping theorem, that $\Q^{({n_k}_\ell)}\circ S_{t_i}^{-1} \rightarrow \Q^{**} \circ S_{t_i}^{-1}$ weakly as $\ell \rightarrow \infty$ and  $\Q^{({n_k}_\ell)}\circ S_{t_i}^{-1} = \mu_i^{({n_k}_\ell)} \rightarrow \mu_i$ weakly as $\ell \rightarrow \infty$ for $i=1,2,3$. Therefore, $\Q^{**}$ possesses marginals $\mu_1,\mu_2,\mu_3$. Further, as $\mu_2^{({n_k}_\ell)} \rightarrow \mu_2$ w.r.t.\,the Wasserstein $1$-distance as $\ell \rightarrow \infty$, we have by \eqref{eq_wassersteinspace_convergence} that 
\begin{align*}
&\lim_{\ell \rightarrow \infty} \int_{\R^3} \left( |x_1|+|x_2|+|x_3| \right)~ \Q^{({n_k}_\ell)}( \D  x_1, \D  x_2, \D  x_3) \\ 
&=\int_{\R} |x_1| ~\mu_1(\D x_1)+\lim_{\ell \rightarrow \infty} \int_\R |x_2| ~\mu_2^{({n_k}_\ell)}(\D x_2)+\int_{\R}|x_3| ~\mu_3(\D x_3) \\
&= \int_{\R} |x_1|~\mu_1(\D x_1)+\int_\R |x_2| ~\mu_2(\D x_2)+\int_{\R}|x_3| ~\mu_1(\D x_1)\\&= \int_{\R^3} \left( |x_1|+|x_2|+|x_3|\right) ~\Q^{**}( \D  x_1, \D  x_2, \D  x_3).
\end{align*}
Now,  \cite[Definition 6.8~(i)]{villani2009optimal} implies together with the weak convergence that $\Q^{({n_k}_\ell)} \rightarrow \Q^{**}$ w.r.t.\,the Wasserstein $1$-distance as $\ell \rightarrow \infty$.
Let $\Delta_1 \in \C_b(\R), \Delta_2 \in \C_b(\R^2)$ and note that both maps
\[
\R^3 \ni (x_1,x_2,x_3) \mapsto \Delta_1(x_1)(x_2-x_1),
\]
and
\[
\R^3 \ni (x_1,x_2,x_3) \mapsto \Delta_2(x_1,x_2)(x_3-x_2),
\]
are contained in $\Clin(\R^3)$,
which implies by \eqref{eq_wassersteinspace_convergence} that
\[
0= \lim_{\ell \rightarrow \infty} \int_{\R^3} \Delta_1(x_1)(x_2-x_1)~ \Q^{({n_k}_\ell)}( \D  x_1, \D  x_2, \D  x_3) = \int_{\R^3} \Delta_1(x_1)(x_2-x_1) ~\Q^{**}( \D  x_1, \D  x_2, \D  x_3),
\]
and 
\[
0= \lim_{\ell \rightarrow \infty} \int_{\R^3} \Delta_2(x_1,x_2)(x_3-x_2) ~\Q^{({n_k}_\ell)}( \D  x_1, \D  x_2, \D  x_3) = \int_{\R^3} \Delta_2(x_1,x_2)(x_3-x_2)~ \Q^{**}( \D  x_1, \D  x_2, \D  x_3). 
\]
This means we have shown that $\Q^{**} \in \mathcal{M}(\mu_1,\mu_2,\mu_3)$.

Then, it follows, since $c \in \Clin(\R^3)$, by the characterization in \eqref{eq_wassersteinspace_convergence} and with the Wasserstein $1$-convergences $\lim_{\ell \rightarrow \infty}\Q^{({n_k}_\ell)} = \Q^{**}$ and $\lim_{\ell \rightarrow \infty}\pi\left( \Q^{({n_k}_\ell)}\right)= \Q^{*}\in  \mathcal{Q}_c^*(\mu_1,\mu_3)$ that
\begin{align*}
\inf_{\Q \in \mathcal{M}(\mu_1,\mu_3)} \E_{\Q}[c(S_{t_1},S_{t_3})]&=\E_{\Q^*}[c(S_{t_1},S_{t_3})] \\
&= \lim_{\ell \rightarrow  \infty} \E_{\pi \left(\Q^{({n_k}_\ell)}\right)}[c(S_{t_1},S_{t_3})]\\
&= \lim_{\ell \rightarrow  \infty} \E_{\Q^{({n_k}_\ell)}}[c(S_{t_1},S_{t_3})]\\
&= \E_{\Q^{**}}[c(S_{t_1},S_{t_3})]= \E_{\pi \left(\Q^{**}\right)}[c(S_{t_1},S_{t_3})].
\end{align*}
This means $\pi \left(\Q^{**}\right) \in \mathcal{Q}_c^*(\mu_1,\mu_3)$ where $\Q^{**} \in \mathcal{M}(\mu_1,\mu_2,\mu_3)$ and hence, by Proposition~\ref{prop_intermediate_projection}~(iv), there is no improvement induced by the intermediate marginal $\mu_2$, i.e., $\mu_2 \in \mathcal{I}$.
\item[(ii)]
Let $\nu, \nu' \in \mathcal{I}$ and let $\underline{\Q} \in \mathcal{M}(\mu_1,\nu,\mu_3) , \underline{\Q}' \in \mathcal{M}(\mu_1,\nu',\mu_3)$ be minimizers of the associated MOT-problems, i.e.,
\begin{align*}
&\E_{\underline{\Q}}[c(S_{t_1},S_{t_3})] = \inf_{\Q \in \mathcal{M}(\mu_1,\nu,\mu_3)}\E_{\Q}[c(S_{t_1},S_{t_3})]= \inf_{\Q \in \mathcal{M}(\mu_1,\mu_3)}\E_{\Q}[c(S_{t_1},S_{t_3})], ~\\
&\E_{\underline{\Q}'}[c(S_{t_1},S_{t_3})] = \inf_{\Q \in \mathcal{M}(\mu_1,\nu',\mu_3)}\E_{\Q}[c(S_{t_1},S_{t_3})]=\inf_{\Q \in \mathcal{M}(\mu_1,\mu_3)}\E_{\Q}[c(S_{t_1},S_{t_3})].
\end{align*}
Let $\lambda \in(0,1)$ and define   $\mu_2 := \lambda \nu + (1-\lambda) \nu' \in \mathcal{P}(\R)$ and $\Q^*:= \lambda \underline{\Q} + (1-\lambda)\underline{\Q}' \in \mathcal{P}(\R^3)$. Then, we have $\Q^*\in \mathcal{M}(\mu_1,\mu_2, \mu_3)$ and hence
\begin{align*}
\inf_{\Q \in \mathcal{M}(\mu_1,\mu_2,\mu_3)}\E_{\Q}[c(S_{t_1},S_{t_3})] &\leq \E_{\Q^*}[c(S_{t_1},S_{t_3})] \\
&= \lambda \E_{\underline{\Q}}[c(S_{t_1},S_{t_3})] +(1-\lambda)\E_{\underline{\Q}'} [c(S_{t_1},S_{t_3})]  \\
&=  \lambda\inf_{\Q \in \mathcal{M}(\mu_1,\mu_3)} \E_{\Q}[c(S_{t_1},S_{t_3})] +(1-\lambda)\inf_{\Q \in \mathcal{M}(\mu_1,\mu_3)}\E_{{\Q}} [c(S_{t_1},S_{t_3})]  \\
&=\inf_{\Q \in \mathcal{M}(\mu_1,\mu_3)}\E_{{\Q}} [c(S_{t_1},S_{t_3})] 
\end{align*}
implying that $\inf_{\Q \in \mathcal{M}(\mu_1,\mu_2,\mu_3)}\E_{\Q}[c(S_{t_1},S_{t_3})]=\inf_{\Q \in \mathcal{M}(\mu_1,\mu_3)}\E_{\Q}[c(S_{t_1},S_{t_3})]$ and hence $\mu_2 \in \mathcal{I}$.
\end{itemize}
\end{proof}
}
\begin{proof}[Proof of Lemma~\ref{lem_equality_f_dualities_H}]
First, let $u_1,u_3 \in \C_b(\R)$, $u_2 \in L^1(\mu_2)$ and $\Delta_1 \in \C_b(\R), \Delta_2 \in \C_b(\R^2)$ such that 
\[
\sum_{i=1}^3 u_i(x_i)+\Delta_1(x_1)(x_{2}-x_1)+\Delta_2(x_1,x_3)(x_{3}-x_2)\leq c(x_1,x_3)~\text{ for all } (x_1,x_2,x_3)\in \R^3.
\]
Then we have by Lemma~\ref{lem_positivity_of_H}~(i) that
\begin{equation}\label{eq_ineq_u2_H}
H_{(u_i),(\Delta_i)}(x_2) \geq u_2(x_2) \text{ for all } x_2 \in \R.
\end{equation}
Moreover, note that by Lemma~\ref{lem_positivity_of_H}~(ii)
\begin{equation}\label{eq_ineq_H}
u_1(x_1) +H_{(u_i),(\Delta_i)}(x_2) + u_3(x_3)+ \Delta_1(x_1)(x_{2}-x_1)+\Delta_2(x_1,x_3)(x_{3}-x_2) \leq c(x_1,x_3) 
\end{equation}
for all $x_1,x_2,x_3\in \R^3$, and by \eqref{eq_ineq_u2_H} also that 
\begin{equation}\label{eq_H_makesitgreater}
\sum_{i=1}^3\E_{\mu_i}[u_i]\leq \E_{\mu_1}[u_1]+\E_{\mu_2}\left[H_{(u_i),(\Delta_i)}\right]+\E_{\mu_3}[u_3]
\end{equation}
which in turn  implies by using Lemma~\ref{lem_intermediate_trading} and Lemma~\ref{lem_contiinuity_H} that\footnote{Here we apply a variant of Lemma~\ref{lem_beiglboeck_dual}, where we only substitute two of the three $L^1$-integrand by functions from $\C_b(\R)$. This is possible due to the argumentation from \cite[Appendix]{beiglbock2013model} showing that for all $u\in L^1(\mu_i)$ and for all $\varepsilon>0$ there exists some $\widetilde{u} \in \C_b(\R)$ such that $\widetilde{u} \leq u$ and $\E_{\mu_i}[u]-\E_{\mu_i}[\widetilde{u}]<\varepsilon$.}
\begin{equation}
\begin{aligned}
&\sup_{u_i \in L^1(\mu_i),\Delta_i \in \mathcal{C}_b(\R^i)}\bigg\{\sum_{i=1}^3\E_{\mu_i}[u_i]~\bigg|~\sum_{i=1}^3 u_i(x_i)+\Delta_1(x_1)(x_{2}-x_1)+\Delta_2(x_1,x_3)(x_{3}-x_2)\notag \\
&\hspace{8cm}\leq c(x_1,x_3)~\text{ for all } (x_1,x_2,x_3) \in \R^3\bigg\} \notag \\
=&\sup_{u_1,u_3 \in \C_b(\R),~u_2 \in L^1(\mu_2), \atop \Delta_i \in \mathcal{C}_b(\R^i)}\bigg\{\sum_{i=1}^3\E_{\mu_i}[u_i]~\bigg|~\sum_{i=1}^3 u_i(x_i)+\Delta_1(x_1)(x_{2}-x_1)+\Delta_2(x_1,x_3)(x_{3}-x_2)\notag \\
&\hspace{8cm}\leq c(x_1,x_3)~\text{ for all } (x_1,x_2,x_3) \in \R^3\bigg\} \notag \\
\leq &\sup_{u_1,u_3 \in \C_b(\R), u_2 \equiv H_{(u_i),(\Delta_i)}, \atop \Delta_i \in \mathcal{C}_b(\R^i)}\bigg\{\sum_{i=1}^3\E_{\mu_i}[u_i]~\bigg|~\sum_{i=1}^3 u_i(x_i)+\Delta_1(x_1)(x_{2}-x_1)+\Delta_2(x_1,x_3)(x_{3}-x_2)\notag \\
&\hspace{8cm}\leq c(x_1,x_3)~\text{ for all } (x_1,x_2,x_3) \in \R^3\bigg\}. \notag 
\end{aligned}
\end{equation}
Note that, as stated in Lemma~\ref{lem_positivity_of_H}~(ii), inequality \eqref{eq_ineq_H} is always fulfilled for all $u_i \in L^1(\mu_i)$ for $i=1,3$ and $\Delta_1 \in \C_b(\R), \Delta_2 \in \C_b(\R^2)$.
This implies that
\begin{equation}\label{eq_last_equality_proof_ineq_H}
\begin{aligned}
&\sup_{u_1,u_3 \in \C_b(\R), u_2 \equiv H_{(u_i),(\Delta_i)}, \atop \Delta_i \in \mathcal{C}_b(\R^i)}\bigg\{\sum_{i=1}^3\E_{\mu_i}[u_i]~\bigg|~\sum_{i=1}^3 u_i(x_i)+\Delta_1(x_1)(x_{2}-x_1)+\Delta_2(x_1,x_3)(x_{3}-x_2) \\
&\hspace{8cm}\leq c(x_1,x_3)~\text{ for all } (x_1,x_2,x_3) \in \R^3\bigg\}  \\
= &\sup_{u_1,u_3 \in \C_b(\R), \atop \Delta_i \in \mathcal{C}_b(\R^i)}\bigg\{\E_{\mu_1}[u_1]+\E_{\mu_2}\left[H_{(u_i),(\Delta_i)}\right]+\E_{\mu_3}[u_3]~\bigg\}.  
\end{aligned}
\end{equation}
This shows one inequality of \eqref{eq_lem_equality_f_dualities_H_0}. The other inequality follows directly by using \eqref{eq_last_equality_proof_ineq_H} and by Lemma~\ref{lem_contiinuity_H}. Indeed, we have
\begin{align*}
&\sup_{u_1,u_3 \in \C_b(\R), \atop \Delta_i \in \mathcal{C}_b(\R^i)}\bigg\{\E_{\mu_1}[u_1]+\E_{\mu_2}\left[H_{(u_i),(\Delta_i)}\right]+\E_{\mu_3}[u_3]~\bigg\}.  \\
=&\sup_{u_1,u_3 \in \C_b(\R), u_2 \equiv H_{(u_i),(\Delta_i)}, \atop \Delta_i \in \mathcal{C}_b(\R^i)}\bigg\{\sum_{i=1}^3\E_{\mu_i}[u_i]~\bigg|~\sum_{i=1}^3 u_i(x_i)+\Delta_1(x_1)(x_{2}-x_1)+\Delta_2(x_1,x_3)(x_{3}-x_2) \\
&\hspace{8cm}\leq c(x_1,x_3)~\text{ for all } (x_1,x_2,x_3) \in \R^3\bigg\}  \\
\leq &\sup_{u_i \in L^1(\mu_i),\Delta_i \in \mathcal{C}_b(\R^i)}\bigg\{\sum_{i=1}^3\E_{\mu_i}[u_i]~\bigg|~\sum_{i=1}^3 u_i(x_i)+\Delta_1(x_1)(x_{2}-x_1)+\Delta_2(x_1,x_3)(x_{3}-x_2)\notag \\
&\hspace{8cm}\leq c(x_1,x_3)~\text{ for all } (x_1,x_2,x_3) \in \R^3\bigg\}.
\end{align*}
\end{proof}

\begin{proof}[Proof of Proposition~\ref{prop_degree_of_improvement}]
Note that by Lemma~\ref{lem_positivity_of_H}~(i), $H_{(u_i),(\Delta_i)}(x_2) \geq 0 $ for all $x_2 \in \R$, is equivalent to 
\begin{equation}\label{eq_H_equivalent}
u_1(x_1)+u_3(x_3) + \Delta_1(x_1)(x_2-x_1)+\Delta_2(x_1,x_2)(x_3-x_2) \leq c(x_1,x_3) \text{ for all } (x_1,x_2,x_3) \in \R^3.
\end{equation}
 We then have by Lemma~\ref{lem_beiglboeck_dual} that
 \begin{equation}\label{eq_diff_proof_eq_hequiv_1}
 \begin{aligned}
 &\inf_{\Q \in \mathcal{M}(\mu_1,\mu_2,\mu_3)}\E_\Q[c(S_{t_1},S_{t_3})]-\inf_{\Q \in \mathcal{M}(\mu_1,\mu_3)}\E_\Q[c(S_{t_1},S_{t_3})]\\
 =&  \inf_{\Q \in \mathcal{M}(\mu_1,\mu_2,\mu_3)} \E_\Q[c(S_{t_1},S_{t_3})] -  \sup_{u_1,u_3 \in \C_b(\R) \atop \Delta_1 \in \mathcal{C}_b(\R),\Delta_2 \in \mathcal{C}_b(\R^2): H_{(u_i),(\Delta_i)} \geq 0}\left(\E_{\mu_1}[u_1(S_{t_1})]+\E_{\mu_3}[u_3(S_{t_3})] \right)\\
  =&  \inf_{u_1,u_3 \in \C_b(\R) \atop \Delta_1 \in \mathcal{C}_b(\R),\Delta_2 \in \mathcal{C}_b(\R^2): H_{(u_i),(\Delta_i)} \geq 0} \inf_{\Q \in \mathcal{M}(\mu_1,\mu_2,\mu_3)}  \bigg(\E_\Q[c(S_{t_1},S_{t_3})] -  \-\left(\E_{\mu_1}[u_1(S_{t_1})]-\E_{\mu_3}[u_3(S_{t_3})] \right)\bigg).
  \end{aligned}
   \end{equation}
We now apply Lemma~\ref{lem_equality_f_dualities_H} and obtain by \eqref{eq_diff_proof_eq_hequiv_1} that
\begin{align*}
&\inf_{\Q \in \mathcal{M}(\mu_1,\mu_2,\mu_3)}\E_\Q[c(S_{t_1},S_{t_3})]-\inf_{\Q \in \mathcal{M}(\mu_1,\mu_3)}\E_\Q[c(S_{t_1},S_{t_3})]\\
 =&\inf_{u_1,u_3 \in \C_b(\R) \atop \Delta_1 \in \mathcal{C}_b(\R),\Delta_2 \in \mathcal{C}_b(\R^2): H_{(u_i),(\Delta_i)} \geq 0}  \sup_{v_1,v_2 \in \C_b(\R)\atop \widetilde{\Delta}_1 \in \mathcal{C}_b(\R),\widetilde{\Delta}_2 \in \mathcal{C}_b(\R^2)} \bigg( \E_{\mu_1}[v_1(S_{t_1})]+\E_{\mu_2}\left[ H_{(v_i),(\widetilde{\Delta}_i)}\right]+\E_{\mu_3}[v_3(S_{t_3})] \\
 &\hspace{10cm} -\E_{\mu_1}[u_1(S_{t_1})]-\E_{\mu_3}[u_3(S_{t_3})]\bigg)\\
 =&\inf_{u_1,u_3 \in \C_b(\R) \atop \Delta_1 \in \mathcal{C}_b(\R),\Delta_2 \in \mathcal{C}_b(\R^2): H_{(u_i),(\Delta_i)} \geq 0}  \sup_{v_1,v_2 \in \C_b(\R)\atop \widetilde{\Delta}_1 \in \mathcal{C}_b(\R),\widetilde{\Delta}_2 \in \mathcal{C}_b(\R^2)} \bigg(\E_{\mu_2}\left[ H_{(u_i-v_i),(\Delta_i-\widetilde{\Delta}_i)}\right]-\E_{\mu_1}[v_1(S_{t_1})]-\E_{\mu_3}[v_3(S_{t_3})]\bigg).
 %\\
%\leq&\inf_{u_1,u_3 \in \C_b(\R) \atop \Delta_1 \in \mathcal{C}_b(\R),\Delta_2 \in \mathcal{C}_b(\R^2): H_{(u_i),(\Delta_i)} \geq 0}  \sup_{v_1,v_2 \in \C_b(\R)\atop \widetilde{\Delta}_1 \in \mathcal{C}_b(\R),\widetilde{\Delta}_2 \in \mathcal{C}_b(\R^2)} \bigg(\E_{\Q}\left[c(S_{t_1},S_{t_3})-\left(u_1(S_{t_1})-v_1(S_{t_1})\right)-\left(u_3(S_{t_3})-v_3(S_{t_3})\right)\right] \\
%&\hspace{9cm}-\E_{\mu_1}[v_1(S_{t_1})]-\E_{\mu_3}[v_3(S_{t_3})]\bigg)\\
%  =   &\inf_{\Q \in \mathcal{M}(\mu_1,\mu_2,\mu_3)} \inf_{u_1,u_3 \in \C_b(\R) \atop \Delta_1 \in \mathcal{C}_b(\R),\Delta_2 \in \mathcal{C}_b(\R^2): H_{(u_i),(\Delta_i)} \geq 0} \E_{\Q}\left[c(S_{t_1},S_{t_3})-u_1(S_{t_1})-u_3(S_{t_3})\right]\\
%  =& \inf_{\Q \in \mathcal{M}(\mu_1,\mu_2,\mu_3)}\E_{\Q}\left[c(S_{t_1},S_{t_3})\right] + \inf_{u_1,u_3 \in \C_b(\R) \atop \Delta_i \in \mathcal{C}_b(\R^i)}\left\{-\E_{\mu_1}\left[u_1\right]-\E_{\mu_3}\left[u_3\right]~\middle|~\eqref{eq_H_equivalent}\right\}\\
%    =& \inf_{\Q \in \mathcal{M}(\mu_1,\mu_2,\mu_3)}\E_{\Q}\left[c(S_{t_1},S_{t_3})\right] -  \sup_{u_1,u_3 \in \C_b(\R) \atop \Delta_i \in \mathcal{C}_b(\R^i)}\left\{\E_{\mu_1}\left[u_1\right]+\E_{\mu_3}\left[u_3\right]~\middle|~\eqref{eq_H_equivalent}\right\}\\
%    =&   \inf_{\Q \in \mathcal{M}(\mu_1,\mu_2,\mu_3)}\E_\Q[c(S_{t_1},S_{t_3})]-\inf_{\Q \in \mathcal{M}(\mu_1,\mu_3)}\E_\Q[c(S_{t_1},S_{t_3})].
 \end{align*}
\end{proof}

\begin{proof}[Proof of Remark~\ref{rem_nmarginals}]
This follows analogue to the proof of Proposition~\ref{prop_degree_of_improvement}.
\end{proof}

\begin{proof}[Proof of Proposition~\ref{prop_mot_finite_options}]
We show the assertion only for $\underline{P}_{1,2,3}(c)$, the case $\underline{P}_{1,3}(c)$ follows completely analogue. To this end, note that the inequality $\underline{P}_{1,2,3}(c) \leq \inf_{\Q\in \mathcal{M}(\mu_1^*,\mu_2^*,\mu_3^*)}\E_\Q[c(S_{t_1},S_{t_3})]$ follows by definition of the super-replication functional and we only need to show the converse inequality. 
Since all joint distributions of the marginals $\mu_i^*$, $i=1,2,3$ are supported on the set $\Xi:= \left\{(K_{1,j_1},K_{2,j_2},K_{3,j_3})~\middle|~j_i \in \{1,\dots,m_i\} \text{ for } i =1,2,3\right\}$, we have
\begin{align*}
\inf_{\Q\in \mathcal{M}(\mu_1^*,\mu_2^*,\mu_3^*)}\E_\Q[c(S_{t_1},S_{t_3})]&=\inf_{\Q\in \mathcal{M}(\mu_1^*,\mu_2^*,\mu_3^*)} \left\{\E_\Q[c(S_{t_1},S_{t_3})]~\middle|~~\Q(\Xi)=1\right\}.
\end{align*}
With an application of  Lemma~\ref{lem_xi_dual} it follows that 
\begin{align*}
&\inf_{\Q\in \mathcal{M}(\mu_1^*,\mu_2^*,\mu_3^*)} \left\{\E_\Q[c(S_{t_1},S_{t_3})]~\middle|~~\Q(\Xi)=1\right\}\\
= &\sup_{u_i \in \Clin(\R), \atop \Delta_i \in \mathcal{C}_b(\R^i)}\bigg\{\sum_{i=1}^3\E_{\mu_i^*}[u_i]~\bigg|~\sum_{i=1}^3u_i(x_i)+\sum_{i=1}^{2}\Delta_i(x_1,x_i)(x_{i+1}-x_i)\notag \\
&\hspace{6.5cm}\leq c(x_1,x_3)~\text{ for all } (x_1,x_2,x_3) \in \Xi \bigg\}.
\end{align*}
Lemma~\ref{lem_finite_options} now implies that we can switch to integrands from the class $${\mathcal{S}}_i :=\left\{u:\R \rightarrow \R~\middle|~u(x)=d+\sum_{j=0}^{m_i}\lambda_j (x-K_{i,j})^++\Delta_0(x-S_0) \text{ for some } d, \Delta_0 , \lambda_j \in \R \right\}$$ for $i=1,2,3$.
 This means we have
\begin{align*}
&\sup_{u_i \in \Clin(\R), \atop \Delta_i \in \mathcal{C}_b(\R^i)}\bigg\{\sum_{i=1}^3\E_{\mu_i^*}[u_i]~\bigg|~\sum_{i=1}^3u_i(x_i)+\sum_{i=1}^{2}\Delta_i(x_1,x_i)(x_{i+1}-x_i)\notag \\
&\hspace{6.5cm}\leq c(x_1,x_3)~\text{ for all } (x_1,x_2,x_3) \in \Xi \bigg\} \\
\leq  &\sup_{u_i \in {\mathcal{S}}_i,\atop \Delta_i \in \mathcal{C}_b(\R^i)}\bigg\{\sum_{i=1}^3\E_{\mu_i^*}[u_i]~\bigg|~\sum_{i=1}^3u_i(x_i)+\sum_{i=1}^{2}\Delta_i(x_1,x_i)(x_{i+1}-x_i)\notag \\
&\hspace{6.5cm}\leq c(x_1,x_3)~\text{ for all } (x_1,x_2,x_3) \in \Xi \bigg\}.
\end{align*}
 By taking into account that $\E_{\mu_i}[S_{t_i}]=S_0$ and that $\E_{\mu_i}[(S_{t_i}-K_{i,j})^+]=\Pi_{i,j}$ for all $i=1,2,3$ $j=0,\dots,m_i$ we obtain
\begin{align*}
&\sup_{u_i \in {\mathcal{S}}_i, \atop \Delta_i \in \mathcal{C}_b(\R^i)}\bigg\{\sum_{i=1}^3\E_{\mu_i^*}[u_i]~\bigg|~\sum_{i=1}^3u_i(x_i)+\sum_{i=1}^{2}\Delta_i(x_1,x_i)(x_{i+1}-x_i)\leq c(x_1,x_3)~\text{ for all } (x_1,x_2,x_3) \in  \Xi \bigg\} \\
&= \sup_{d_i,\lambda_{i,j}, \Delta_0^i \in \R, \atop \Delta_i \in \mathcal{C}_b(\R^i)}\bigg\{\sum_{i=1}^3\left(d_i+\sum_{j=0}^{m_i}\lambda_{i,j}\Pi_{i,j}\right)~\bigg|~\sum_{i=1}^3\left(d_i+\sum_{j=0}^{m_i}\lambda_{i,j}(x_i-K_{i,j})^++\Delta_0^i(x_i-S_0)\right)\notag \\
&\hspace{5.5cm}+\sum_{i=1}^{2}\Delta_i(x_1,x_i)(x_{i+1}-x_i)\leq c(x_1,x_3)~\text{ for all } (x_1,x_2,x_3) \in  \Xi\bigg\}\\
&= \sup_{d,\lambda_{i,j}, \Delta_0 \in \R,\atop \Delta_i \in \mathcal{C}_b(\R^i)}\bigg\{d+ \sum_{i=1}^3\sum_{j=0}^{m_i}\lambda_{i,j}\Pi_{i,j}~\bigg|~d+\sum_{i=1}^3\sum_{j=0}^{m_i}\lambda_{i,j}(x_i-K_{i,j})^++\Delta_0(x_1-S_0) \\
&\hspace{5.5cm}+\sum_{i=1}^{2}\Delta_i(x_1,x_i)(x_{i+1}-x_i) \leq c(x_1,x_3)~\text{ for all } (x_1,x_2,x_3) \in \Xi\bigg\}\\
&=\underline{P}_{1,2,3}(c).
\end{align*}
\end{proof}

\begin{proof}[Proof of Corollary~\ref{cor_improvement_call}]
The first equality follows directly by the definition of $\underline{P}_{1,3}(c)$ and $\underline{P}_{1,2,3}(c)$ and by Lemma~\ref{lem_financial_intermediate}. The second equality follows by Proposition~\ref{prop_mot_finite_options} and Proposition~\ref{prop_degree_of_improvement}.
\end{proof}

{
\begin{proof}[Proof of Proposition~\ref{prop_intermediate_Tu_Td}] First assume that there is no improvement, i.e., the equality $$
\inf_{\Q \in \mathcal{M}(\mu_1,\mu_2,\mu_3)}\E_\Q[c(S_{t_1},S_{t_3})]=\inf_{\Q \in \mathcal{M}(\mu_1,\mu_3)}\E_\Q[c(S_{t_1},S_{t_3})]
$$ holds. Then, according to Proposition~\ref{prop_intermediate_projection}~(vi), there exists some probability measure $\Q \in \mathcal{M}(\mu_1,\mu_2,\mu_3)$ with $\pi(\Q)=\Q^* \in \mathcal{Q}_c^*(\mu_1,\mu_3)$. In particular, under $\Q$, given knowledge of $S_{t_1}$ and $S_{t_2}$, the law of $S_{t_3}$ is supported on $\{T_d(S_{t_1}),T_u(S_{t_1})\}$, and we have
\[
p(S_{t_1},S_{t_2}):= \Q\left(S_{t_3}=T_u(S_{t_1})~|~S_{t_1},S_{t_2}\right)=1-\Q\left(S_{t_3}=T_d(S_{t_1})~|~S_{t_1},S_{t_2}\right)
\]
The martingale property implies
\[
S_{t_2} = \E_\Q[S_{t_3}~|~S_{t_1},S_{t_2}]= T_u(S_{t_1})p(S_{t_1},S_{t_2})+T_d(S_{t_1})\left(1-p(S_{t_1},S_{t_2})\right)
\]
Hence, we have $\Q$-almost surely that
\[
p(S_{t_1},S_{t_2})= \frac{S_{t_2}-T_{d}(S_{t_1})}{T_{u}(S_{t_1})-T_{d}(S_{t_1})} \one_{\{T_u(S_{t_1})>T_{d}(S_{t_1})\}}
\]
Since $p(S_{t_1},S_{t_2}) \in [0,1]$ we have $\Q$-almost surely
\[
1=\Q\bigg(T_d(S_{t_1}) \leq S_{t_2} \leq T_u(S_{t_1})~\bigg|~T_u(S_{t_1})>T_{d}(S_{t_1})\bigg)\text{ and } 1=\Q\bigg(S_{t_2}=S_{t_1}~\bigg|~T_u(S_{t_1})=T_{d}(S_{t_1})\bigg).
\]
Hence $1=\Q\left(T_d(S_{t_1}) \leq S_{t_2} \leq T_u(S_{t_1})\right)$ and projecting $\Q$ on the first two marginals shows the existence of some $\widetilde{\Q} \in \mathcal{M}(\mu_1,\mu_2)$ with $1=\widetilde{\Q}\left(T_d(S_{t_1}) \leq S_{t_2} \leq T_u(S_{t_1})\right)$.

To show the reverse implication, consider some martingale measure $\widetilde{\Q} \in \mathcal{M}(\mu_1,\mu_2)$ with \begin{equation}\label{equ_assumption_mass_1}
1=\widetilde{\Q}\left(T_d(S_{t_1}) \leq S_{t_2} \leq T_u(S_{t_1})\right).
\end{equation}
 By the disintegration theorem (\cite[Theorem 1]{chang1997conditioning} or \cite[Theorem 5.3.1]{ambrosio2005gradient}), there exists some probability kernel $\Q_1$ such that 
\[
\widetilde{\Q}(\D x_1, \D x_2) = \mu_1(\D x_1) \Q_1(x_1; \D x_2)
\]
Define further the probability kernel $\Q_{1,2}$ by 
\[
\Q_{1,2}(x_1,x_2;\D x_3):=\left(\widetilde{q}(x_1,x_2)\delta_{T_u(x_1)}(x_3)+(1-\widetilde{q}(x_1,x_2))\delta_{T_d(x_1)}(x_3)\right)\D x_3
\]
where $\widetilde{q}(x_1,x_2) = \frac{x_2- T_d(x_1)}{T_u(x_1)-T_d(x_1)} \one_{\{T_u(x_1)>T_d(x_1)\}}$. Then
\[
\Q(\D x_1,\D x_2, \D x_3) := \mu_1( \D  x_1) \Q_1( x_1; \D x_2) \Q_{1,2}(x_1, x_2; \D x_3)
\]
defines a probability measure $\Q \in \mathcal{M}(\mu_1,\mu_2,\mu_3)$ with $\pi(\Q) =\Q^* \in \mathcal{Q}_c^*(\mu_1,\mu_3)$. Indeed, we have 
\begin{equation}\label{eq_projection_qstar_q}
\widetilde{\Q}= \Q \circ (S_{t_1},S_{t_2})^{-1}
\end{equation}
and hence $\Q$ possesses the fist two marginals $\mu_1$, $\mu_2$ and it holds $\Q$-almost surely
\begin{equation}\label{eq_martingale_last_proof_1}
\E_\Q[S_{t_2}~|~S_{t_1}]=\E_{\widetilde{\Q}}[S_{t_2}~|~S_{t_1}]=S_{t_1}
\end{equation}
Moreover, if $T_u(S_{t_1})=T_d(S_{t_1})$, then by definition $\E_\Q[S_{t_3}~|~S_{t_2},S_{t_1}]=T_d(S_{t_1})=S_{t_2}$ $\Q$-almost surely, where the last equality follows from \eqref{equ_assumption_mass_1}.
If $T_{u}(S_{t_1})>T_{u}(S_{t_1})$, then $\Q$-almost surely
\begin{equation}\label{eq_martingale_last_proof_2}
\begin{aligned}
\E_\Q[S_{t_3}~|~S_{t_2},S_{t_1}]&=\tilde{q}(S_{t_1},S_{t_2})T_u(S_{t_1})+(1-\tilde{q}(S_{t_1},S_{t_2}))T_d(S_{t_1})\\
&=\frac{T_u(S_{t_1})S_{t_2}-T_u(S_{t_1})T_d(S_{t_1})+T_u(S_{t_1})T_d(S_{t_1})-S_{t_2}T_d(S_{t_1})}{T_u(S_{t_1})-T_d(S_{t_1})}=S_{t_2}.
\end{aligned}
\end{equation}
Next, note that by \eqref{equ_assumption_mass_1} and \eqref{eq_martingale_last_proof_1} we have 
\begin{align*}
\E_{\Q}[\widetilde{q}(S_{t_1},S_{t_2})~|~S_{t_1}]&=\E_{\Q}\left[\frac{S_{t_2}- T_d(S_{t_1})}{T_u(S_{t_1})-T_d(S_{t_1})} \one_{\{T_u(S_{t_1})>T_d(S_{t_1})\}}~\middle|~S_{t_1}\right] \\
&=\frac{S_{t_1}- T_d(S_{t_1})}{T_u(S_{t_1})-T_d(S_{t_1})} \one_{\{T_u(S_{t_1})>T_d(S_{t_1})\}} = q(S_{t_1})
\end{align*}
$\Q$-almost surely for $q$ being defined in \eqref{eq_definition_prob_q}. This implies  for all Borel-measurable sets $A,B \subseteq \R$ that
\begin{align*}
\Q(S_{t_3}\in B)&= \E_{\Q}\left[\E_{\Q}[\widetilde{q}(S_{t_1},S_{t_2})\one_{\{T_u(S_{t_1}) \in B\}}+(1-\widetilde{q}(S_{t_1},S_{t_2}))\one_{\{T_d(S_{t_1}) \in B\}}~|~S_{t_1}]\right]\\
&= \E_{\mu_1}\left[q(S_{t_1})\one_{\{T_u(S_{t_1}) \in B\}}+(1-q(S_{t_1})\one_{\{T_d(S_{t_1}) \in B\}}\right] \\
&= \E_{\Q^*}[\one_{\{S_{t_3} \in B\}}]= \Q^* \left(S_{t_3} \in B\right) = \mu_3\left(B\right)
\end{align*}
and 
\begin{align*}
&\Q(S_{t_1} \in A,S_{t_2} \in \R, S_{t_3}\in B)\\
&=  \E_{\Q}\left[\E_{\Q}[\widetilde{q}(S_{t_1},S_{t_2})\one_{\{S_{t_1} \in A, T_u(S_{t_1}) \in B\}}+(1-\widetilde{q}(S_{t_1},S_{t_2}))\one_{\{S_{t_1} \in A,T_d(S_{t_1}) \in B\}}~|~S_{t_1}]\right]\\
&= \E_{\mu_1}\left[q(S_{t_1})\one_{\{S_{t_1} \in A,T_u(S_{t_1}) \in B\}}+(1-q(S_{t_1})\one_{\{S_{t_1} \in A,T_d(S_{t_1}) \in B\}}\right] \\
&= \E_{\Q^*}[\one_{\{S_{t_1} \in A,S_{t_3} \in B\}}] = \Q^*\left(S_{t_1} \in A,S_{t_3} \in B\right)
\end{align*}
 Hence, we have shown $\Q \in \mathcal{M}(\mu_1,\mu_2,\mu_3)$ with $\pi(\Q) =\Q^* \in \mathcal{Q}_c^*(\mu_1,\mu_3)$, and according to Proposition~\ref{prop_intermediate_projection}~(vi) the price bounds are not improved.
\end{proof}
}

\section*{Acknowledgements}
I thank David Hobson, Eva Lütkebohmert and Stephan Eckstein for valuable comments and fruitful discussions on the topic. I am also very grateful for helpful advises from two anonymous referees that helped to improve the manuscript significantly.

\section*{Data Availability}
The datasets used in Example~\ref{exa_real} are provided from \emph{Yahoo Finance} but restrictions apply to the availability of these data, since only current option prices can be downloaded from \emph{Yahoo Finance}, and so the used data is not publicly available. The data is however available from the author
upon reasonable request and permission of \emph{Yahoo Inc}.

%\section{Summary and Outlook}
%
%\begin{itemize}
%\item $S_{t_2}=f_i(S_{t_1})$ w. prob $q(S_{t_1})$.$\sum_i q_i(S_{t_1})f_1(S_{t_1})=S_{t_1}$.
%\item $\mathcal{U}([-t,t])$.
%\item Conditions for Improvement.
%\item Log-Normal Marginals.
%\item More General and Applicable Conditions.
%\item Generalize Convex Interpolation principle to more than $T_d$ and $T_u$.
%\item Application/Investigation of real-world examples.
%\item Mehr Anwendundungen der verallgemeinerten Voraussetzung.
%
%\end{itemize}

%\bibliography{literature}
%\bibliographystyle{plain}

%\end{document}

\small 
 \newcommand{\noop}[1]{forthcoming}

\bibliographystyle{plain}

\end{document}